\definecolor[named]{ACMBlue}{cmyk}{1,0.1,0,0.1}
\definecolor[named]{ACMYellow}{cmyk}{0,0.16,1,0}
\definecolor[named]{ACMOrange}{cmyk}{0,0.42,1,0.01}
\definecolor[named]{ACMRed}{cmyk}{0,0.90,0.86,0}
\definecolor[named]{ACMLightBlue}{cmyk}{0.49,0.01,0,0}
\definecolor[named]{ACMGreen}{cmyk}{0.20,0,1,0.19}
\definecolor[named]{ACMPurple}{cmyk}{0.55,1,0,0.15}
\definecolor[named]{ACMDarkBlue}{cmyk}{1,0.58,0,0.21}
\algnewcommand{\IfThenElse}[3]{
  \algorithmicif\ #1\ \algorithmicthen\ #2\ \algorithmicelse\ #3}
\newcommand{\crefrangeconjunction}{--}
  \DeclareRobustCommand{\labelcrefrange}[2]{\@crefrangenostar{labelcref}{#1}{#2}}
\crefname{line}{line}{lines}
\Crefname{line}{Line}{Lines}
\newcommand{\crefnameof}[1]{\csname cref@#1@name\endcsname}
\DeclarePairedDelimiter{\set}{\lbrace}{\rbrace}
\DeclarePairedDelimiter{\ceil}{\lceil}{\rceil}
\DeclarePairedDelimiter{\floor}{\lfloor}{\rfloor}
\DeclarePairedDelimiter{\abs}{\lvert}{\rvert}
\DeclarePairedDelimiter{\setr}{\lparen}{\rparen}
\DeclarePairedDelimiter{\setb}{\lbrack}{\rbrack}
\newtcolorbox{quotebox}[1][]{
  colback=white,          
  colframe=gray!70,       
  left=5pt,               
  right=5pt,              
  top=0pt,
  bottom=0pt,
  sharp corners,          
  boxrule=0pt,            
  borderline west={2pt}{0pt}{gray!70}, 
  enhanced,
  grow to left by=-2\parindent,
  after app={\@endparenv},
  #1
}
\newtcolorbox{thmbox}[1][]{%
  colback=white,          
  colframe=black,         
  after app={\@endparenv},
  #1
}
\newcommand\Tstrut{\rule{0pt}{2.25ex}}
\newcommand{\clrA}{yellow}
\newcommand{\clrB}{blue!30!white}
\newcommand{\clrC}{green!30!white}
\newcommand{\clrD}{red!30!white}
\newcommand{\cA}[1][\mathrm{A}]{\cellcolor{\clrA}#1}
\newcommand{\cB}[1][\mathrm{B}]{\cellcolor{\clrB}#1}
\newcommand{\cC}[1][\mathrm{C}]{\cellcolor{\clrC}#1}
\newcommand{\cD}[1][\mathrm{D}]{\cellcolor{\clrD}#1}
\newcommand{\randS}{Z}
\newcommand{\randM}{M}
\newcommand{\randMin}{\randM_{\min}}
\newcommand{\setp}{\twoheadleftarrow}
\newcommand{\Nat}{\mathbb{N}}
\newcommand{\Rat}{\mathbb{Q}}
\newcommand{\expect}[1]{\mathbb{E}\left[{#1}\right]}
\newcommand{\entropy}[1]{H\setr*{#1}}
\newcommand{\entropyRV}[1]{H\setb*{#1}}
\newcommand{\ky}[1][]{\textrm{KY}\ifthenelse{\equal{#1}{}}{}{\left(#1\right)}}
\newcommand{\twodots}{\mathinner{\ldotp\ldotp}}
\newcommand{\dotgiven}{\,\cdot\,|\,}
\newcommand{\defeq}{\coloneqq}
\newcommand{\eqdef}{\eqqcolon}
\newcommand{\Alphabet}{\mathcal{X}}
\newcommand{\DAlphabet}{\Delta\Alphabet}
\newcommand{\AState}{\mathcal{S}}
\newcommand{\astate}{s}
\newcommand{\Astate}{S}
\newcommand{\Astates}{\bm{\Astate}}
\newcommand{\dist}{p}               
\newcommand{\Dist}{P}               
\newcommand{\dists}{\bm{\dist}}     
\newcommand{\Dists}{\bm{\Dist}}     
\newcommand{\flip}{c}               
\newcommand{\Flip}{C}               
\newcommand{\flips}{\bm{\flip}}     
\newcommand{\Flips}{\bm{\Flip}}     
\newcommand{\nflip}{t}                
\newcommand{\nFlip}{T}                
\newcommand{\iflip}{v}                
\newcommand{\iFlip}{V}                
\newcommand{\tosses}[2]{\nflip_{#1}(#2)}
\newcommand{\Alg}{f}
\newcommand{\Prob}{\mathbb{P}}
\newcommand{\bX}{\bm{X}}
\newcommand\approxsim{\mathpalette\@approxsim\relax}
\newcommand\@approxsim[2]{%
  \mathrel{%
    \ooalign{%
      $\m@th#1\sim$\cr
      \hidewidth$\m@th#1:$\hidewidth\cr
    }%
  }%
}
\newcommand{\Distribution}[1]{\mathrm{#1}}
\newcommand{\Uniform}{\Distribution{Uniform}}
\newcommand{\Discrete}{\Distribution{Discrete}}
\newcommand{\Bernoulli}{\Distribution{Bernoulli}}
\let\origsection\section
\let\origsubsection\subsection
\let\origsubsubsection\subsubsection
\let\origparagraph\paragraph
\let\origsubparagraph\subparagraph
\RenewDocumentCommand{\section}{ s o m }{%
  \IfBooleanTF{#1}
    {\origsection*{#3.}}
    {\IfNoValueTF{#2}
       {\origsection[#3]{#3.}}
       {\origsection[#2]{#3.}}
    }%
}
\RenewDocumentCommand{\subsection}{ s o m }{%
  \IfBooleanTF{#1}
    {\origsubsection*{#3.}}
    {\IfNoValueTF{#2}
       {\origsubsection[#3]{#3.}}
       {\origsubsection[#2]{#3.}}
    }%
}
\RenewDocumentCommand{\subsubsection}{ s o m }{%
  \IfBooleanTF{#1}
    {\origsubsubsection*{#3.}}
    {\IfNoValueTF{#2}
       {\origsubsubsection[#3]{#3.}}
       {\origsubsubsection[#2]{#3.}}
    }%
}
\RenewDocumentCommand{\paragraph}{ s o m }{%
  \IfBooleanTF{#1}
    {\origparagraph*{#3.}}
    {\IfNoValueTF{#2}
       {\origparagraph[#3]{#3.}}
       {\origparagraph[#2]{#3.}}
    }%
}
\RenewDocumentCommand{\subparagraph}{ s o m }{%
  \IfBooleanTF{#1}
    {\origsubparagraph*{#3.}}
    {\IfNoValueTF{#2}
       {\origsubparagraph[#3]{#3.}}
       {\origsubparagraph[#2]{#3.}}
    }%
}
\renewenvironment{abstract}{\begin{@abssec}{\abstractname}}{\end{@abssec}}
\renewcommand{\tableofcontents}{\@starttoc{toc}}
\newcommand{\fullcitation}{This manuscript appears as: Thomas~L. Draper and Feras~A. Saad. 2026.
Efficient Online Random Sampling via Randomness Recycling.
In \textit{Proceedings of the 2026 Annual ACM-SIAM Symposium on Discrete Algorithms}.
Society for Industrial and Applied Mathematics,
Philadelphia, PA, 2473--2511.
\href{https://doi.org/10.1137/1.9781611978971.89}{doi:\nolinkurl{10.1137/1.9781611978971.89}}}
\begin{document}

\newcommand\relatedversion{}
\renewcommand\relatedversion{\thanks{\fullcitation}}

\title{\Large Efficient Online Random Sampling via Randomness Recycling\relatedversion}
\author{%
  Thomas L.~Draper\footnotemark[2]
  \and
  Feras A.~Saad%
  \thanks{Computer Science Department, Carnegie Mellon University (\email{tdraper@cmu.edu}, \email{fsaad@cmu.edu}).}
  }


\maketitle
\date

\begin{abstract}
This article studies the fundamental problem
of using i.i.d.~coin tosses from an entropy source
to efficiently generate random variables $X_i \sim \Dist_i$ $(i \ge 1)$,
where $(\Dist_1, \Dist_2, \dots)$ is a random sequence
of rational discrete probability
distributions subject to an \textit{arbitrary} stochastic process.
Our method achieves an amortized expected entropy cost within
$\varepsilon > 0$ bits of the information-theoretically optimal
\citeauthor{shannon1948} lower bound
using $O(\log(1/\varepsilon))$ space.
This result holds both pointwise in terms of the
Shannon information content conditioned on $X_i$ and $\Dist_i$,
and in expectation to obtain a rate of
$\mathbb{E}[H(P_1) + \dots + H(P_n)]/n + \varepsilon$
bits per sample as $n \to \infty$ (where $H$ is the Shannon entropy).
The combination of space, time, and entropy properties of our method
improves upon the \citet{knuth1976} entropy-optimal algorithm and
\citet{han1997} interval algorithm for online sampling, which require
unbounded space.
It also uses exponentially less space than the more specialized methods of
\citet{Kozen2022} and \citet{shao2025} that generate i.i.d.~samples from a
fixed distribution.
Our online sampling algorithm rests on a powerful algorithmic technique called
\textit{randomness recycling}, which reuses a fraction of the random
information consumed by a probabilistic algorithm to reduce its amortized
entropy cost.

On the practical side, we develop randomness recycling techniques to
accelerate a variety of prominent sampling algorithms,
which include uniform sampling, inverse transform sampling, lookup-table
sampling, alias sampling, and discrete distribution generating (DDG) tree
sampling.
We show that randomness recycling enables state-of-the-art runtime
performance on the Fisher-Yates shuffle when using a cryptographically
secure pseudorandom number generator, and that it reduces the entropy cost of
discrete Gaussian sampling.
Accompanying the manuscript is a performant software library in the C
programming language.
\end{abstract}

\clearpage
\bgroup
\hypersetup{hidelinks}
\noindent\textbf{\large Contents}
\let\Contentsline\contentsline
\renewcommand\contentsline[3]{\Contentsline{#1}{#2}{}}

\tableofcontents
\egroup
\clearpage


\section{Introduction}
\label{sec:intro}

Let $\Alphabet = \set{\alpha_1, \alpha_2, \dots, \alpha_l}$ be a finite alphabet of $l \ge 1$
distinct symbols and
\begin{align}
\DAlphabet \defeq \set*{\dist: \Alphabet \to [0,1] \cap \Rat \,\middle|\, \sum_{i = 1}^{l} \dist(\alpha_i) = 1} \subset [0,1]^{\Alphabet}
\end{align}
be the set of all rational probability distributions on $\Alphabet$.
Suppose there is a $\DAlphabet$-valued random sequence
$\Dists \defeq (\Dist_i)_{i\ge 1}$ of probability distributions,
subject to any stochastic process, that we aim to sample from.
This article studies the fundamental problem of using
i.i.d.~unbiased coin tosses
$\Flips \defeq (\Flip_i)_{i \ge 1}$
from an entropy source
to generate an output sequence $\bX \defeq (X_i)_{i\ge1}$ of
random variables $X_i \sim \Dist_i$, in the following setting.
\begin{thmbox}[%
  enhanced,
  attach boxed title to top center={yshift=-3mm,yshifttext=-1mm},
  colbacktitle=white,
  boxed title style={size=small,colframe=white},
  coltitle=black,
  title={\textbf{Problem Statement}: Online Sampling a Dynamic Sequence of Probability Distributions}
  ]

Goal: Design an \textit{online random sampling} algorithm which, given
i.i.d.~fair coins and access to a dynamic sequence of probability
distributions, executes the following loop:
\begin{quotebox}
\textbf{for} $i = 1, 2, 3, \dots$
\begin{enumerate}[topsep=2pt, noitemsep, label=(S\arabic*),itemindent=10pt]
\item \label{step:observe} Receive the next target distribution $\Dist_i$
\item \label{step:generate} Generate the next random variable $X_i \sim \Dist_i$
\end{enumerate}
\end{quotebox}
\end{thmbox}
The efficiency of online random sampling algorithm is measured in terms of
its space, time, and entropy complexities, which are limited
computational resources to be conserved.
The random sequence $\Dists$ of target distributions
can itself follow any law over $(\DAlphabet)^\Nat$,
such as deterministic, i.i.d., exchangeable, stationary, Markov, etc.
For example, each $\Dist_i$ may depend on the previous distributions
$\Dist_{<i} \defeq (\Dist_j)_{j<i}$, on the previous generated variables $X_{<i}$,
or on other sources of exogenous randomness or nondeterminism.
The described setup is very general and covers most
computational settings that involve sequentially generating discrete random
variables, such as randomized algorithms, Markov chain Monte Carlo, or
probabilistic programs for stochastic simulation.

\subsection{Computational Model}
\label{sec:intro-model}


\begin{listing}[!t]
\caption{Formal specification of an online random sampling algorithm.}
\label{lst:ors}

\begin{thmbox}
An \textit{online random sampling algorithm}
over a finite alphabet $\Alphabet$ and a countable
auxiliary state space $\AState$ with initial state $\astate_0 \in \AState$
is a partial computable function
\begin{align}
\Alg &: \set{0,1}^{\Nat} \times \DAlphabet \times \AState \rightharpoonup \Alphabet \times \AState.
\label{eq:ors-signature}
\end{align}
For each $(\dist, \astate) \in \DAlphabet \times \AState$, the set
$\set{\flips \mid (\flips, \dist, \astate) \notin \mathrm{dom}(\Alg)}$ of ``non-halting''
streams $\flips$ has Lebesgue measure zero
(i.e., $\Alg$ halts almost everywhere in its first argument).
Online sampling operates as follows.

\begin{itemize}[wide=0pt,leftmargin=*]

\item \textbf{Per-Round Map.}
Let $\dists = (\dist_i)_{i \ge 1}$ be a sequence of target distributions
and $\flips = (\flip_i)_{i \ge 1}$ a sequence of coin tosses.
For each round $i \ge 1$, the $i$th output symbol $x_i$
and updated auxiliary state $\astate_i$ at the end of the round $i$
are obtained by iterating
\begin{align}
(x_i, \astate_i) &\defeq \Alg(\flips_{>\nflip_{i-1}}, p_i, \astate_{i-1})
&& (i=1,2,\dots).
\label{eq:ors-map}
\end{align}
Here, $\nflip_{i-1} \defeq \iflip_1 + \dots + \iflip_{i-1}$ is the total
number coin tosses consumed in rounds $1$ through $i-1$
and $\flips_{>\nflip_{i-1}} \defeq (\flip_j)_{j > \nflip_{i-1}}$
is the suffix of $\flips$, which contains
only fresh tosses.
Formally, the number $\iflip \ge 0$ of coin tosses consumed on a given input
$(\flips, \dist, \astate) \in \mathrm{dom}(\Alg)$
is the smallest number such that $\Alg (\flips, \dist, \astate) = \Alg(\flips', \dist, \astate)$
for every coin toss sequence $\flips'$ whose first $\iflip$ values agree with those of $\flips$.

\item \textbf{Probabilistic Run.}
Let $\Flips \sim \mathrm{Uniform}(\set{0,1}^\Nat)$ be an i.i.d.~sequence of coin tosses
which lives on a standard probability space $(\Omega, \mathcal{E}, \Prob)$.
Let $\Astate_0 = \astate_0$ be a degenerate random variable.
The random output symbol sequence $\bX \defeq (X_i)_{i \ge 1}$
and auxiliary state sequence $\Astates \defeq (\Astate_i)_{i \ge 1}$
are obtained by iterating
\begin{align}
(X_i, \Astate_i) &\defeq \Alg(\Flips_{>\nFlip_{i-1}}, \dist_i, \Astate_{i-1})
&& (i = 1, 2, \dots).
\label{eq:ors-run}
\end{align}
where $\nFlip_{i-1} = \iFlip_0 + \dots + \iFlip_{i-1}$ is now the random
total number of coin tosses consumed so far.

\item \textbf{Correctness Requirement.}
For any fixed sequence of target distributions $\dists \defeq (\dist_i)_{i \ge 1}$,
the random output sequence $\bX$ from a probabilistic run must satisfy
\begin{align}
\Prob(X_1=x_1, \dots, X_n=x_n) = \prod_{i=1}^{n} \dist_i(x_i)
&&(n \ge 1;\ x_1,\dots,x_n \in \Alphabet).
\label{eq:ors-correct}
\end{align}

\end{itemize}
\end{thmbox}
\end{listing}

We formalize the space, time, and entropy complexity of
online sampling algorithms that use i.i.d.~coin tosses
and propagate some auxiliary state that across each round of
\labelcref{step:generate,step:observe}.
\Cref{lst:ors} defines an online random sampling algorithm in the
special case that $\Dists = \dists$ is a \textit{fixed} $\DAlphabet$-valued
sequence of probability distributions.
The correctness requirement~\cref{eq:ors-correct} implies that
when the online random sampling algorithm is instead given a \textit{random}
$\DAlphabet$-valued sequence
$\Dists$ over the same probability space $(\Omega, \mathcal{E}, \Prob)$,
its output $X_i$ at each round $i \ge 1$ is correctly
drawn from the dynamically provided target distribution
\begin{align}
\Prob(X_i = x_i \mid X_{<i}=x_{<i}, \Dist_{\le i} = \dist_{\le i}) = \dist_i(x_i),
\label{eq:ors-correct-rv}
\end{align}
subject to a natural independence constraint between
the coin process $\Flips$ and distribution process $\Dists$:
\begin{alignat}{2}
\Dist_i \perp \Flips~\big|~X_{<i}. \label{eq:ors-fresh}
\end{alignat}
\Cref{eq:ors-fresh} states that $\Dist_i$ cannot depend on fresh coins to
be consumed by the sampler in future rounds, and that it cannot depend on
the internal history of the sampler other than through its generated
outputs $X_{<i}$.

\subsubsection{Entropy Cost}
In \cref{lst:ors}, the counters $\nflip_n \defeq \iflip_1 + \dots + \iflip_n$ of
consumed coin tosses define the \textit{entropy cost} of an online
sampling algorithm, i.e., the total
number of tosses consumed to generate outputs $(x_1, \dots, x_n)$.
For a probabilistic run of online random sampling
with random coin and distribution
sequences $(\Flips, \Dists)$,
we notate the random entropy cost explicitly
as $\nFlip_n \equiv \tosses{n}{\Flips, \Dists}$.
For a probabilistic run where the distribution sequence $\dists$ is fixed,
we write the entropy cost $\tosses{n}{\Flips, \dists}$.
For a fixed distribution sequence $\dists$,
the \textit{expected entropy cost}
$\expect{\tosses{n}{\Flips, \dists}}$
is the entropy cost averaged over random coin tosses.

\subsubsection{Shannon's Fundamental Rate}

Suppose momentarily that $\dists$ is a fixed
distribution sequence and that we aim to generate an independent output
sequence $\bX = (X_i \sim \dist_i)_{i \ge 1}$.
\citet[Theorem 2.2]{knuth1976} show that the expected number of coin tosses
$t_n(\Flips, \dists)$ needed to generate $(X_1,\dots,X_n)$,
conditioned on any positive probability event $\set{X_1 = x_1, \dots, X_n = x_n}$
is at least the Shannon \textit{information content} (or surprisal)
of the observed symbols, which implies the following fundamental lower bounds
for every integer $n \ge 1$:
\begin{align}
\expect{\tosses{n}{\Flips, \dists} \mid X_1 = x_1, \dots, X_n = x_n}
&\geq \sum_{i=1}^{n}\log\left(\frac{1}{\dist_i(x_i)} \right),
&&
\label{eq:shannon-entropy-surprisal}
\\
\expect{\tosses{n}{\Flips, \dists}}
&\geq \sum_{i=1}^{n}\entropy{\dist_i}
&&\mbox{where } \entropy{\dist} \defeq \sum_{\alpha \in \Alphabet} \log \left(\frac{1}{\dist(\alpha)}\right) \dist(\alpha).
\label{eq:shannon-entropy}
\end{align}
\Cref{eq:shannon-entropy} states that the expected number of coin tosses
is lower bounded by the sum of \textit{Shannon entropies} $\entropy{\dist}$,
which comports with the source coding theorem of \citet{shannon1948}.

\paragraph{Complexity}
The rates~\cref{eq:shannon-entropy-surprisal,eq:shannon-entropy}
capture entropy-cost lower bounds of online random sampling using
i.i.d.~coin tosses in terms of information content and Shannon entropy.
The entropy complexity of an online sampling algorithm is defined as its
entropy cost $\tosses{n}{\cdot}$.
Its space complexity is captured by the auxiliary
states $\Astates = (\Astate_i)_{i \ge 1}$ propagated
across rounds, and any temporary space used by $\Alg$
within each round.
Its (expected) time complexity is that of the function $\Alg$.
We seek space- and time-efficient online
sampling algorithms whose amortized entropy cost
$\expect{\tosses{n}{\Flips,\Dists}}/n$ as $n$ grows large is as close as possible to the
information-theoretic lower bound.
We assume each target distribution $\dist_i$ is given
either as a (possibly sparse) list of integer weights $(w_1, \dots, w_l)$
where $\dist_i(\alpha_k) = w_k / (w_1 + \dots + w_l)$,
or as an efficiently computable cumulative distribution
function $k \mapsto \sum_{j \le k}\dist_i(\alpha_j)$.

\begin{remark}
We will reserve the notation $\entropy{\dist}$ to denote the entropy of a probability
distribution $\dist \in \DAlphabet{}$, and instead use square braces for
the entropy of a random variable,
i.e., $\entropyRV{X} \defeq -\sum_{\alpha \in \Alphabet} \log (\Prob(X=\alpha)) \Prob(X=\alpha)$.
\end{remark}

\begin{remark}
For a random distribution $\Dist$,
the entropy $\entropy{\Dist} \defeq \omega \mapsto \entropy{\Dist(\omega)}$
is itself a random variable,
which is of course distinct from the entropy
$\entropyRV{\Dist} = -\sum_{p \in \DAlphabet}\log(\Prob(\Dist=p))\Prob(\Dist=p)$
of the random element $\Dist$ (an irrelevant quantity in this article).
\end{remark}

\begin{remark}
When applying the bounds from \cref{eq:shannon-entropy-surprisal,eq:shannon-entropy}
to a random distribution sequence $\Dists$,
it is essential to consider
the surprisals and entropies of the random distributions
$(\Dist_1, \dots, \Dist_n)$,
and \textit{not} the joint entropy $\entropyRV{X_1, \dots, X_n}$ of the output sequence.
This distinction is subtle, but obvious from a simple example.
Suppose $\Dist_1 = \delta_{\alpha_1}$ or $\Dist_1 = \delta_{\alpha_2}$ with
equal probability, i.e., $\Dist_1$ is always realized as a degenerate distribution.
Any sampling algorithm can generate $X_1 \sim \Dist_1$ using zero coin
tosses for any realization of $\Dist_1$, whereas $\entropyRV{X_1} = 1$ bit.
\end{remark}

\begin{remark}
\label{remark:output-sequence-process}
A common case for online sampling is generating an output sequence
$\bX \defeq (X_i)_{i \ge 1}$
subject to an arbitrary stochastic process whose distribution
is not influenced by any external randomness.
In particular, each random distribution $\Dist_i$
is itself fully determined by $X_{<i}$, allowing us to write
$\Dists = \set{p_i(\dotgiven{} X_1, \dots, X_{i-1})}_{i \ge 1}$,
where
$p_i(\dotgiven X_1, \dots, X_{i-1}) \defeq \Prob(X_i = \cdot \mid X_1, \dots, X_{i-1})$
is the conditional distribution of $X_i$ given all previously generated outputs for $i > 1$.
In this special case, the bound~\cref{eq:shannon-entropy} on the expected entropy cost
of generating $(X_1,\dots,X_n)$ becomes the more familiar entropy of the output sequence:
\begin{align}
\expect{\tosses{n}{\Flips, \Dists}}
  \ge \expect{\sum_{i=1}^{n} \entropy{p_i(\dotgiven X_1, \dots, X_{i-1})}}
  = \entropyRV{X_1, \dots, X_n}.
  \label{eq:entropy-cost-output-entropy}
\end{align}
\end{remark}

\subsection{Main Result}
\label{sec:intro-result}

The main contribution of this article is an online random
sampling algorithm whose amortized entropy cost
can be made arbitrarily close to the Shannon bound in the limit, using
bounded auxiliary space across rounds.
Recalling the lower bound~\cref{eq:shannon-entropy-surprisal}
from \citet[Theorem 2.2]{knuth1976},
the precise result on entropy and space complexity is as follows,
where
$\DAlphabet_d \defeq \set{ \dist \in \DAlphabet{} \mid d \ge \min_{k}\set{k \in \Nat \mid \forall \alpha \in \Alphabet.~k\dist(\alpha) \in \Nat} }$
is the set of rational probability distributions over $\Alphabet$ whose probabilities
have common denominator at most $d$.

\begin{thmbox}
\begin{restatable}{theorem}{EntropyCost}
\label{theorem:entropy-cost}
For any $\varepsilon > 0$ and $d \ge 1$,
there exists an online random sampling algorithm
using a sequence $\Flips$ of i.i.d.~coin tosses such
that, for every distribution sequence $\dists \in (\DAlphabet_d)^{\Nat}$,
the entropy cost of generating
an output sequence $\bX = (X_i \sim \dist_i)_{i \ge 1}$ satisfies
\begin{align}
\expect{\tosses{n}{\Flips, \dists} \mid X_1=x_1, \dots, X_n=x_n}
  < \sum_{i=1}^{n}\log \left( \frac{1}{\dist_i(x_i)} \right) + \varepsilon{n} + W_{d,\varepsilon}
&& (n \ge 1),
\label{eq:main-thm-cost}
\end{align}
where $\set{X_1=x_1, \dots, X_n=x_n}$ is any positive probability event
and $W_{d,\varepsilon} \sim \log(d/\varepsilon)$ as $d/\varepsilon \to \infty$.
The algorithm uses auxiliary space of at most $2W_{d,\varepsilon}$ bits
across rounds and $O(W_{d,\varepsilon})$ temporary space \mbox{per round.}
\end{restatable}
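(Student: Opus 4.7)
The idea is to build a bounded-state variant of the Han interval algorithm powered by \emph{randomness recycling}: after each sample is emitted, the portion of the consumed coin bits not charged to the symbol's Shannon surprisal is retained in a compact pool and reused in the next round. Concretely, maintain a pool $(M,K)$, where $K$ is a positive integer kept below a threshold $2K_{\min}$ (to be chosen) and $M$ is uniform on $\{0,\dots,K-1\}$, initialized as $(0,1)$. To sample from a given $p_i \in \DAlphabet_d$ with integer weights $(w_1,\ldots,w_l)$ summing to $d_i \le d \le K_{\min}$, iterate: (i) while $K < K_{\min}$, read a fresh coin $c$ and update $M \gets 2M+c$, $K \gets 2K$; (ii) write $K = qd_i + r$. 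If $M \ge q d_i$ (\emph{reject}), update $M \gets M - q d_i$, $K \gets r$ and return to (i); otherwise (\emph{success}), let $m \defeq M \bmod d_i$ and $k$ the unique index with $\sum_{j<k}w_j \le m < \sum_{j\le k}w_j$, output $\alpha_k$, and replace $M \gets (M \div d_i)\,w_k + (m - \sum_{j<k}w_j)$, $K \gets q w_k$.

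The first two steps of the analysis are correctness and space. Correctness follows by an induction showing that $M$ is always uniform on $\{0,\ldots,K-1\}$, using the decomposition of a uniform $M \in \{0,\ldots,qd_i-1\}$ into the independent pair $(M \bmod d_i, M \div d_i)$; under this invariant, success emits $\alpha_k$ with probability exactly $w_k/d_i = p_i(\alpha_k)$, giving \cref{eq:ors-correct-rv}. For space, since $K$ is doubled only until it first meets $K_{\min}$ and both success and reject strictly shrink $K$, the invariant $K < 2K_{\min}$ is preserved; setting $W_{d,\varepsilon} \defeq \lceil \log_2(2K_{\min}) \rceil$ lets $M$ and $K$ each fit in $W_{d,\varepsilon}$ bits, yielding $2W_{d,\varepsilon}$ bits of persistent storage and $O(W_{d,\varepsilon})$ temporary space for the arithmetic in step~(ii).

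The third and main step is an entropy accounting via the potential $\Phi \defeq \log K$. Summing refill costs across the (geometrically many) attempts within sample $i$ and using $\log(qd_i+r) = \log(qw_k) + \log(1/p_i(\alpha_k)) + \log(1+r/(qd_i))$ on the successful attempt produces the per-round identity
\begin{align}
c_i = \log(1/p_i(X_i)) + \bigl(\log K_i^{\mathrm{end}} - \log K_i^{\mathrm{start}}\bigr) + \mathrm{waste}_i,
\end{align}
where $c_i$ is the number of coins consumed in sample $i$ and $\mathrm{waste}_i$ equals the success rounding $\log(1+r/(qd_i))$ plus $\sum_{\text{rejections}}\log(K_1/r)$ discarded on each rejection. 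After refill $K \ge K_{\min}$, so the rejection probability is at most $\delta \defeq d/K_{\min}$; each rejection wastes at most $\log(2K_{\min}) = O(W_{d,\varepsilon})$ bits, and the success term is $O(\delta)$. Choosing $K_{\min}$ to satisfy $\delta\log(d/\delta) \le \varepsilon$ forces $\delta = \Theta(\varepsilon/\log(d/\varepsilon))$ and $W_{d,\varepsilon} \sim \log(d/\varepsilon)$ as $d/\varepsilon \to \infty$. Because $M \bmod d_i$, which alone determines $X_i$ on success, is independent of $M \div d_i$ and of the entire rejection history when $M$ is uniform, $\mathrm{waste}_i$ has the same distribution conditional on $X_{1..n} = x_{1..n}$ as unconditional; telescoping over $i$ and bounding the boundary term $\log K_n^{\mathrm{end}} - \log K_0^{\mathrm{start}} \le W_{d,\varepsilon}$ yields \cref{eq:main-thm-cost}.

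\textbf{Main obstacle.} The delicate point is balancing the two ingredients of $\mathrm{waste}_i$: each individual rejection costs $\Theta(W_{d,\varepsilon})$ bits, so the natural choice $K_{\min} = d/\varepsilon$ gives a per-sample expected waste of order $\varepsilon \cdot W_{d,\varepsilon}$, which exceeds $\varepsilon$ by a logarithmic factor. Driving the waste down to $\varepsilon$ forces $K_{\min}$ to absorb an extra $\log(1/\varepsilon)$ factor, producing the precise asymptotic $W_{d,\varepsilon} \sim \log(d/\varepsilon)$. The conditional bound also requires care: one must verify that the rejection-count distribution is unchanged by conditioning on the output sequence, which reduces—via the pool-uniformity invariant—to the independence of $M \bmod d_i$ from $M \div d_i$ on each successful attempt.
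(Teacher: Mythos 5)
Your proposal is correct and follows essentially the same route as the paper: a bounded global uniform pool refilled from the coin source, an accept/reject step modulo the weight sum with leftover uniforms recycled in both branches (so the only irreversible loss is the accept/reject decision), a telescoping potential argument in $\log K$, and a choice of pool size giving $W_{d,\varepsilon}\sim\log(d/\varepsilon)$; the paper merely separates your fused sampler into a uniform recycler and an inversion step, uses the exact binary-entropy expression $H_{\rm b}$ for the per-attempt loss instead of your cruder per-rejection $O(W_{d,\varepsilon})$ bound (both yield the stated asymptotics), and spells out the conditional-independence bookkeeping for conditioning on the outputs that you sketch via the independence of $M \bmod d_i$ from $M \div d_i$. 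One small caution: the waste in round $i$ is not literally distribution-invariant under conditioning on the outputs (the starting pool size depends on the previous sample), but since your bound is uniform in the post-refill pool size the conditional bound still holds, exactly as in the paper's tower-property argument.
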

\end{thmbox}

This strong result establishes an online sampling algorithm whose entropy cost is
$\varepsilon$-close to the information-content lower
bound~\cref{eq:shannon-entropy-surprisal} for any $n$,
\textit{uniformly in the distribution sequence} $\dists$.
In particular, \cref{theorem:entropy-cost} implies
the expected entropy cost of the online sampling algorithm is
close to the Shannon-entropy lower bound~\cref{eq:shannon-entropy}:
\begin{align}
\expect{\tosses{n}{\Flips, \dists}} < \sum_{i=1}^{n} \entropy{\dist_i} + \varepsilon{n} + W_{d,\varepsilon}
\label{eq:main-thm-cost-shannon}
&&(n \ge 1).
\end{align}
\Cref{eq:main-thm-cost,eq:main-thm-cost-shannon} are the analogues of
\cref{eq:shannon-entropy-surprisal,eq:shannon-entropy}
when the target sequence $\dists$ is fixed.
If the distribution sequence $\Dists$ follows an arbitrary
$\DAlphabet{}_d$-valued stochastic process subject to \cref{eq:ors-fresh},
then \cref{theorem:entropy-cost} also implies the following analogues
of \cref{eq:shannon-entropy-surprisal,eq:shannon-entropy}:
\begin{align}
\expect{\tosses{n}{\Flips, \Dists} \mid X_{\le n}=x_{\le n}, \Dist_{\le n}=\dist_{\le n}}
  &< \sum_{i=1}^{n}\log \left( \frac{1}{\dist_i(x_i)} \right) + \varepsilon{n} + W_{d,\varepsilon}
&& (n \ge 1),
\label{eq:main-thm-cost-random}
\\
\expect{\tosses{n}{\Flips, \Dists}}
  &< \sum_{i=1}^{n} \expect{\entropy{\Dist_i}} + \varepsilon{n} + W_{d,\varepsilon}
&& (n \ge 1).
\label{eq:main-thm-cost-random-shannon}
\end{align}

We conjecture that the space-entropy combination in
\cref{theorem:entropy-cost} is the best possible,
up to constant factors and dependence on $d$.

\begin{thmbox}
\begin{restatable}{conjecture}{EntropyCostTight}
\label{conj:entropy-cost-tight}
Any online random sampling algorithm that generates exact samples from a sequence of
arbitrary discrete distributions,
within $\varepsilon > 0$ of the information-theoretically
optimal entropy rate,
using a stream of i.i.d.~coin tosses as the entropy source,
requires $\Omega(\log(1/\varepsilon))$ bits of space
for an auxiliary state that is carried over between rounds.
\end{restatable}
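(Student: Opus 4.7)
The plan is to reduce \cref{conj:entropy-cost-tight} to a single-distribution lower bound on finite-state samplers and then exhibit an irrational-precision obstruction that forces excess $\Omega(1/M)$ coins per sample, where $M \defeq |\AState|$ is the size of the carry-over state space. I would first specialize to the constant sequence $\dists \equiv \dist$ for a fixed rational distribution $\dist \in \DAlphabet_3$ whose probabilities are non-dyadic, the canonical choice being $\dist = \Bernoulli(1/3)$. Under this specialization, any online sampler becomes a probabilistic finite-state transducer that, from any state $\astate$, reads fresh coins and emits the triple $(\iflip, x, \astate')$ with $X \sim \dist$ marginally by~\cref{eq:ors-correct-rv}.

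I would next apply the per-round Knuth--Yao bound behind~\cref{eq:shannon-entropy} state-by-state, giving $\expect{\iflip \mid \Astate_{i-1}=\astate} \geq \entropyRV{X_i, \Astate_i \mid \Astate_{i-1}=\astate}$, and average under the algorithm's stationary measure $\pi$ on $\AState$, which exists on the unique recurrent class (transient states contribute only an additive $O(\log M)$ prefix to $\tosses{n}{\Flips,\dists}$ and so do not affect the amortized rate). This yields
\begin{align*}
\bar V
\;\defeq\; \lim_{n\to\infty} \frac{\expect{\tosses{n}{\Flips, \dists}}}{n}
\;\geq\; \entropyRV{X_i, \Astate_i \mid \Astate_{i-1}}
\;=\; \entropy{\dist} + \entropyRV{\Astate_i \mid X_i, \Astate_{i-1}} - I(X_i;\Astate_{i-1}),
\end{align*}
where the second equality uses the correctness marginal $X_i \sim \dist$. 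The amortized excess $\bar V - \entropy{\dist}$ is thereby lower bounded by the nonnegative residual $\entropyRV{\Astate_i \mid X_i, \Astate_{i-1}} - I(X_i;\Astate_{i-1})$ evaluated in the stationary regime.

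The crux, and main obstacle, is to show this residual is $\Omega(1/M)$. I would connect the stationary chain on $\AState$ to an $M$-cell discretization of the piecewise-expanding map $r \mapsto (r - F_\dist(x))/\dist(x) \bmod 1$ on $[0,1)$ that underlies the interval algorithm of \citet{han1997}; arithmetic-coding-style samplers realize this map exactly, so a finite-state sampler must discretize it. Because $\log_2(1/\dist(x))$ is irrational for $\dist = \Bernoulli(1/3)$, this map admits a unique absolutely continuous invariant density that no piecewise-constant law on $M$ cells approximates in total variation better than $\Theta(1/M)$, and a Pinsker-type inequality converts this $L^1$ gap into a stationary entropy gap of the required order. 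The hardest subgoal is making this transference universal --- showing that \emph{every} $M$-state chain can be cast as such a discretization, plausibly via a converse that extracts from any hypothetical $o(1/M)$-excess sampler a $\log M$-bit encoder for the fractional parts of $n\log_2(1/\dist(x))$, contradicting Weyl equidistribution. Once in hand, combining the residual bound with the rate hypothesis $\bar V \leq \entropy{\dist} + \varepsilon$ gives $\varepsilon \geq c(\dist)/M$, whence $\log_2|\AState| = \Omega(\log(1/\varepsilon))$ bits of auxiliary state, as \cref{conj:entropy-cost-tight} asserts.
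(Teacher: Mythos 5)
There is no paper proof to compare against here: the statement is posed as \cref{conj:entropy-cost-tight} and remains open in the paper, with \cref{sec:remarks} offering only plausibility heuristics (the unbounded state of the interval method, DDG-tree depth considerations, and the observation credited to David G.~Harris that $\varepsilon$-approximate information about the distribution can suffice, so the bound should not depend on $d$). The question is therefore whether your argument actually settles the conjecture, and it does not. Your first two steps are fine as far as they go: specializing to the constant sequence $\Bernoulli(1/3)$ and applying the Knuth--Yao bound state-by-state under a stationary law yields the excess bound $\entropyRV{\Astate_i \mid X_i, \Astate_{i-1}} - I(X_i;\Astate_{i-1})$. But that quantity is not obviously nonnegative, let alone $\Omega(1/M)$: an exactly correct sampler may pre-store output information in its state, making $I(X_i;\Astate_{i-1})$ as large as $\entropy{\dist}$, so the inequality you derive can be vacuous on its own. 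Calling it a ``nonnegative residual'' is an unjustified assertion, and the whole burden of the proof falls on your ``crux'' paragraph.

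That paragraph is a programme, not an argument. (i) The claim that every $M$-state sampler ``must discretize'' the Han--Hoshi interval map is unsupported; finite-state samplers built from rejection, DDG trees, or randomness recycling (the paper's own \cref{alg:uniform}) are not discretizations of arithmetic coding in any sense you establish, so the transference is precisely the missing universality step. (ii) The $\Theta(1/M)$ total-variation barrier for approximating the invariant density of the expanding map, and the Pinsker-type conversion of that $L^1$ gap into a stationary entropy gap of the same order, are asserted without proof (Pinsker runs in the direction from KL to total variation, so the conversion you need is not automatic). (iii) The proposed contradiction via extracting a $\log M$-bit encoder for the fractional parts of $n\log_2(1/\dist(x))$ and appealing to Weyl equidistribution has no construction behind it. You flag the universality step yourself as the hardest subgoal, which is honest, but it means the decisive lemma --- a per-round $\Omega(1/M)$ excess for every exactly correct $M$-state sampler of a non-dyadic Bernoulli --- is exactly the open content of \cref{conj:entropy-cost-tight}; what you have is a reasonable research plan, not a proof.
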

\end{thmbox}

For the conditional distribution sequence considered in
\cref{remark:output-sequence-process}, \cref{eq:main-thm-cost-random-shannon}
gives the following corollary.

\begin{thmbox}
\begin{corollary}
\label{corollarly:entropy-cost}
Let $\Dists \defeq \set{p_i( \dotgiven X_1, \dots, X_{i-1})}_{i \ge 1}$
denote a sequence of conditional distributions
for random variables $\bX \defeq (X_i)_{i \ge 1}$,
whose probabilities have common denominator $d \ge 1$.
For any $\varepsilon > 0$,
there exists an online random sampling algorithm that generates $\bX$
whose entropy cost satisfies
\begin{align}
\expect{\tosses{n}{\Flips, \Dists}} < \entropyRV{X_1,\dots,X_n} + \varepsilon{n} + W_{d,\varepsilon}
&& (n \ge 1).
\label{eq:corollary-cost}
\end{align}
\end{corollary}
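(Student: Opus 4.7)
The plan is to obtain this corollary as a direct specialization of \cref{theorem:entropy-cost}: first apply its Shannon-entropy bound \cref{eq:main-thm-cost-random-shannon} to the random distribution sequence $\Dists \defeq \set{p_i(\dotgiven X_1, \dots, X_{i-1})}_{i \ge 1}$, and then collapse the resulting sum of expected entropies into the joint entropy $\entropyRV{X_1, \dots, X_n}$ using the entropy chain rule, exactly as anticipated by \cref{eq:entropy-cost-output-entropy} in \cref{remark:output-sequence-process}.

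First I would verify the two hypotheses that let us invoke \cref{theorem:entropy-cost}. (i)~The independence constraint \cref{eq:ors-fresh} holds automatically because, in this setting, $\Dist_i$ is a deterministic function of the previously generated outputs $X_{<i}$, so $\Dist_i \perp \Flips \mid X_{<i}$. (ii)~Since each conditional distribution $p_i(\dotgiven X_{<i})$ has probabilities with common denominator $d$, the sequence $\Dists$ is $(\DAlphabet_d)^{\Nat}$-valued almost surely. Applying \cref{eq:main-thm-cost-random-shannon} then yields
\begin{align*}
\expect{\tosses{n}{\Flips, \Dists}}
  < \sum_{i=1}^{n} \expect{\entropy{\Dist_i}} + \varepsilon{n} + W_{d,\varepsilon}.
\end{align*}

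It remains to identify $\sum_{i=1}^{n} \expect{\entropy{\Dist_i}}$ with $\entropyRV{X_1,\dots,X_n}$. Because $p_i(\dotgiven X_{<i})$ is by construction the regular conditional distribution of $X_i$ given $X_{<i}$, the Shannon entropy $\entropy{\Dist_i}(\omega)$ equals the conditional Shannon entropy of $X_i$ given the event $\set{X_{<i} = X_{<i}(\omega)}$; taking expectations over $X_{<i}$ therefore produces the conditional entropy of $X_i$ given $X_{<i}$. Summing via the chain rule $\entropyRV{X_1,\dots,X_n} = \sum_{i=1}^{n} \entropyRV{X_i \mid X_{<i}}$ delivers \cref{eq:corollary-cost}.

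Since the heavy lifting is entirely absorbed by \cref{theorem:entropy-cost}, there is no real obstacle: the only technical point requiring care is the measure-theoretic identification of $\expect{\entropy{\Dist_i}}$ with the conditional Shannon entropy of $X_i$ given $X_{<i}$. This is automatic here because \cref{remark:output-sequence-process} defines $p_i(\dotgiven X_{<i})$ to be the literal conditional probability $\Prob(X_i = \cdot \mid X_1,\dots,X_{i-1})$, so no separate selection of a regular version of the conditional distribution is needed.
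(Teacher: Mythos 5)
Your proposal is correct and follows essentially the same route as the paper: the paper presents the corollary as an immediate consequence of \cref{eq:main-thm-cost-random-shannon} specialized to the conditional-distribution setting of \cref{remark:output-sequence-process}, where the identity $\sum_{i=1}^{n}\expect{\entropy{\Dist_i}} = \entropyRV{X_1,\dots,X_n}$ (the chain rule, already recorded in \cref{eq:entropy-cost-output-entropy}) collapses the sum of expected entropies into the joint entropy. Your explicit verification of the fresh-coin condition \cref{eq:ors-fresh} and of the $\DAlphabet_d$-valuedness of $\Dists$ is a welcome spelling-out of hypotheses the paper leaves implicit, but it is the same argument.
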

\end{thmbox}

\begin{remark}
In the special case of an i.i.d.~sequence with common distribution
$\Dist$, \cref{theorem:entropy-cost} furnishes, for any $\varepsilon > 0$, an
algorithm with amortized entropy cost asymptotically less than
$\entropy{\Dist}+\varepsilon$ tosses per output.
\end{remark}

The algorithm witnessing \cref{theorem:entropy-cost} enjoys several
desirable theoretical properties.
\begin{itemize}[wide, leftmargin=*, noitemsep]
  \item It works in an online setting, where the distributions
    $\Dists$ are given sequentially and follow a random process.

  \item It guarantees an $\varepsilon > 0$ bound on the amortized
    entropy consumption using $O(\log(1/\varepsilon))$ space, which remains
    bounded as the number $n$ of generated samples grows. The methods of
    \citet{knuth1976,han1997} achieve $\varepsilon = 0$ entropy loss;
    however, they require unbounded memory even for distributions in
    $\DAlphabet_d$, i.e., distributions
    whose probabilities have common denominator bounded by $d \ge 1$.

  \item Compared to the finite-memory implementation of the
    \citet{han1997} interval (arithmetic coding) method from \citet{uyematsu2003},
    our method
    \begin{itemize}[noitemsep, nosep]
      \item produces exact samples from the target distributions, not approximate samples;

      \item uses efficient integer arithmetic, not expensive arbitrary-precision arithmetic.
    \end{itemize}

  \item Compared to the more specialized sampling algorithms of
    \citet{Kozen2022} and \citet{shao2025} that sample a \textit{fixed}
    i.i.d.~sequence $\Dists$, our method obtains an
    exponential improvement, using $O(\log(1/\varepsilon))$ rather than
    $O(1/\varepsilon)$ space to achieve a desired $\varepsilon > 0$ entropy
    bound.
\end{itemize}

On the practical side, we show that our method delivers the following improvements.
\begin{itemize}[wide,leftmargin=*, noitemsep]

  \item It enables speedups to generating random permutations using the
  \citet{fisher1953} shuffle, compared to the state-of-the-art method of
  \citet{brackett2025} (\cref{fig:shuffle}).

  \item It enables speedups to several prominent sampling algorithms for
  discrete probability distributions
  (\cref{fig:benchmark-uniform,fig:benchmark-uniform-full,fig:benchmark})
  and reduces the entropy cost of discrete Gaussian sampling
  (\cref{fig:benchmark-gaussian}).
\end{itemize}

\subsection{Existing Approaches}
\label{sec:intro-existing}

To give further context to our main result, we discuss the strengths and
limitations of existing online random sampling algorithms.
For this discussion it is sufficient to focus on the setting from
\cref{remark:output-sequence-process,corollarly:entropy-cost},
where $\Dists = \set{p_i( \dotgiven X_1, \dots, X_{i-1})}_{i \ge 1}$ is a random sequence
of conditional distributions that are determined solely
by the previously generated outputs.
\Cref{table:compare} summarizes the entropy loss and space and time
complexity of existing baseline algorithms as well as our method (refer to
\cref{appx:baselines} for technical details and concrete implementations of
the baselines).

\subsubsection{Entropy-Optimal Sampling for One Distribution}
\label{sec:intro-existing-single}
For generating a single discrete random variable $X \sim \dist$,
\citet{knuth1976} describe an ``entropy-optimal'' algorithm
with expected entropy cost
\begin{align}
\expect{\tosses{1}{\Flips, \dist}}
= \sum_{\alpha \in \Alphabet} \sum_{i=0}^{\infty} (\lfloor2^{i} p(\alpha) \rfloor \bmod 2)i 2^{-i}
< \entropyRV{X} + 2,
\label{eq:ky-loss}
\end{align}
based on the binary expansions of the $p(\alpha)$,
which is the best possible~\citep[Theorem 2.2]{knuth1976}.
This result shows that the \citeauthor{shannon1948} rate of
$\entropyRV{X}$ tosses per output may not be achievable for a single output.
(The expression $\expect{\tosses{1}{\Flips, \dist}}$ is a minor abuse of
notation: it indicates the expected entropy cost of generating $X_1$ in the
first round of online sampling, and so only one distribution $\dist$ is
notated instead of a sequence $\dists$.)

The usual way to implement the \citeauthor{knuth1976}
method is to explicitly construct an entropy-optimal
``discrete distribution generating'' (DDG)
tree for $\dist$ as a preprocessing step
(e.g., \citet[Algorithm~1]{roy2013}; \citet[Algorithms 5 and 6]{saad2020popl}),
which can require exponential space in the number of bits needed
to encode $\dist$ \citep[Theorem 3.6]{saad2020popl}.
Alternatively, using the cumulative distribution function of $\dist$,
\citet[Algorithm 1]{saad2025} give a linear-space
entropy-optimal algorithm that avoids preprocessing a DDG tree.
However, this method incurs a runtime overhead of computing binary
expansions of target probabilities during sampling.


\begin{table}[t]
\centering
\caption{Comparison of online random sampling algorithms for generating
a sequence $(X_1, \dots, X_n)$ of $n \ge 1$ discrete random variables,
given conditional distributions $p_i( \dotgiven{} X_1,\dots, X_{i-1})$
over $K$ outcomes
whose probabilities have common denominator at most $d$.
The online optimal method and interval method have an entropy loss that tends
to zero as $n \to \infty$, but their worst-case space complexity is unbounded,
even for fixed $n$; furthermore, the expected space complexity grows with $n$
and is thus also unbounded as $n \to \infty$.
The online suboptimal method uses bounded space but its amortized entropy loss
is arbitrary close to $2$ bits in the worst case.
Randomness recycling achieves an entropy loss of $\varepsilon > 0$ bits
as $n \to \infty$, using bounded space of $O(\log(d/\varepsilon))$ bits.
}
\label{table:compare}
\begin{adjustbox}{max width=\linewidth}
\begin{tabular*}{\linewidth}{|l@{\extracolsep{\fill}}l|}
\hline\hline
~                         & \textbf{Online Suboptimal Method}~\citep[\cref{alg:ky-ddg}]{knuth1976}  \\\cline{2-2}
\textrm{Amortized Entropy Loss}     & $<2$  \rule{0pt}{2.5ex} \\
\textrm{Time Complexity}  & $O\left(n K \log(d) \log(K)\right)$ \\
\textrm{Space Complexity} & $O\left(K \log(d)\right)$
\\[5pt]
~                         & \textbf{Online Optimal Method}~\citep[\cref{alg:ky-dg}]{knuth1976} \rule{0pt}{2.5ex} \\ \cline{2-2}
\textrm{Amortized Entropy Loss}     & $<2/n$ \rule{0pt}{2.5ex} \\
\textrm{Time Complexity}  & $O\left(n^2 K \log(d) \log(K) + n^3 K \log^2(d) + n^3 \log(n) \log^2(d)\right)$ \\
\textrm{Space Complexity} & $O\left(n K \log(d)\right)$ (Expected)
\\[5px]
~                         & \textbf{Interval Method}~\citep[\cref{alg:hh-interval}]{han1997} \\ \cline{2-2}
\textrm{Amortized Entropy Loss}     & $< 3/n$  \rule{0pt}{2.5ex} \\
\textrm{Time Complexity}  & $O\left(n K \log(d) + n^2 \log^2(d) \log(K)\right)$ \\
\textrm{Space Complexity} & $O\left((n+K) \log(d)\right)$ (Expected)
\\[5pt]
~                         & \textbf{Randomness Recycling} (\cref{table:algorithms}; \cref{alg:loop}) \\ \cline{2-2}
\textrm{Amortized Entropy Loss}     & $< \varepsilon + O(\log(d/\varepsilon))/n$ \rule{0pt}{2.5ex} \\
\textrm{Time Complexity}  & $O\left(n K \log(d) + n \log(d) \log(d/\varepsilon)\right)$ \\
\textrm{Space Complexity} & $O\left(\log(d/\varepsilon)\right)$
\\ \hline \hline
\end{tabular*}
\end{adjustbox}
\end{table}

\subsubsection{Online Entropy-Suboptimal Sampling}
\label{sec:intro-existing-suboptimal}

For the problem of generating an output sequence given by conditional
distributions specified in an online fashion,
the most straightforward application of \cref{sec:intro-existing-single}
is to use a sequence of entropy-optimal samplers
that target each conditional distribution:
\begin{align}
X_1 &\overset{\ky}{\sim} p_1,
&& X_2 \overset{\ky}{\sim} p_2( \dotgiven X_1),
&&\dots,
&&X_n \overset{\ky}{\sim} p_n( \dotgiven X_1, \dots, X_{n-1}).
\label{eq:baseline-ky-chain-rule-gen}
\end{align}
The expected entropy cost of generating
$(X_1, \dots, X_n)$ using this dynamic approach satisfies
\begin{equation}
\expect{\tosses{n}{\Flips, \Dists}}
< \sum_{i=1}^{n}\left(\entropyRV{X_i \mid X_1,\dots,X_{i-1}}+2\right)
= \entropyRV{X_1,\dots,X_n} + 2n.
\label{eq:baseline-ky-chain-rule-entropy}
\end{equation}
The space complexity remains
bounded as $n$ grows large, since at each step the conditional
distribution $p_i(x_i \mid X_1=x_1, \dots, X_{i-1}=x_{i-1})$
is finitely encodable using at most $d\ceil{\log{d}}$ bits.
The space complexity of generating each $X_i$
may be exponential or linear as described before,
depending on the amount of preprocessing.

The term $2n$ in \cref{eq:baseline-ky-chain-rule-entropy} indicates a
significant waste of random bits, and
the cost bound $\entropyRV{X_1,\dots,X_n}/n + 2$
does not tend to the optimal rate.
\Citet[\S2.3]{devroye2020} describe a ``randomness extraction''
procedure that recycles the random bits in this sequential approach
using the interval method of \citet{han1997} (described in the next paragraph),
which claims (cf.~\cref{footnote:randomness-extraction}) to achieve
\begin{equation}
\expect{\tosses{n}{\Flips, \Dists}} < \entropyRV{X_1,\dots,X_n} + 2 + o(n).
\end{equation}
However, the space complexity is no longer bounded as the algorithm
requires arbitrary-precision arithmetic.
Randomness extraction uses operations with big integers of unbounded size
as $n \to \infty$ \textit{even if} the rational probabilities have bounded
denominator $d$, compromising performance.

\subsubsection{Online Entropy-Optimal Sampling}
\label{sec:intro-existing-optimal}

The entropy loss of $<2$ bits per sample in \cref{eq:baseline-ky-chain-rule-entropy}
arises because each $X_i$ is treated completely separately;
batched sampling or storing more information across rounds
can drive this loss to zero.
Consider first an offline setting where $n$ is a prespecified value,
and each $\Dist_i$ is a deterministic function of the previous outputs
$X_{<i}$ as in \cref{remark:output-sequence-process}.
Then we can generate $(X_1, \dots, X_n)$
by first reconstructing the joint distribution
$\dist^{(n)}$ from the conditionals and then sampling
\begin{align}
(X_1, \dots, X_n) \overset{\ky}{\sim} p^{(n)}(x_1,\dots,x_n) \defeq
  \prod_{i=1}^{n}p_i(x_i\mid x_1, \dots, x_{i-1}),
\end{align}
where $X \overset{\ky}{\sim} \dist$ means that ``$X$ is generated from $\dist$
using the \citeauthor{knuth1976} method''
as described in \cref{sec:intro-existing-single}.
As $p^{(n)}$ is a bona fide discrete distribution
over $\Alphabet^n$, the expected entropy cost
\begin{align}
\expect{\tosses{1}{\Flips, \dist^{(n)}}} < \entropyRV{X_1,\dots,X_n} + 2
\label{eq:ky-batched}
\end{align}
achieves the best possible rate for each batch size $n$.
The expected amortized entropy cost
$\entropyRV{X_1,\dots,X_n}/n + 2/n$
tends to the Shannon-optimal
rate of
$\entropyRV{X_1,\dots,X_n}/n$ tosses per output as $n \to \infty$.
The disadvantage is that this method is offline.
Further, constructing the joint distributions $p^{(n)}$ requires
exponential space (up to $\abs{\Alphabet}^n$) and its probabilities can
have a large denominator (up to $d^n$).

Fortunately, \citet[p.~384]{knuth1976} show that it is possible to
entropy-optimally sample from $p^{(n)}$ in an online fashion,
without explicitly constructing a DDG tree
or even the joint distribution.
\Citeauthor{knuth1976} show that their tree representations for
entropy-optimal samplers can be dynamically refined step-by-step,
as new conditional distributions are specified.
In fact, only the leaves corresponding to the previously sampled symbol need to
be refined, so this online refinement algorithm avoids the exponential complexity
of the na\"ive batching method~\cref{eq:ky-batched}.
This method achieves the absolute minimum expected entropy cost over all algorithms.
However, the space and runtime complexity per sample grow unbounded
with the number of samples,
because the algorithm requires the binary expansions of rational numbers with
denominators growing as $d^n$ after $n$ samples.
This same issue arises with other exact online samplers that achieve an
$O(1/n)$ entropy loss per sample.

\subsubsection{Interval Method (Arithmetic Coding)}
\label{sec:intro-existing-interval}

The interval algorithm of \citet{han1997} generates samples from
a discrete distribution by recursively subdividing the
unit interval, analogously to arithmetic coding \citep[\S6.3]{mackay2003}.
It extends to sampling i.i.d.~random variables,
a stationary homogeneous Markov process, or an arbitrary process \citep[\S{V}]{han1997}.
When the entropy source emits fair coin tosses,
\citet[Theorem 3]{han1997}
show that generating $(X_1, \dots, X_n)$ using the interval method has
\begin{align}
\expect{\tosses{n}{\Flips, \Dists}} < \entropyRV{X_1,\dots,X_n} + 3.
\label{eq:baseline-interval-entropy}
\end{align}
The interval method does not require batching and is suitable for online
sampling.
Its cost bound
$\entropyRV{X_1,\dots,X_n}/n + 3/n$ also converges to the asymptotically
optimal rate \citep[Eq.~(5.2)]{han1997}.
The disadvantage is that it requires arbitrary-precision arithmetic,
as with the method of \citet{devroye2020} and the online optimal
method of \citet{knuth1976}.
\Citet{uyematsu2003} give a finite-space implementation of the
interval method that uses the usual arithmetic coding technique of rounding
cumulative probabilities, but this version does not return
exact samples.

\subsection{Overview of Randomness Recycling}
\label{sec:intro-overview}

The entropy bound in \cref{theorem:entropy-cost} is enabled
by a technique called \textit{randomness recycling}.
The key idea is as follows.
The auxiliary-state space
$\AState \defeq \set{(z,m) \mid m \ge 1, 0 \le z < m}$
of our online sampling algorithm
is comprised of pairs denoting a uniform draw $z$ over $\set{0,\dots,m-1}$,
with initial state $\astate_0 = (0,1)$.
When sampling $X_i \sim \Dist_i$ at round $i$,
the sampler is given fresh coin tosses $\Flips_{\ge T_{i-1}}$
and the \textit{uniform state}
$S_{i-1} = (\randS_{i-1},\randM_{i-1})$, which is a pair of
random variables such that
$\randS_{i-1} \mid \randM_{i-1} \sim \Uniform[0,\randM_{i-1})$.
(We only consider discrete random variables, so we write
$\Uniform[0,M)$ as shorthand for $\Uniform(\set{0,\dots,M-1})$.)
This uniform state incorporates fresh coin tosses drawn from the
entropy source and randomness that has been recycled from
generating random variables $X_1, X_2, \dots, X_{i-1}$
in previous rounds.
After $X_i$ is generated, $\Astate_{i-1} =(\randS_{i-1},\randM_{i-1})$
is updated to value $\Astate_i = (\randS_i,\randM_i)$ such that
$\randS_i \mid \randM_i \sim \Uniform[0,\randM_i)$.
This auxiliary state will be used to sample
$X_{i+1} \sim \Dist_{i+1}$ in the next round, in accordance with the per-round map
specification~\cref{eq:ors-map}.


To meet the correctness requirement~\cref{eq:ors-correct},
randomness recycling requires the following
invariant\footnote{\label{footnote:randomness-extraction}%
Careful reasoning about this invariant is required for generating correct samples according
to \cref{eq:ors-correct}.
For example, the randomness extraction method of \citet[\S2.3, Algorithms 4 and 5]{devroye2020}
does not maintain an analogous invariant in its implementation
of the recycling rule from the \citet{han1997} interval method.
As a result, it does not produce samples from the correct target distributions.
For example, given target distributions $\Dist_1 = \Dist_2 = \mathrm{Bernoulli}(1/4)$
for $X_1$ and $X_2$ (i.i.d.), \citet[Algorithm 5]{devroye2020}
can (under certain configurations) produce $X_2 \sim \mathrm{Bernoulli}(5/16)$,
because the \textit{value} of a recycled bit is correlated with its
\textit{availability} at a point in time.
}
on the auxiliary state to be maintained at all rounds:
\begin{enumerate}[wide, leftmargin=*, label=(I),]
\item \label{invariant} For all $i \ge 1$,
  the global uniform state $(\randS_i,\randM_i)$ satisfies
  $\randS_i \perp (X_1, \dots, X_i) \mid \randM_i$.
\end{enumerate}
Maintaining this global uniform state makes it possible to substantially
reduce the number fresh coin tosses drawn from the entropy source,
by storing unused information from previous coin tosses whenever possible.
By induction, the invariant \labelcref{invariant} is maintained as long as
$(\randS_0, \randM_0) = (0,1)$ and the sampling function used at each step
$i$ fulfills the following contract:
\begin{enumerate}[wide, leftmargin=*, label=(C),]
\item \label{contract} For all $i \ge 1$,
  if the initial global uniform state $(\randS_{i-1},\randM_{i-1})$ satisfies
  $\randS_{i-1} \perp (X_1, \dots, X_{i-1}) \mid \randM_{i-1}$, then
  the updated global uniform state $(\randS_i,\randM_i)$ satisfies
  $\randS_i \perp (X_1, \dots, X_{i}) \mid \randM_i$.
\end{enumerate}

\paragraph{Workflow}


\begin{figure}[t]
\centering
\begin{tikzpicture}
\node[
  name=entropy,
  draw=black,
  label={above:
    \begin{tabular}{c}
    \textbf{Entropy Source} \\
    $\Flips\sim\Uniform(\set{0,1}^\Nat)$
    \end{tabular}},
  minimum height=.5cm,
  minimum width=4cm,
  pattern=crosshatch dots,
  ]{};

\node[
  name=uniform-state,
  below=1.5 of entropy,
  draw=black,
  label={[name=uniform-state-label]below:
    \begin{tabular}{c}
    \textbf{Uniform State} $(\randS,\randM)$ \\
    $\randS \mid \randM \sim \Uniform[0,\randM)$
    \end{tabular}},
  ]{$
    \begin{NiceArray}[first-row,hvlines,corners]{ccccc}
    \CodeBefore
    \Body
    \Block{1-5}{\randM \in [1,2^W)} \\
    1 & 2 & \Block[transparent, tikz={pattern = north west lines, pattern color = gray!50!white}]{1-1}{3} & \dots & 2^W-1 \Tstrut \\
    \Block[draw=white,borders={top,bottom}]{1-5}{\randS \in [0,\randM)} \rule{0pt}{4ex} \\
    0 & 0 & 0 & \Block{3-1}{\ddots} & 0 \\
      & 1 & \Block[transparent, tikz={pattern = north west lines, pattern color = gray!50!white}]{1-1}{1} & ~ & 1 \\
      &   & 2 &   & 2 \\
      &   &   & \ddots & \vdots \\
      &   &   &   & 2^{W}-2 \Tstrut \\
    \end{NiceArray}
  $};

\draw[-latex,thick] (entropy.south) -- node[pos=0.5,label={[align=center]left:\Call{Refill}{}\\(\cref{alg:refill})}]{} (uniform-state.north);

\node[
  name=uniform-sample,
  inner sep=0pt,
  draw=black,
  right=3 of uniform-state,
  label={[name=uniform-sample-label]below:
    \begin{tabular}{c}
    \textbf{Uniform Sample} \\
    $U_i \sim \Uniform[0,N_i)$
    \end{tabular}},
  ]{$
    \begin{NiceArray}[hvlines]{c}
    \Block[transparent, fill=green!50!white]{1-1}{0} \\
    1 \\
    \Block[transparent, fill=green!50!white]{1-1}{2} \\
    \vdots \\
    U_i-1 \\
    \Block[transparent, fill=green!50!white]{1-1}{U_i} \\
    \Block[transparent, fill=green!50!white]{1-1}{U_i+1} \\
    \vdots \\
    N_i-1
    \end{NiceArray}
  $};

\node[
  name=finv,
  right=0.25 of uniform-sample.east,yshift=-1cm,
  minimum width=1cm, minimum height=.5cm,fill=green!50!white]{$h^{-1}(X_i)$};

\draw[-latex,thick]
  (uniform-state.east)
  to
  node[
    pos=0.5,
    label={[align=center]above:\Call{Uniform}{$N_i$}\\[2pt](\cref{alg:uniform}\textsuperscript{\textdagger})}]{}
  (uniform-sample.west);

\node[
  name=nonuniform-sample,
  inner sep=0pt,
  draw=black,
  right=3.5 of uniform-sample,
  label={[name=nonuniform-sample-label]below:
    \begin{tabular}{c}
    \textbf{Nonuniform Sample} \\
    $X_i \sim P_i$
    \end{tabular}},
  ]{$
    \begin{NiceArray}[hvlines]{c}
    0      \\
    1      \\
    \vdots \\
    \Block[transparent, fill=gray!25!white]{1-1}{h(U_i)\eqdef X_i} \\
    \vdots
    \end{NiceArray}
  $};

\draw[-latex,thick]
  (uniform-sample)
  to
  node[
    pos=0.5,
    label={[align=center]above:\Call{Inversion}{$\Dist_i$}\\[2pt](\cref{alg:inversion}\textsuperscript{\textdaggerdbl})}]{}
  (nonuniform-sample);

\coordinate[name=snuffle,at=($(entropy.south)!0.5!(uniform-state.north)$)];
\coordinate[name=buffle,at=(snuffle -| nonuniform-sample.north)];
\draw[latex-,thick,rounded corners=4mm]
  ([xshift=1cm]uniform-state.north)
  |-
  (buffle)
  node[
    pos=0.75,
    label={[align=center]above:
      \Call{Recycle}{$\randS',\randM'$} \\
      $\randS'\defeq r_i(U_i)$ \\
      $\randM'\defeq \abs{h^{-1}(X_i)}$\\
      (\cref{alg:recycle})
      }
    ]{}
  |-
  (nonuniform-sample.north);

\node[rectangle, anchor=north east, at=(current bounding box.north east),xshift=-1cm] {$i =1,2,\dots,$};

\node[font=\scriptsize,align=left,at=(uniform-sample-label -| nonuniform-sample-label)]
  {
  \textsuperscript\textdagger~or \cref{alg:uniform-widening,alg:uniform-lemire,alg:uniform-brackett} \\
  \textsuperscript\textdaggerdbl~or \cref{alg:lookup,alg:alias,alg:ddg}
  };

\end{tikzpicture}

\caption{Online random sampling using randomness recycling.
The sampling algorithm is dynamically given a random sequence
$\Dists \defeq (\Dist_i)_{i\ge 1}$ of probability distributions
and access to i.i.d.~coin tosses $\Flips \defeq (\Flip_i)_{i \ge 1}$,
and generates output sequence $\bX = (X_i)_{i\ge1}$
such that $X_i \sim \Dist_i$ for $i \ge 1$.}
\label{fig:overview}
\end{figure}

\Cref{fig:overview} gives an overview of the ``randomness recycling''
workflow for generating a random sequence $(X_i \sim \Dist_i)_{i \ge 1}$.
The entropy source provides an i.i.d.~sequence $\Flips$ of unbiased coin tosses.
These coin tosses are used to create and refill a
global uniform state $(\randS, \randM)$
so that $\randS \mid \randM \sim \Uniform[0,\randM)$.
To sample $X_i \sim \Dist_i$,
a discrete uniform $U_i \sim \Uniform[0, N_i)$ over an appropriate
range is sampled via \cref{alg:uniform}, which accesses and possibly
refills the uniform state $(\randS, \randM)$.
The output sample $X_i \gets h(U_i)$ is a deterministic function of
$U_i$, where $h$ depends on the specific sampling algorithm used (e.g.,
\cref{alg:inversion,alg:lookup,alg:alias,alg:ddg}).
The map $h$ ensures that,
conditioned on $\set{X_i = x_i}$,
all elements of the preimage
$h^{-1}(x_i) \defeq \set{u \in [0, N_i) \mid f_i(u) = x_i}$
are equally likely.
We thus extract a uniform state $(\randS', \randM')$ where
$\randM' = \abs{h^{-1}(x_i)}$ is the cardinality of the preimage and
$\randS' \defeq r(U_i) \in [0, \randM')$ is obtained using a
\textit{randomness recycling rule} for $h$ that guarantees $\randS' \perp X_i \mid \randM'$.
The final step is to call \Call{Recycle}{} (\cref{alg:recycle}) to
merge $(\randS', \randM')$ into the global uniform state
$(\randS, \randM)$.

\paragraph{Organization}

\Cref{sec:random-states} introduces uniform and nonuniform random states
and describes how they can be manipulated into new random states
through information-preserving transformations.
\Cref{sec:analysis} proves \cref{theorem:entropy-cost}, by demonstrating
an online random sampling algorithm that leverages randomness recycling.
\Cref{sec:rr-uniform,sec:rr-general} investigate further applications
of randomness recycling for accelerating samplers for uniform
and general discrete distributions (\cref{table:algorithms}).
\Cref{sec:related} discusses related work.
\Cref{sec:remarks} concludes the paper with remarks and a conjecture.


\begin{table}[t]
\centering
\caption{Randomness recycling strategies developed in this work for uniform and general sampling algorithms.}
\label{table:algorithms}
\begin{tabular}{|ll|}\hline
\textbf{Algorithm} & \textbf{Reference} \\ \hline
\rowcolor{gray!20!white}
{\textit{Uniform Distributions}} & ~ \\
\;Division Method & \cref{alg:uniform} \\
\;Machine Word Division Method & \cref{alg:uniform-widening} \\
\;\Citet{lemire2019} Method & \cref{alg:uniform-lemire} \\
\;\Citet{brackett2025} Method & \cref{alg:uniform-brackett} \\
\rowcolor{gray!20!white}
{\textit{General Distributions}} & ~ \\
\;Inversion Sampling & \cref{alg:inversion} \\
\;Lookup-Table Sampling \citep{devroye1986} & \cref{alg:lookup} \\
\;Alias Sampling \citep{walker1977} & \cref{alg:alias} \\
\;DDG Sampling \citep{knuth1976,saad2020fldr} & \cref{alg:ddg} \\ \hline\hline
\end{tabular}
\end{table}


\section{Random States}
\label{sec:random-states}

We first describe the auxiliary-state space
$\AState = \set{(z,m) \mid m \ge 1, 0 \le z < m}$ used by our online
sampling algorithm, as well as information-preserving operations on
the elements of this space.

\begin{definition}
A \textit{uniform random state} $(\randS,\randM)$ is any pair of discrete
random variables such that $M \ge 1$ and $\randS \mid \randM \sim \Uniform[0,\randM)$.
\end{definition}

\begin{definition}
A \textit{nonuniform random state} $(\randS,\mathbf{W})$
is a pair of discrete random variables such
that $\mathbf{W} \defeq (W_0,\dots,W_{n-1})$ is a list of $n \ge 1$
positive random variables
and
$\randS \mid \mathbf{W} \sim \Discrete(\mathbf{W})$,
i.e., $\Prob(\randS=i \mid \mathbf{W}) = W_i/\sum_{j=0}^{n-1}{W_j}$ for $i \in [0,n)$.
\end{definition}

Randomness recycling leverages several properties about merging and
splitting both uniform (\cref{sec:random-states-uniform})
and nonuniform \cref{sec:random-states-nonuniform}
random states.
Merging and splitting are information-preserving transformations
(bijections) of random variables that perfectly invert one another.
The general idea is that is if $X$ is a discrete random element
with probability mass function $p_X$
and $f$ is a bijection, the transformation $Y = f(X)$ preserves Shannon
information content (surprisals) in the sense that for all $(x,y)$ such that
$y = f(x)$, we have
\begin{align}
p_Y(y) = p_X(x) \implies
\log\left(\frac{1}{p_Y(y)}\right) = \log\left(\frac{1}{p_X(x)}\right).
\label{eq:information-content-preserved}
\end{align}

\subsection{Merging and Splitting Uniform Random States}
\label{sec:random-states-uniform}

\subsubsection{Merging Two Uniform States into a Uniform State}

We first describe how to \textit{merge} two uniform states $(\randS,\randM)$ and
$(\randS',\randM')$ to obtain a new uniform state
over a larger range $\randM \randM'$.

\begin{proposition}
\label{prop:merge-uniform}
For any integers $m, m' \ge 1$,
\begin{align*}
\randS \sim \mathrm{Uniform}[0,m), \quad
\randS' \sim \mathrm{Uniform}[0,m'), \quad
\randS \perp \randS'
\implies
\randS + \randS'm \sim \mathrm{Uniform}[0,mm').
\end{align*}
\end{proposition}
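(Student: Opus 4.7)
The plan is to verify that the map $\varphi : [0,m) \times [0,m') \to [0,mm')$ defined by $\varphi(z, z') \defeq z + z'm$ is a bijection, then transfer uniformity across it using the independence hypothesis. This is essentially the uniqueness of Euclidean division: every $k \in [0, mm')$ decomposes uniquely as $k = z + z'm$ with $0 \le z < m$ and $0 \le z' < m'$, where $z = k \bmod m$ and $z' = \lfloor k/m \rfloor$.

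Concretely, I would proceed in three short steps. First, I would establish that $\varphi$ is a bijection between the product set $[0,m) \times [0,m')$ (which has cardinality $mm'$) and $[0, mm')$ (also cardinality $mm'$); since the domain and codomain have equal finite cardinality, it suffices to check injectivity, which follows because if $z_1 + z_1' m = z_2 + z_2' m$ with $z_1, z_2 \in [0,m)$, then reducing mod $m$ gives $z_1 = z_2$, and then $z_1' = z_2'$. Second, I would compute the pushforward distribution: for any $k \in [0, mm')$ with unique preimage $(z,z') = \varphi^{-1}(k)$,
\begin{align*}
\Prob(\randS + \randS' m = k)
= \Prob(\randS = z,\, \randS' = z')
= \Prob(\randS = z)\,\Prob(\randS' = z')
= \frac{1}{m} \cdot \frac{1}{m'}
= \frac{1}{mm'},
\end{align*}
where the second equality uses $\randS \perp \randS'$ and the third uses the marginal uniform hypotheses. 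Third, since this probability is constant in $k$ over $[0, mm')$, the pushed-forward random variable is $\Uniform[0, mm')$, as required.

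There is no substantive obstacle here; the result is a consequence of the standard mixed-radix identification of $[0,m) \times [0,m')$ with $[0, mm')$ combined with the fact that a bijection transports a uniform distribution to a uniform distribution. The only mild care needed is to treat the endpoint case $m = 1$ or $m' = 1$, but this is automatic since the corresponding uniform factor is degenerate at $0$ and $\varphi$ reduces to the identity (scaled) map. This proposition is precisely the instance of the general information-preservation principle~\cref{eq:information-content-preserved} applied to the bijection $\varphi$, which motivates why it serves as the building block for merging uniform states in subsequent sections.
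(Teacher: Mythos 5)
Your proof is correct and follows essentially the same route as the paper, which argues in one line that $(z,z') \mapsto z + z'm$ is a bijection from $[0,m)\times[0,m')$ to $[0,mm')$, so each outcome has probability $1/(mm')$. You simply spell out the injectivity check and the pushforward computation that the paper leaves implicit.
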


\begin{proof}
The map
$[0,m) \times [0,m') \ni (x,x') \mapsto x+x'm \in \set{0,1,\dots,mm'-1}$
is a bijection, so each outcome is equally likely with probability $1/(mm')$.
\end{proof}

\begin{listing}[t]
\begin{minipage}{\linewidth}
\begin{algorithm}[H]
\caption{Recycling a uniform state into the global uniform state}
\label{alg:recycle}
\begin{algorithmic}[1]
\LComment{Global variables: word size $W$; uniform state $S = (\randS, \randM)$}
\State \textbf{int} $W \gets W_{d,\varepsilon}$ \Comment{cf.~\cref{eq:W-d-epsilon}}
\State \textbf{mutable int} $\randS \gets 0$ \Comment{$W$-bit integer} \label{algline:recycle-init-Z}
\State \textbf{mutable int} $\randM \gets 1$ \Comment{$W$-bit integer} \label{algline:recycle-init-m}
\Require{Uniform state $(\randS',\randM')$ such that $\randS' \mid \randM' \sim \Uniform[0,\randM')$
and $\randS' \perp \randS \mid \randM, \randM'$.}
\Ensure{Update the global uniform state $(\randS, \randM)$ such that $\randS \mid \randM,\randM' \sim \Uniform[0,\randM \randM')$ (cf.~\cref{prop:merge-uniform}).}
\Procedure{Recycle}{$\randS',\randM'$}
\State $\textbf{update}~\randS \setp \randS + \randS' \randM$ \label{algline:recycle-update-Z}
\State $\textbf{update}~\randM \setp \randM \randM'$          \label{algline:recycle-update-m}
\EndProcedure
\end{algorithmic}
\end{algorithm}

\vspace{-.75cm}
\end{minipage}
\begin{minipage}{\linewidth}
\begin{algorithm}[H]
\caption{Refilling the global uniform state with fresh uniform bits from the entropy source}
\label{alg:refill}
\begin{algorithmic}[1]
\Require{Read access to global variables $W$ and $\randM$ from \cref{alg:recycle}.}
\Ensure{
  Update the global uniform state $(\randS, \randM)$ so that
  $\randM \in [2^{W-1}, 2^{W})$ and
  $\randS \mid \randM \sim \Uniform[0,\randM)$.
}
\Procedure{Refill}{{}}
\State $k \gets W - \ceil{\log_{2}(\randM+1)}$
  \Comment{number of fresh bits to append to $Z$}
  \label{algline:refill-fresh-k}
\State $(\flip_0, \dots, \flip_{k-1}) \gets \Call{Flip}{k}$
  \Comment{obtain $k$ i.i.d.~bits from entropy source}
  \label{algline:refill-fresh-flips}
\State $\randS' \gets \flip_{k-1} \cdot 2^{k-1} + \dots + \flip_1 \cdot 2^1 + \flip_0 \cdot 2^0$
  \Comment{compute random $k$-bit integer}
  \label{algline:refill-fresh-state}
\State $\randM' \gets 2^{k}$
  \Comment{upper bound on the value of the random integer}
  \label{algline:refill-fresh-bound}
\State \Call{Recycle}{$\randS'$,$\randM'$}
  \Comment{merge fresh uniform state into global state (\cref{alg:recycle})}
\EndProcedure
\end{algorithmic}
\end{algorithm}

\end{minipage}%
\end{listing}

Our randomness recycling algorithms will propagate a uniform state
$\Astate = (\randS,\randM)$
in a way that maintains the invariant~\labelcref{invariant}.
To ensure that the space complexity remains bounded,
we fix an upper bound $W \ge 1$ on the size of any uniform state,
so that $\Prob(\randM \in [1, 2^W)) = 1$ and $\Prob(\randS \in [0, 2^W-1)) = 1$.
The variables $(\randS, \randM)$ are each stored as one $W$-bit word in
memory.
Rather than explicitly accept and return a state in the functional
style~\cref{eq:ors-map}, we instead manipulate a global random state
using in-place updates.

The \Call{Recycle}{} procedure shown in \cref{alg:recycle}
uses \cref{prop:merge-uniform}
to recycle a new uniform state $(\randS',\randM')$
into the global uniform state $(\randS,\randM)$.
The notation $\randS \setp \dots$ and $\randM \setp \dots$
on \cref{algline:recycle-update-Z,algline:recycle-update-m}
means that $(\randS,\randM)$ is being updated in place.
The global uniform state is initialized
as $\astate_0 = (\randS,\randM) = (0,1)$ in
\cref{algline:recycle-init-Z,algline:recycle-init-m}, which is the only
valid deterministic choice.

It will also be necessary
to recycle a \textit{fresh} uniform state $(\randS',\randM')$ that
is drawn directly from the entropy source
into $(\randS,\randM)$.
We call this operation $\Call{Refill}{}$, as shown in \cref{alg:refill}.
The interface to the entropy stream $\Flips = (\Flip_i)_{i \ge 1}$
is by denoted $\Call{Flip}{k}$,
which returns off the next $k$ i.i.d.~fair coin tosses subject
to the fresh coin guarantee~\cref{eq:ors-fresh}.
The number $k$ of tosses computed on \cref{algline:refill-fresh-k}
ensures that the updated uniform state $(\randS,\randM)$ does not
overflow beyond $W$-bit words.
The coin toss counters $\iflip_i$ and $\nflip_i$ from \cref{lst:ors} are obtained by
keeping track of the calls to $\Call{Flip}{k}$ on \cref{algline:refill-fresh-k}
across the rounds of online sampling.

\subsubsection{Splitting a Uniform State into Two Uniform States}
The merge operation \cref{prop:merge-uniform} preserves information
and has the following inverse.

\begin{proposition}
\label{prop:split-uniform}
For any integers $m, m' \ge 1$,
\begin{equation*}
\randS \sim \mathrm{Uniform}[0,mm')
\implies
\left\lbrace\begin{aligned}
\floor{\randS / m} &\sim \mathrm{Uniform}[0,m') \\
(\randS \bmod m) &\sim \mathrm{Uniform}[0,m) \\
\floor{\randS / m} &\perp (\randS \bmod m).
\end{aligned}\right.
\end{equation*}
\end{proposition}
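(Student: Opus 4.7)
The plan is to observe that this proposition is exactly the inverse of \cref{prop:merge-uniform}, so the natural strategy is to exhibit a bijection and push forward the uniform distribution. Specifically, I would define the Euclidean division map
\begin{align*}
\varphi : [0, mm') \to [0, m) \times [0, m'), \qquad
\varphi(z) \defeq (z \bmod m,\ \lfloor z/m \rfloor),
\end{align*}
and verify that it is a bijection, with inverse $(r, q) \mapsto r + qm$. This is the standard division-algorithm bijection already implicit in the proof of \cref{prop:merge-uniform}.

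Next, I would push forward the uniform law of $\randS$ through $\varphi$. Since $\varphi$ is a bijection between two finite sets of the same cardinality $mm'$, and $\randS$ assigns probability $1/(mm')$ to each element of its range, the pair $(\randS \bmod m, \lfloor \randS/m \rfloor)$ is uniform on $[0,m) \times [0, m')$. Equivalently, for any $(r, q) \in [0, m) \times [0, m')$,
\begin{align*}
\Prob\bigl(\randS \bmod m = r,\ \lfloor \randS/m \rfloor = q\bigr)
  = \Prob(\randS = r + qm) = \frac{1}{mm'} = \frac{1}{m}\cdot \frac{1}{m'}.
\end{align*}

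Finally, the fact that the joint mass factors as a product of constants immediately yields both conclusions: each marginal is uniform on its respective range, and the two coordinates are independent. There is no real obstacle here; the only thing to be careful about is to state the bijection cleanly and note that information preservation in the sense of \cref{eq:information-content-preserved} makes this a genuine inverse of the merge operation in \cref{prop:merge-uniform}.
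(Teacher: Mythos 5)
Your proof is correct and rests on the same underlying idea as the paper's: the paper simply defers to the quotient/remainder arguments sketched for \Call{Uniform}{} (the division-algorithm decomposition), which is exactly the bijection $\varphi(z) = (z \bmod m, \lfloor z/m \rfloor)$ you make explicit. Spelling out the pushforward and the factorization $\Prob(\randS = r + qm) = \tfrac{1}{m}\cdot\tfrac{1}{m'}$ is a perfectly valid, self-contained rendering of that argument and indeed the exact inverse of the proof of \cref{prop:merge-uniform}.
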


\begin{proof}
Immediate from the arguments in \cref{sec:rr-uniform}.
\end{proof}
\Cref{prop:split-uniform} shows how to split a uniform random state (over a composite
integer range with known factorization) into two smaller
independent uniform random states, using a single integer division.

\subsection{Merging and Splitting Nonuniform Random States}
\label{sec:random-states-nonuniform}

We next describe generalizations of
\cref{prop:merge-uniform,prop:split-uniform} for nonuinform random states,
which will be used by our recycling algorithms.

\subsubsection{Merging a Nonuniform and Uniform State into a Uniform State}

\begin{proposition}
\label{prop:merge-nonuniform}
For any positive integers $\mathbf{w} \defeq w_0, \dots, w_{n-1}$
with sum $m$,
\begin{equation*}
\randS \sim \mathrm{Discrete}(\mathbf{w}), \quad
(\randS' \mid \randS) \sim \mathrm{Uniform}[0, w_\randS)
\implies \left(\randS' + \sum_{i=0}^{\randS-1}w_i\right) \sim \mathrm{Uniform}[0,m).
\end{equation*}
\end{proposition}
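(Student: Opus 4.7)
The plan is to exhibit the map $\phi(z,z') \defeq z' + \sum_{i=0}^{z-1} w_i$ as a bijection between the disjoint union $\bigsqcup_{z=0}^{n-1}\{z\}\times[0, w_z)$ and $[0, m)$, and then push the joint distribution of $(Z, Z')$ forward along $\phi$. This is the natural continuous analogue of the inverse-CDF construction underlying \cref{alg:inversion}, and it mirrors the bijection argument used for \cref{prop:merge-uniform}.

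First I would set $S_z \defeq \sum_{i=0}^{z-1} w_i$ with the empty-sum convention $S_0 \defeq 0$, and observe that $\phi$ sends the block $\{z\}\times[0, w_z)$ onto the integer interval $[S_z, S_{z+1})$. Because the $w_i$ are positive integers summing to $m$, these intervals partition $[0, S_n) = [0, m)$, so each $k \in [0, m)$ has a unique preimage $(z_k, k - S_{z_k})$, where $z_k$ is the unique index with $S_{z_k} \le k < S_{z_k+1}$. This establishes the bijection and also shows that the event $\{\phi(Z, Z') = k\}$ is identical to the atom $\{Z = z_k,\, Z' = k - S_{z_k}\}$.

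Next I would compute the pushforward probability at an arbitrary $k \in [0, m)$. Writing $(z, z')$ for its unique preimage and applying the hypotheses $Z \sim \Discrete(\mathbf{w})$ and $(Z' \mid Z) \sim \Uniform[0, w_Z)$,
\begin{align*}
\Prob\setr*{\phi(Z, Z') = k}
= \Prob(Z = z)\,\Prob(Z' = z' \mid Z = z)
= \frac{w_z}{m} \cdot \frac{1}{w_z}
= \frac{1}{m}.
\end{align*}
Since this holds uniformly in $k$, we conclude $Z' + \sum_{i<Z} w_i \sim \Uniform[0, m)$, as desired.

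There is essentially no serious obstacle here; the only subtlety is the bookkeeping of the empty-sum convention at $z = 0$ and the requirement that each block be nonempty (guaranteed by $w_z \ge 1$). The structural point worth emphasizing is that the cancellation $(w_z/m)(1/w_z) = 1/m$ is precisely the information-content preservation principle~\cref{eq:information-content-preserved} instantiated for the bijection $\phi$: the nonuniform ``selection'' randomness of $Z$ and the conditional uniform ``position'' randomness of $Z'$ combine losslessly into a single uniform draw on $[0, m)$.
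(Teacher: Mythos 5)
Your proof is correct and follows essentially the same route as the paper: exhibiting the map $(z,z') \mapsto z' + \sum_{i<z} w_i$ as a bijection from the blocks $\set{z}\times[0,w_z)$ onto $[0,m)$ and observing the cancellation $(w_z/m)(1/w_z) = 1/m$, exactly as in the paper's one-line argument. Your version simply spells out the block decomposition and the empty-sum convention in more detail.
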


\begin{proof}
The map
$\cup_{i=0}^{n-1}(\set{i}\times[0,w_i)) \ni (x,x') \mapsto x' + \sum_{i=0}^{x-1}w_i \in [0,m)$
is a bijection, so each outcome is equally likely with probability
$(w_i / m) (1 / w_i) = 1/m$.
\end{proof}

\begin{remark}
\Cref{prop:merge-uniform} is a special case of \cref{prop:merge-nonuniform}
with $\mathbf{w} = (m', m', \dots, m')$.
\end{remark}

\subsubsection{Splitting a Uniform State into a Nonuniform and Uniform State}
The inverse of \cref{prop:merge-nonuniform} is as follows.

\begin{proposition}
\label{prop:split-nonuniform}
For any positive integers $\mathbf{w} \defeq w_0, \dots, w_{n-1}$ with sum $m$,
\setlength{\belowdisplayskip}{0pt}
\begin{equation*}
\randS \sim \mathrm{Uniform}[0,m), \quad
\randS \equiv \sum_{i=0}^{Y-1}w_i + X\ (\mbox{where } X \in [0,w_Y))
\implies
\left\lbrace\begin{aligned}
Y &\sim \mathrm{Discrete}(\mathbf{w}) \\
(X \mid Y) &\sim \mathrm{Uniform}[0, w_Y).
\end{aligned}\right.
\end{equation*}
\end{proposition}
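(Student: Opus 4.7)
The plan is to exhibit $(Y, X)$ as the unique preimage of $S$ under the bijection used in \cref{prop:merge-nonuniform}, then read off the joint and marginal laws directly from the uniformity of $S$. Essentially, the claim is the distributional translation of the fact that the map from \cref{prop:merge-nonuniform} is a bijection with a well-defined inverse.

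First I would show that the decomposition $S = \sum_{i=0}^{Y-1} w_i + X$ with $X \in [0, w_Y)$ is uniquely determined by $S$. The half-open intervals $I_j \defeq \bigl[\sum_{i<j} w_i,\, \sum_{i \le j} w_i\bigr)$ for $j = 0, \dots, n-1$ partition $[0, m)$, so each $s \in [0, m)$ lies in exactly one block $I_y$, fixing $Y = y$ and then $X = s - \sum_{i<y} w_i \in [0, w_y)$. Equivalently, $(Y, X)$ is the image of $S$ under the functional inverse of the bijection
\[
\phi : \bigcup_{i=0}^{n-1} \bigl(\{i\} \times [0, w_i)\bigr) \to [0, m), \qquad (y, x) \mapsto x + \sum_{i=0}^{y-1} w_i,
\]
that appears in the proof of \cref{prop:merge-nonuniform}.

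Next I would compute the joint law of $(Y, X)$. For any admissible pair $(y, x)$ with $y \in [0, n)$ and $x \in [0, w_y)$, injectivity of $\phi$ together with $\randS \sim \Uniform[0, m)$ gives $\Prob(Y = y, X = x) = \Prob(\randS = \phi(y, x)) = 1/m$. Marginalizing over $x$ yields $\Prob(Y = y) = w_y / m$, which is precisely $\Discrete(\mathbf{w})$. Dividing then gives $\Prob(X = x \mid Y = y) = 1/w_y$, so $(X \mid Y) \sim \Uniform[0, w_Y)$.

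There is no real obstacle here; the proposition is essentially the definitional inverse of \cref{prop:merge-nonuniform}, combined with the fact that a bijection pushes a uniform measure on the codomain back to a measure on the domain whose atomic probabilities match those of $S$. The only point deserving care is the uniqueness of the decomposition, which reduces to the observation that the blocks $I_j$ tile $[0, m)$; once that is in place, counting the $w_y$ preimages in block $I_y$ immediately yields both the marginal $\Discrete(\mathbf{w})$ distribution of $Y$ and the conditional uniformity of $X$.
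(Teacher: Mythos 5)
Your proposal is correct and follows essentially the same route as the paper: both arguments rest on the partition of $[0,m)$ into consecutive blocks of sizes $w_0,\dots,w_{n-1}$ (i.e., the inverse of the bijection in \cref{prop:merge-nonuniform}) together with the uniformity of $\randS$. The only cosmetic difference is that you compute the joint pmf $\Prob(Y=y, X=x)=1/m$ and then marginalize and condition, whereas the paper reads off $\Prob(Y=i)=w_i/m$ from the block lengths and gets the conditional uniformity of $X$ directly from the uniformity of $Z$ restricted to a block.
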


\begin{proof}
Let $W_i \defeq w_0 + \dots + w_i$ denote the prefix sums of $\mathbf{w}$
for $i \in [-1,n)$.
The event $\set{Y=i}$ holds if and only if $Z \in [W_{i-1}, W_i)$
which has probability $(W_i-W_{i-1})/m = w_i/m$.
Since $X = Z - W_{Y-1}$, conditioning on the event $\set{Y=i}$ gives
$Z \mid \set{Y=i} \in [W_{i-1}, W_i)$ and $X \mid \set{Y=i} \in [0, W_{i}-W_{i-1} \equiv w_i)$
which are both uniformly distributed by the uniformity of $Z$.
\end{proof}

\Cref{prop:split-nonuniform} says that a uniform random state can be split
into \begin{enumerate*}[label=(\roman*)]
\item a nonuniform random state following \textit{any} given finite discrete rational distribution; and
\item a leftover uniform random state;
\end{enumerate*}
as long as the original uniform range
matches the common denominator of the target discrete distribution.
It confirms the merge operation in \cref{prop:merge-nonuniform} is
reversible and justifies the recycling rule for the inversion method
presented in the next section.


\section{Proof of Main Theorem}
\label{sec:analysis}

\subsection{A Recycler for Uniform Distributions}

We now use the ideas from the previous section to develop a
randomness recycler for sampling discrete uniforms.
Suppose our goal is to generate
$X \sim \Uniform[0,n)$ over a range $n \leq \randM$,
where $(\randS, \randM)$ denotes the global uniform state maintained
in \cref{alg:recycle}.
Applying \cref{prop:split-nonuniform} with the weights
$\mathbf{w} = (\randM-n, n)$ furnishes
a Bernoulli sample $Y \sim \Discrete(\randM-n,n)$
and a uniform sample $X \sim \Uniform[0,w_Y)$.
A straightforward accept-reject approach is as follows.
\begin{enumerate}[wide,label=(R\arabic*),leftmargin=*]

\item\label{step:unif-rej-r-1} If $Y=1$, then $X \sim \Uniform[0,n)$ is
  accepted and returned as the desired sample.

\item\label{step:unif-rej-r-2} If $Y=0$, then $X \sim \Uniform[0,\randM-n)$, which is not
  over the desired range, is rejected and $(X, \randM-n)$
  is used as the new global uniform state.
\end{enumerate}

\paragraph{Information Loss}
In the accept and reject cases, we must discard the Bernoulli outcome
$Y$ and waste the corresponding $H_{\rm b}(n/\randM)$ bits of information
because it is in general correlated with the output sample $X$,
which violates the contract \labelcref{contract}.
More importantly, we cannot recycle $Y$ back into the
global uniform state using \cref{prop:merge-nonuniform}
unless we recycle it in both the accept and reject cases.
But recycling in the accept case requires consuming the desired
$X \sim \Uniform[0, n)$, which effectively undoes the sampling operation.
Discarding a Bernoulli outcome $Y$ in this way is the only operation
where we lose information and is fundamentally irreversible.

\paragraph{Reducing Rejection Probability}
If $\randM \approx n$, then
\labelcrefrange{step:unif-rej-r-1}{step:unif-rej-r-2}
gives a reasonably time- and entropy-efficient
sampler for $X \sim \mathrm{Uniform}[0,n)$
(cf.~\cref{remark:uniform-power-of-two}).
On the other hand, if $\randM \gg n$ then the rejection probability
$(\randM-n)/\randM$ is very high.
To decrease the rejection probability, we can multiply the accept weight
and use \cref{prop:split-uniform} in the accept case to split the uniform state
into two smaller uniform states.
The resulting sampler operates as follows:
\begin{enumerate}[wide,label=(S\arabic*),leftmargin=*]
\item\label{step:unif-rej-s-1}
  Generate
  $\hat{X} \sim \Uniform[0,\floor{\randM/n} n)$ using
  \labelcrefrange{step:unif-rej-r-1}{step:unif-rej-r-2}.

\item\label{step:unif-rej-s-2}
  Apply \cref{prop:split-uniform}
  (with $Z = \hat{X}$, $m=n$, $m'=\floor{\randM/n}$)
  to obtain $X \defeq (\hat{X} \bmod n) \sim \Uniform[0,n)$
  as the desired sample.
  Recycle the leftover
  uniform state $\floor{\hat{X}/n} \sim \Uniform[0,\floor{\randM/n})$
  into the global uniform state using \cref{alg:recycle}.
\end{enumerate}
The correctness of \labelcref{step:unif-rej-s-1} follows from the fact
$\floor{\randM/n} n \le \randM$, so
\labelcrefrange{step:unif-rej-r-1}{step:unif-rej-r-2}
can be applied.

\paragraph{Implementation}

We now give a concrete algorithm that
fuses \labelcrefrange{step:unif-rej-s-1}{step:unif-rej-s-2} together
to arrive at a simple and efficient algorithm for uniform sampling.
The goal is to generate a discrete uniform
$U \sim \Uniform[0,n)$ for some $n \in [1,\randM]$,
using the global uniform state $(\randS,\randM)$.
Consider the quotients and remainders
of $\randS = q_\randS n + r_\randS$ and $\randM = q_\randM n + r_\randM$
modulo $n$,
whose possible values as $\randS$ ranges over $[0,\randM)$ are shown below.
\begin{align*}
\NiceMatrixOptions{custom-line={letter = I, tikz = {line width=2pt, blue }}}
\begin{adjustbox}{max width=\linewidth}
$\begin{NiceArray}[first-col, first-row, hvlines]{ccccIccccIcIccccIcccc}
\CodeBefore
  \rowcolors{2}{yellow}{yellow}
  \cellcolor{red!30!white}{2-14,2-15,2-16,2-17}
  \cellcolor{green!30!white}{3-1,3-2,3-3,3-4}
  \cellcolor{blue!30!white}{3-5,3-6,3-7,3-8}
  \cellcolor{cyan!30!white}{3-10,3-11,3-12,3-13}
  \cellcolor{DarkOliveGreen!30!white}{3-14,3-15,3-16,3-17}
\Body
~
\\
\randS &
0 & 1 & \dots & n-1
&
n & n+1 & \dots & 2n-1
&
\dots
&
\Block{1-4}{\dots} & ~ & ~ & ~
&
q_\randM n & q_\randM n+1 & \dots & \randM -1
\\
q_\randS &
\Block{1-4}{0} & & &
&
\Block{1-4}{1} & & &
&
\cellcolor{white} \dots
&
\Block{1-4}{q_\randM -1} & & &
&
\Block{1-4}{q_\randM \defeq \floor{\randM/n}} & & &
\\
r_\randS &
0 & 1 & \dots & n-1
&
0 & 1 & \dots & n-1
&
\cellcolor{white} \dots
&
0 & 1 & \dots & n-1
&
0 & 1 & \dots & r_\randM -1
\CodeAfter
  \tikz \draw[latex-latex] ([yshift=.25cm] 1-|1) -- ([xshift=-.05cm,yshift=.25cm] 1-|14) node[fill=white,pos=0.5,inner ysep=0pt,font=\bfseries]{Accept};
  \tikz \draw[latex-latex] ([xshift=.05cm,yshift=.25cm] 1-|14) -- ([yshift=.25cm] 1-|18) node[fill=white,pos=0.5,inner ysep=0pt,font=\bfseries]{Reject};
\end{NiceArray}$
\end{adjustbox}
\end{align*}
The first row shows the possible values of $\randS \in [0,\randM)$.
The second and third
rows show the corresponding random quotient $q_\randS = \floor{\randS/n}$
and remainder $r_\randS = \randS \bmod {n}$, respectively.
\Cref{alg:uniform} uses these properties as follows:

\begin{itemize}[wide,leftmargin=*]
\item Consider the event $\set{q_Z < q_\randM}$.
Here, $\randS$ is equally likely to be any one of the elements whose
quotient is less than $q_\randM$.
Further, among all elements with quotient $q_\randS$,
$\randS$ is equally likely to have any remainder
$r_\randS \in \set{0,\dots,n-1}$.
It follows that $q_\randS \mid \set{q_\randS < q_\randM} \sim \Uniform[0,q_\randM)$
and $r_\randS \mid \set{q_\randS < q_\randM} \sim \Uniform[0,n)$.

\item Consider the event $\set{q_\randS = q_\randM}$. In this case, $\randS$ falls
in the final (pink) segment above.
It follows that the remainder $r_\randS \mid \set{q_\randS = q_\randM} \sim \Uniform[0,r_\randM)$,
where $r_\randS = r_\randM$ is impossible because $\randS < \randM$ surely.
\end{itemize}

\begin{algorithm}
\caption{Uniform sampling with randomness recycling}
\label{alg:uniform}
\begin{algorithmic}[1]
\Require{%
  Read and write access to global variables $(\randS, \randM)$ from \cref{alg:recycle};\\
  Integer $n \in [1, 2^{W-1}]$, where $W$ is the bit width of $\randS$ and $\randM$.
  }
\Ensure{Random sample $U \sim \Uniform[0,n)$}
\Procedure{Uniform}{$n$}
\While{\textbf{true}}
  \State $\Call{Refill}$ \Comment{refill global uniform state (\cref{alg:refill})}
    \label{algline:uniform-refill}
  \State $(q_\randS, r_\randS) \gets \Call{DivMod}{\randS,n}$ \Comment{$\randS = q_\randS n + r_\randS$}
  \State $(q_\randM, r_\randM) \gets \Call{DivMod}{\randM,n}$ \Comment{$\randM = q_\randM n + r_\randM$}
  \If{$q_\randS < q_\randM$}
    \Comment{accept case}
    \State $\textbf{update}~(\randS, \randM) \setp (q_\randS, q_\randM)$ \Comment{recycling}
      \label{algline:uniform-recycle-accept}
    \State \Return $r_\randS$ \Comment{$r_\randS \sim \Uniform[0,n)$}
  \Else
    \Comment{reject case}
    \State $\textbf{update}~(\randS,\randM) \setp (r_\randS, r_\randM)$ \Comment{recycling}
      \label{algline:uniform-recycle-reject}
  \EndIf
\EndWhile
\EndProcedure
\end{algorithmic}
\end{algorithm}

\begin{remark}
If $n$ divides $\randM$, then $\Prob(q_\randS = q_\randM) = 0$,
i.e., $\Call{Uniform}{n}$ never rejects.
The if-else statement in \cref{alg:uniform} can be seen as
branching into either the accept  case $\randS \sim \Uniform[0, q_\randM n)$,
where $\randS \in [0,q_\randM n)$ ensures termination,
or the rejection case $\randS - q_\randM n \in [0,r_\randM)$, which requires
drawing more coin tosses from the entropy source.
\end{remark}

\begin{remark}
\label{remark:uniform-refill}
In \cref{algline:uniform-refill}, \Call{Refill}{} ensures that
the upper bound $\randM$ of the uniform state lies in the range $[2^{W-1}, 2^W)$.
While refilling the uniform state is not necessary when $n \leq \randM$,
the algorithm is most entropy efficient when the rejection probability
$r_\randM / \randM \le (n-1)/\randM$ is near zero.
To ensure efficient amortized entropy usage, the call to \Call{Refill}{}
in \cref{alg:uniform} minimizes the
rejection probability $r_\randM / \randM$ by enlarging $\randM$
to the range $[2^{W-1},2^W)$, where $2^{W-1}$ is much larger than any
integer $n$ we typically intend to sample from.
For example, if we only intend to sample from uniforms over $n \in (1, 2^{32})$,
then setting $W = 64$ bits ensures a vanishingly small rejection
probability of less than $2^{-31} < 5\times 10^{-10}$.
Each time an accept/reject decision is made
(i.e., the if-else branch in \cref{alg:uniform}), exactly
$H_{\rm b}(r_\randM / \randM)$ bits of entropy are wasted.
This accept/reject information must be discarded to maintain
the independence invariant \labelcref{invariant}.
The fresh coin tosses created in the call to \Call{Refill}{} are recovered
to the maximum extent possible in
\cref{algline:uniform-recycle-accept,algline:uniform-recycle-reject} of
\cref{alg:uniform} while maintaining the invariants.
\end{remark}

\subsection{A Recycler for General Distributions}

The uniform recycler in the previous section can be used
to sample from any rational discrete distribution.
Consider first generating
$X \sim \Bernoulli(a/b)$ with rational weight $a/b$.
We draw $U \sim \Uniform[0,b)$ and set
$X \gets 1$ if $U < a$, and $X \gets 0$ if $a \le U$.
The randomness recycling rule is based on extracting
the following uniform states:
\begin{itemize}[wide,leftmargin=*]
\item If  $\set{U < a}$, then $U \mid \set{U<a} \sim \Uniform[0,a)$.

\item If  $\set{a \le U}$, then $U \mid \set{a \le U} \sim \Uniform[a,b)$.
\end{itemize}
\begin{align*}
\NiceMatrixOptions{custom-line={letter = I, tikz = {line width=2pt, blue }}}
\begin{NiceArray}[first-col,hlines,vlines]{ccccIcccc}
U & \cA[0] & \cA[1] & \dots & \cA[a-1]
&
\cD[a] & \cD[a+1] & \dots & \cD[b-1]
\CodeAfter
  \UnderBrace[shorten,yshift=4pt]{1-1}{1-4}{X=1}
  \UnderBrace[shorten,yshift=4pt]{1-5}{1-8}{X=0}
\end{NiceArray}\,.
\\[5pt]
\end{align*}

This method admits a direct generalization to sampling
from $\dist \defeq (a_0,\dots,a_{n-1})/A$, where
$A \defeq a_0 + \dots + a_{n-1} > 0$.
Let $F \defeq (A_0, A_1, \dots, A_{n-1}\equiv A)$ denote the cumulative
probabilities, i.e.,
the prefix sums of the integers that define $\dist$, which
are either constructed from the integers $(a_0, \dots, a_{n-1})$
or given as a function $i \mapsto F_i$.
A uniform variate $U \sim \Uniform[0,A)$ is first generated and
then $X \gets \min\set{ i \in [0,n) \mid U < A_i}$
is returned as the sample from $\dist$.
In this setup, the event $\set{X = i}$ is equivalent to the event
$U \in [A_{i-1}, A_i)$, where $A_{-1} \defeq 0$.
The uniformity of $U$ ensures that
$U \mid \set{X=i} \sim \Uniform[A_{i-1},A_i)$, which
gives a uniform random state to recycle.

\begin{align*}
\NiceMatrixOptions{custom-line={letter = I, tikz = {line width=2pt, blue }}}
\begin{NiceArray}[first-col,hvlines]{ccccIccccIcIcccc}
U
&
\cA[0] & \cA[1] & \cA[\dots] & \cA[A_0-1]
&
\cB[A_0] & \cB[A_0+1] & \cB[\dots] & \cB[A_1-1]
&
\cC[\dots]
&
\cD[A_{n-2}] & \cD[A_{n-2}+1] & \cD[\dots] & \cD[A_{n-1}-1]
\\
F
&
\Block{1-4}{A_0} & ~ & ~ & ~
&
\Block{1-4}{A_1} & ~ & ~ & ~
&
\dots
&
\Block{1-4}{A_{n-1}} & ~ & ~ &
\CodeAfter
  \OverBrace[shorten,yshift=4pt]{1-1}{1-4}{a_0}
  \OverBrace[shorten,yshift=4pt]{1-5}{1-8}{a_1}
  \OverBrace[shorten,yshift=4pt]{1-10}{1-13}{a_{n-1}}
  \UnderBrace[shorten,yshift=4pt]{1-1}{2-4}{X=0}
  \UnderBrace[shorten,yshift=4pt]{1-5}{2-8}{X=1}
  \UnderBrace[shorten,yshift=4pt]{1-10}{2-13}{X=n-1}
\end{NiceArray}
\\[5pt]
\end{align*}
%
The top array (which has $A_{n-1}$ elements) shows all possible values
of $U\sim \Uniform[0,A_{n-1})$.
The bottom array (which has $n$ elements) shows the cumulative probabilities
$F$ that determine $X$ from $U$.
\Cref{alg:inversion} shows a generic inversion sampler, which
recycles randomness from the leftover uniform random state in
\cref{prop:split-nonuniform}.
The search on \cref{algline:inversion-search} can be performed in
several different ways, depending on the usage pattern of the sampler.
\begin{itemize}[wide,leftmargin=*]
\item Using binary search on the cumulative probabilities
when they are either given as input or computed as a preprocessing step.
This method is ideal for generating multiple samples from the same
distribution, since its space is linear and search time is logarithmic.
\item Using a linear scan over the array of (cumulative) probabilities.
This method is ideal when just a single sample from the distribution is
required, as it does not require preprocessing.
\item Using a lookup table (\cref{sec:rr-general-lookup}).
This method is ideal when $A$ is small, and it delivers constant search time.
\end{itemize}

\begin{algorithm}
\caption{Inversion sampling with randomness recycling}
\label{alg:inversion}
\begin{algorithmic}[1]
\Require{Positive integers $a_0, \dots, a_{n-1}$ with prefix sums $A_i = \sum_{j=0}^{i} a_j$
  (or a computable function $i \mapsto A_i$).
}
\Ensure{Random sample $X \sim \Discrete(a_0,\dots,a_{n-1})$}
\Procedure{Inversion}{$a_0, \dots, a_{n-1}$}
\State $U \sim \Call{Uniform}{A_{n-1}}$   \Comment{draw uniform variate (\cref{alg:uniform})}
                                          \label{algline:inversion-uniform}
\State Let $X \gets \min\set{i \in [0,n) \mid U < A_i}$
                                          \Comment{search for sample $X$}
                                          \label{algline:inversion-search}
\State $(\randS',\randM') \gets (U-A_{X-1}, a_X)$
                                          \Comment{extract uniform state}
                                          \label{algline:inversion-recycle}
\State $\Call{Recycle}{\randS', \randM'}$ \Comment{recycle the uniform state (\cref{alg:recycle})}
                                          \label{algline:inversion-recycle-call}
\State \Return $X$
\EndProcedure
\end{algorithmic}
\end{algorithm}

\subsection{Entropy Cost of Uniform Sampling}
\label{sec:analysis-uniform}

We first analyze the entropy loss that occurs in a single invocation of
$\Call{Uniform}{}$ (\cref{alg:uniform}).
For any invocation of $\Call{Uniform}{n}$, the call to $\Call{Refill}{}$
on \cref{algline:uniform-refill}
ensures that the global uniform state $(\randS, \randM)$ satisfies
$\randM \geq \randMin \defeq 2^{W-1}$.
The rejection probability is thus $(\randM \bmod n) / \randM \leq (n-1)/\randM \leq (n-1)/\randMin$.
Recall from \cref{sec:rr-uniform} that entropy is only ever lost when
discarding the information of a $\Bernoulli((\randM \bmod n) / \randM)$
coin toss in the accept/reject decision.
The total entropy loss from a given invocation is then the expected number
of trials times the expected entropy of the discarded Bernoulli.

We make this intuition precise.
Consider a generic invocation of $U \sim \Call{Uniform}{n}$, given a
random state $(\randS^{-}, \randM^{-})$ such that
$\randS^{-} \mid \randM^{-} \sim \Uniform[0,\randM^{-})$,
where $n \in [1, \randMin]$.
Let $(\randS^+, \randM^+)$ be the final random state at termination of the
call.
Let $\iFlip$ denote the (random) number of coin tosses
requested from the entropy source, across all invocations of $\Call{Flip}{}$
(through \cref{algline:refill-fresh-flips} of
\hyperref[alg:refill]{\Call{Refill}{}}) made by the call $\Call{Uniform}{n}$.
With these notations, the information loss given initial state
$(\randS^{-}, \randM^{-})$ is
\begin{align}
(\log(\randM^{-}) + \iFlip) - (\log(n) + \log(\randM^{+}))
\label{eq:entropy-loss-uniform}
\end{align}
where $\log(\randM^{-}) + \iFlip$ is the information content consumed
and $\log(n) + \log(\randM^{+})$ is the information content produced
(all input and output variables in \cref{alg:uniform} are uniform).

\begin{proposition}
\label{prop:entropy-loss-uniform}
The entropy loss of any invocation
$\Call{Uniform}{n}$ of \cref{alg:uniform} satisfies
\begin{equation}
\expect{(\log(\randM^{-}) + \iFlip) - (\log(n) + \log(\randM^{+})) \mid \randM^{-}, U}
\leq
\frac{\randMin}{\randMin - n + 1} H_{\rm b}\left( \frac{n-1}{\randMin} \right),
\label{eq:entropy-loss-uniform-bound}
\end{equation}
where the expectation is taken over
the random coin tosses from the entropy source $\Call{Flip}{}$.
\end{proposition}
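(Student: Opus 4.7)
The plan is to decompose the total entropy loss from a \Call{Uniform}{n} invocation across iterations of its while loop, identify the per-iteration loss as a Bernoulli entropy, and sum via a geometric bound; the conditioning on $U$ will turn out to be immaterial. Let $K$ denote the (random) iteration in which the loop terminates, and let $M_k'$ denote the value of $\randM$ immediately after the $k$-th call to \Call{Refill}{}. Since \Call{Refill}{} deterministically expands its input $\randM$ into $[\randMin, 2\randMin)$, and each rejected iteration resets $\randM$ to $r_{M_k'}$, the sequence $(M_1', M_2', \dots)$ is a deterministic function of $M^-$; in particular, $p_k \defeq r_{M_k'}/M_k'$ is deterministic given $M^-$.

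For the per-iteration analysis, I would interpret iteration $k$ as the splitting of $Z_k \sim \Uniform[0, M_k')$ via \cref{prop:split-nonuniform} with weights $(q_{M_k'} n,\, r_{M_k'})$: this produces a Bernoulli accept/reject indicator $Y_k$ with $\Prob(Y_k = \text{reject}) = p_k$, plus a residual uniform state that is either losslessly split by \cref{prop:split-uniform} into the output $U$ and the retained $q_{Z_k}$ on accept, or carried unchanged into iteration $k+1$ on reject. The indicator $Y_k$ is the only quantity discarded (to preserve the invariant \labelcref{invariant}), and comparing the input information content ($\log M_k'$) to the output information content ($\log n + \log q_{M_k'}$ on accept, $\log r_{M_k'}$ on reject) gives $\mathbb{E}[L_k \mid M^-,\, K \geq k] = H_{\rm b}(p_k)$. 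Using $q_{M_k'} \geq 1$ (which holds because $M_k' \geq \randMin \geq n$) together with $r_{M_k'} \leq n-1$ and $M_k' \geq \randMin$ gives the uniform bound
\begin{equation*}
p_k \;\leq\; \min\!\left\{\frac{n-1}{\randMin},\; \frac{1}{2}\right\} \;\defeq\; q^{*},
\end{equation*}
so $H_{\rm b}(p_k) \leq H_{\rm b}(q^{*})$ by monotonicity of $H_{\rm b}$ on $[0, 1/2]$, and the conditional independence of the $Y_k$ given $M^-$ yields $\Prob(K \geq k \mid M^-) \leq (q^{*})^{k-1}$.

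Summing the resulting geometric series gives
\begin{equation*}
\mathbb{E}[L \mid M^-] \;\leq\; \frac{H_{\rm b}(q^{*})}{1 - q^{*}} \;\leq\; \frac{H_{\rm b}((n-1)/\randMin)}{1 - (n-1)/\randMin} \;=\; \frac{\randMin}{\randMin - n + 1}\, H_{\rm b}\!\left(\frac{n-1}{\randMin}\right),
\end{equation*}
where the middle inequality uses $q^{*} \leq (n-1)/\randMin$ together with the monotonicity of $g(p) \defeq H_{\rm b}(p)/(1-p)$ on $(0,1)$. To remove the conditioning on $U$, I would observe that given $M^-$ and $K = k$ the output $U = r_{Z_K}$ is uniformly distributed on $[0, n)$ and independent of the accept/reject history and of the retained residuals $q_{Z_j}$ (for $j \leq K$) which drive $L$; hence $K \perp U \mid M^-$ and $\mathbb{E}[L \mid M^-, U] = \mathbb{E}[L \mid M^-]$. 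The main obstacle will be twofold: verifying the monotonicity of $g$ on $(0,1)$ (a short derivative calculation, since $g'(p) \geq 0$ reduces to $H_{\rm b}'(p)(1-p) + H_{\rm b}(p) \geq 0$), and carefully checking that the per-iteration loss identity $\mathbb{E}[L_k \mid M^-,\, K \geq k] = H_{\rm b}(p_k)$ chains coherently across iterations in a way that respects the invariant \labelcref{invariant} and the deterministic evolution of $(M_k')$.
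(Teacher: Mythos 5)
Your proposal is correct and follows essentially the same route as the paper's proof: decompose the loss over the loop iterations, identify the per-iteration expected loss as the binary entropy $H_{\rm b}$ of the rejection probability (which is deterministic given $\randM^{-}$), bound the number of iterations via a geometric series with ratio at most $(n-1)/\randMin$, and discharge the conditioning on $U$ through the conditional independence of $U$ from the iteration count and the state sequence given $\randM^{-}$. The only (harmless) difference is cosmetic: you cap the rejection probability at $\min\{(n-1)/\randMin,\,1/2\}$ and invoke monotonicity of $H_{\rm b}(p)/(1-p)$, whereas the paper bounds the per-iteration loss by $H_{\rm b}\left((n-1)/\randMin\right)$ directly and multiplies by the expected iteration count $\randMin/(\randMin-n+1)$, arriving at the same final bound.
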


\begin{proof}
\newcommand{\randMi}[1][j]{\randM^-_{#1}}
\newcommand{\randMm}[1][j]{\randM^\star_{#1}}
\newcommand{\randMo}[1][j]{\randM^+_{#1}}
\newcommand{\randSi}[1][j]{\randS^-_{#1}}
\newcommand{\randSm}[1][j]{\randS^\star_{#1}}
\newcommand{\randSo}[1][j]{\randS^+_{#1}}
This entropy loss bound is justified informally
in \cref{sec:rr-uniform}.
Here, we prove the bound using a formal analysis of
\cref{alg:uniform}.
Let $J \ge 1$ denote the random number of loops in the invocation
$\Call{Uniform}{n}$ until a sample is returned.
In each loop iteration $j$,
the global uniform state is manipulated in two places:
the call to $\Call{Refill}{}$ on \cref{algline:uniform-refill}
at the start of the loop; and
the recycling on
\cref{algline:uniform-recycle-accept} or \cref{algline:uniform-recycle-reject}
at the end the of the loop.
To track the state, we define for $j \in [1,J]$
\begin{itemize}[wide,noitemsep]
\item $(\randSi, \randMi)$ is the global uniform state at the start of iteration $j$;
\item $(\randSm, \randMm)$ is the global uniform state after the call to $\Call{Refill}{}$;
\item $(\randSo, \randMo)$ is the global uniform state at the end of iteration $j$.
\end{itemize}
These quantities satisfy
\begin{align}
\randMi[1] = \randM^{-}
&&
\randMi[j+1] = \randMo[j]\,\, (j \in [1,J-1])
&&
\randMo[J] = \randM^{+}.
\label{eq:entropy-loop-properties}
\end{align}
Let $\iFlip_j$ for $j \in [1,J]$ denote the number of fresh coin tosses
at iteration $j$ of $\Call{Uniform}{}$, through
\cref{algline:refill-fresh-k} of $\Call{Refill}{}$.
\Cref{prop:merge-uniform} implies the information is preserved:
\begin{align}
\log(\randMm) = \log(\randMi) + \iFlip_j && (j \in [1,J]).
\label{eq:entropy-mid-in-toss}
\end{align}

The accept-reject step at iteration $j$ is
\begin{enumerate}[wide=2\parindent,widest={(Accept)},leftmargin=*]
\item[(Accept)] If $\randSm < n \floor{\randMm/n}$ then
$\randMo = \floor{\randMm/n}$; and the loop terminates
and a sample from $\Uniform[0,n)$ is successfully returned.
\item[(Reject)] Else $\randMo = \randMm \bmod n$; and the loop continues.
\end{enumerate}
It follows that the output information $O_j$ at the end of iteration $j$ satisfies
\begin{equation}
O_j = \left\lbrace\begin{array}{lll}
  \log(\floor{\randMm/n}) + \log(n)
    & \mbox{with probability } \displaystyle \frac{n \floor{\randMm/n}}{\randMm}
    &\mbox{(accept case)}
  \\
  \log(\randMm \bmod n)
    & \mbox{with probability } \displaystyle 1-\frac{n \floor{\randMm/n}}{\randMm}
    & \mbox{(reject case).}
\end{array}\right.
\label{eq:expected-output-entropy-step-j}
\end{equation}

Applying \cref{eq:entropy-mid-in-toss}, the expected entropy loss at iteration $j$ is then
\begin{align}
\expect{ (\log(\randMi) + \iFlip_j) - O_j \mid M^{-}} = \expect{(\log(\randMm) - O_j) \mid \randM^{-}}.
\end{align}
Applying \cref{eq:expected-output-entropy-step-j} gives
\begin{align}
&\expect{\log(\randMm) - O_j \mid \randM^{-}}
\notag
\\
&=
\expect{
  \log(\randMm)
  -
  \frac{n \floor{\randMm/n}}{\randMm}
  \left(\log(\floor{\randMm/n})+ \log(n) \right)
  -
  \frac{\randMm \bmod n}{\randMm}
  \left(\log(\randMm \bmod n)\right)
  ~\Big\vert~ \randM^{-}
  }
\\
&=
\expect{
\frac{n \floor{\randMm/n}}{\randMm}
  \left(\log\left(\frac{\randMm}{n\floor{\randMm/n}} \right)\right)
  +
  \frac{\randMm \bmod n}{\randMm}
  \left(\log\left(\frac{\randMm}{\randMm \bmod n} \right)\right)
  ~\Big\vert~\randM^{-}
  }
\\
&=
\expect{H_{\rm b} \left( \frac{\randMm \bmod n}{\randMm} \right)~\Big\vert~\randM^{-}}
\\
&\le
H_{\rm b} \left( \frac{(n-1)}{\randMin} \right).
\label{eq:iteration-bound-entropy-loss}
\end{align}

We now argue that the sum of these entropy losses over the $J$
random iterations precisely corresponds
to the total entropy loss \cref{eq:entropy-loss-uniform}, as follows:
\begin{align}
&\sum_{j=1}^{J}\left(\left(\log(\randMi) + \iFlip_j\right) - O_j\right) \notag \\
&= \log(\randMi[1]) + \left(\iFlip_1 + \dots \iFlip_j\right) + \sum_{j=2}^J(\log(\randMi)-\log(\randMo[j-1])) - (\log(n) + \log(\randMo[J])) \\
&= \left(\log(\randM^{-}) + \iFlip \right) - (\log(n) + \log(\randM^{+})),
\end{align}
where we have used the properties in \cref{eq:entropy-loop-properties}
and the fact that $O_j$ is precisely in the \textit{reject case}
of \cref{eq:expected-output-entropy-step-j} for $j \in [1,J-1]$
(i.e., $\randMo = \randMm \bmod n = \randMi[j+1]$)
by the definition of $J$.

Finally, since the acceptance probability in \cref{eq:expected-output-entropy-step-j}
is at least $(\randMin -n + 1) / \randMin$, the expected number of iterations is
$\expect{J} \le \randMin/(\randMin -n + 1)$, which combined with
\cref{eq:iteration-bound-entropy-loss} gives the result:
\begin{align}
\expect{\sum_{j=1}^{J}\left( \log(\randMi) + \iFlip_j - O_j\right) ~\Big\vert~\randM^{-}, U}
\label{eq:conditional-entropy-loss-given-U}
&\le
\expect{\sum_{j=1}^{J}H_{\rm b} \left( \frac{(n-1)}{\randMin} \right) ~\Big\vert~\randM^{-}, U }
\\
&=
\expect{J ~\Big\vert~\randM^{-}, U} H_{\rm b} \left( \frac{(n-1)}{\randMin} \right)
\\
&\le
\frac{\randMin}{(\randMin -n + 1)} H_{\rm b} \left( \frac{(n-1)}{\randMin} \right).
\end{align}
In \cref{eq:conditional-entropy-loss-given-U}, conditioning on the returned
uniform $U$ is justified by the following facts:
\begin{itemize}[noitemsep]
\item Conditioned on $\randM^{-}$, the number of loop
      iterations $J$ is independent of the return value $U$.

\item The initial state $\randMi[1] = \randM^{-}$ is vacuously
      conditionally independent of $U$ given $\randM^{-}$.

\item For each $j \in [1,J]$, coin toss counter
      $\iFlip_j = W - (\floor{\log_2(\randMi[j])} + 1)$
      (cf.~\cref{algline:refill-fresh-k} of \cref{alg:refill})
      is determined by $\randMi$.
      For $j \in [1,J]$,
      the output entropy $O_j$ and state bound $\randMo[j]$
      are determined by $\randMi$, $\iFlip_j$, and $J$
      (cf.~\cref{eq:expected-output-entropy-step-j,eq:entropy-mid-in-toss}).
      Lastly, $\randMo[j] = \randMi[j+1]$ by definition for $j \in [1,J-1]$.

\item By induction, for all $j \in [1,J]$, the quantities
      $\randMi$, $\iFlip_j$, and $O_j$, are conditionally
      independent of $U$ given $\randM^{-}$.
      The second point is the base case,
      the third point provides the inductive step,
      and the first point handles the random stopping time $J$
      which is used in each inductive step.
\end{itemize}
\end{proof}

\begin{remark}
In practice, a typical setting to minimize entropy loss
is letting $\randMin=2^{63}$
(i.e., $W=64$; $M$ is guaranteed to fit in an \texttt{unsigned long long int} word)
and $n < 2^{32}$ (guaranteed to fit in an \texttt{unsigned long int} word).
From \cref{prop:entropy-loss-uniform}, these values
guarantee an expected entropy loss that is less than
$2 \times 10^{-8}$ bits,
and a similar bound parametric in $n$ is
$5 \times 10^{-10} \log(n)$ bits.
\end{remark}

\subsection{Entropy Cost of General Sampling}
\label{sec:analysis-nonuniform}

We now analyze the entropy loss that occurs in a single invocation
of $\Call{Inversion}{}$ (\cref{alg:inversion}).
Any invocation $\Call{Inversion}{\dist}$
uses exactly \textit{one call}
$U \gets \Call{Uniform}{n}$
(on \cref{algline:inversion-uniform})
It then uses \cref{prop:split-nonuniform} to reversibly decompose $U$ into
a sample $X$ from the desired target distribution together with a leftover
uniform state $(\randS', \randM')$ that is recycled into the global uniform
state.
Therefore, the information loss should correspond precisely to that of
generating the uniform $U$ in \cref{prop:entropy-loss-uniform}.

\begin{figure}[h]
\centering
\begin{tikzpicture}[thick]
\tikzset{io/.style={minimum width=1cm, minimum height=.75cm,draw=none}};
\tikzset{op/.style={minimum width=1cm, minimum height=1.5cm,draw=black}};

\node[name=uniform, op, align=center]{$\Call{Uniform}{}$\\\cref{algline:inversion-uniform}};
\node[name=search, right=3 of uniform, op, align=center]{$\Call{Search}{}$\\\cref{algline:inversion-search}};
\node[name=recycle, right=3 of search.south east, anchor=south west, io, draw, align=center]{$\Call{Recycle}{}$\\\cref{algline:inversion-recycle}};

\node[name=initial-state, io, left= 1 of uniform.north west, anchor=north east]{$(\randS^{-}, \randM^{-})$};
\node[name=initial-flips, io, left= 1 of uniform.south west, anchor=south east]{$\Call{Flip}{\iFlip}$};
\draw[-latex] (initial-state.east) -- (initial-state.east -| uniform.west);
\draw[-latex] (initial-flips.east) -- (initial-flips.east -| uniform.west);

\node[name=uniform-u, io, right= .5 of uniform.north east, anchor=north west]{$U$};
\node[name=uniform-state, io, right= .5 of uniform.south east, anchor=south west]{$(\randS^{\star}, \randM^{\star})$};

\draw[-latex] (uniform.east |- uniform-u.west) -- (uniform-u.west);
\draw[-latex] (uniform.east |- uniform-state.west) -- (uniform-state.west);

\draw[-latex] (uniform-u.east) -- (uniform-u.east -| search.west);

\node[name=search-x, io, right= .5 of search.north east, anchor=north west]{$X$};
\node[name=search-state, io, right= .5 of search.south east, anchor=south west]{$(\randS^{'}, \randM^{'})$};

\draw[-latex] (search.east |- search-x.west) -- (search-x.west);
\draw[-latex] (search.east |- search-state.west) -- (search-state.west);

\draw[-latex] (search-state.east) -- (search-state.east -| recycle.west);

\node[name=recycle-state, io, right= .5 of recycle.south east, anchor=south west]{$(\randS^{+}, \randM^{+})$};
\draw[-latex] (recycle.east |- recycle-state.west) -- (recycle-state.west);

\draw[-latex] (uniform-state.south) -- ([yshift=-.4cm]uniform-state.south) -- ([yshift=-.4cm]uniform-state.south -| recycle.south) -- (recycle);
\end{tikzpicture}
\caption{Information flow in $\Call{Inversion}{}$ (\cref{alg:inversion}) for sampling general distributions.}
\label{fig:inversion-information}
\end{figure}

We make this intuition precise.
Consider a generic invocation of $X \sim \Call{Inversion}{\dist}$, given a
random state $(\randS^{-}, \randM^{-})$ such that
$\randS^{-} \mid \randM^{-} \sim \Uniform[0,\randM^{-})$.
Let $n \in [1, \randMin]$ denote the argument to $\Call{Uniform}{}$
on \cref{algline:inversion-uniform}.
Write $(\randS^\star, \randM^\star)$ to be the uniform state after
\cref{algline:inversion-uniform}
and let $\iFlip$ denote the total number of coin tosses requested by $\Call{Uniform}{}$.
Let $(\randS', \randM')$ be as defined in \cref{algline:inversion-recycle}
and write
$(\randS^+, \randM^+)$ for the final random state after
\cref{algline:inversion-recycle-call}, which is the result of applying
\cref{prop:merge-uniform} to
$(\randS^\star, \randM^\star)$ and $(\randS', \randM')$.
\Cref{fig:inversion-information} shows the information flow in
$\Call{Inversion}{}$ and \cref{prop:entropy-loss-nonuniform} formalizes
the entropy loss bound with respect to this diagram.

\begin{proposition}
\label{prop:entropy-loss-nonuniform}
The entropy loss of any invocation $\Call{Inversion}{\dist}$ of \cref{alg:inversion}
satisfies
\begin{equation}
\expect{(\log(\randM^{-}) + \iFlip) - \left(\log\left(\frac{1}{\dist(X)}\right) + \log(\randM^{+})\right)~\Bigg\vert~\randM^{-}, U, X}
\leq
\frac{\randMin}{\randMin - n + 1} H_{\rm b}\left( \frac{n-1}{\randMin} \right),
\label{eq:entropy-loss-nonuniform-bound}
\end{equation}
where the expectation is taken over
the random coin tosses from the entropy source $\Call{Flip}{}$.
\end{proposition}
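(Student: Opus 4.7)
The plan is to reduce \cref{prop:entropy-loss-nonuniform} directly to the already-proved \cref{prop:entropy-loss-uniform} by arguing that every step of $\Call{Inversion}{}$ after the internal call to $\Call{Uniform}{}$ is information-preserving, so no entropy is lost beyond that already accounted for in \cref{prop:entropy-loss-uniform}.

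First, I will set up the information bookkeeping along the flow in \cref{fig:inversion-information}. The only source of coin tosses is the single call $U \gets \Call{Uniform}{n}$ on \cref{algline:inversion-uniform}, where $n = A_{n-1}$. Writing $(\randS^\star,\randM^\star)$ for the global state immediately after that call, \cref{prop:entropy-loss-uniform} applied with initial state $(\randS^{-},\randM^{-})$ and target range $n$ gives
\begin{equation*}
\expect{(\log(\randM^{-}) + \iFlip) - (\log(n) + \log(\randM^{\star}))\,\big\vert\, \randM^{-}, U}
\leq \frac{\randMin}{\randMin - n + 1} H_{\rm b}\!\left(\frac{n-1}{\randMin}\right).
\end{equation*}

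Next, I will show that the remaining lines of \cref{alg:inversion} convert the pair $(U,(\randS^\star,\randM^\star))$ into the pair $(X,(\randS^+,\randM^+))$ bijectively, so information content is preserved pointwise. The search step $X \gets \min\{i : U < A_i\}$ together with the extraction $(\randS',\randM') \gets (U - A_{X-1}, a_X)$ is exactly the inverse direction of \cref{prop:split-nonuniform}: by that proposition $X \sim \Discrete(a_0,\dots,a_{n-1}) = \dist$ and $\randS'\mid X \sim \Uniform[0,a_X)$, and the map $U \mapsto (X, \randS')$ is a bijection onto $\bigcup_{i=0}^{n-1}(\{i\}\times[0,a_i))$. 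Hence, by \cref{eq:information-content-preserved},
\begin{equation*}
\log(n) = \log\!\left(\frac{1}{\dist(X)}\right) + \log(a_X) = \log\!\left(\frac{1}{\dist(X)}\right) + \log(\randM').
\end{equation*}
Because $(\randS',\randM')$ was extracted only from $U$, it is independent of $(\randS^\star,\randM^\star)$ given $\randM'$, so the call $\Call{Recycle}{\randS',\randM'}$ on \cref{algline:inversion-recycle-call} satisfies the precondition of \cref{prop:merge-uniform}, yielding $\randM^+ = \randM^\star \randM'$ and hence $\log(\randM^+) = \log(\randM^\star) + \log(\randM')$.

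Combining these two identities gives the pointwise equality
\begin{equation*}
\log(n) + \log(\randM^{\star}) = \log\!\left(\frac{1}{\dist(X)}\right) + \log(\randM^{+}),
\end{equation*}
so the entropy loss of $\Call{Inversion}{}$ in \cref{eq:entropy-loss-nonuniform-bound} equals that of the internal $\Call{Uniform}{}$ call sample-by-sample. Taking the conditional expectation given $(\randM^{-}, U, X)$ and noting $X$ is a deterministic function of $U$, the conditioning on $X$ is redundant, and the bound from \cref{prop:entropy-loss-uniform} transfers verbatim.

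The main obstacle, albeit a mild one, is checking the independence needed to invoke \cref{prop:merge-uniform} when recycling $(\randS',\randM')$: since $(\randS',\randM')$ is a measurable function of $U$ and the data defining $\dist$, and $U$ is the output of $\Call{Uniform}{}$ whose residual state $(\randS^\star,\randM^\star)$ is conditionally independent of $U$ given $\randM^\star$ (this is essentially the contract \labelcref{contract} established for $\Call{Uniform}{}$), the required conditional independence of $(\randS',\randM')$ and $(\randS^\star,\randM^\star)$ given $(\randM',\randM^\star)$ holds. Once that is verified, the proposition follows as a direct corollary of \cref{prop:entropy-loss-uniform}.
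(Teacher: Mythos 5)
Your proposal is correct and follows essentially the same route as the paper: apply \cref{prop:entropy-loss-uniform} to the single internal call to $\Call{Uniform}{}$, use \cref{prop:split-nonuniform} and \cref{prop:merge-uniform} to obtain the pointwise identities $\log(n) = \log(1/\dist(X)) + \log(\randM')$ and $\log(\randM^+) = \log(\randM^\star) + \log(\randM')$, conclude that the output information is unchanged, and justify the extra conditioning on $X$ by noting it is a deterministic function of $U$. Your additional check of the independence precondition for the recycling step is a harmless elaboration beyond what the paper writes explicitly.
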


\begin{proof}
From \cref{prop:split-nonuniform}, the variable $U$ is reversibly split
into $(X, \randS')$, which via~\cref{eq:information-content-preserved}
preserves the information content pointwise:
\begin{align}
\log{n} = \log\left( \frac{1}{\dist(x)} \right) + \log(\randM').
\end{align}
From \cref{prop:merge-uniform}, the variables $((\randS^\star,\randM^\star), (\randS',\randM'))$
are reversibly merged into $(\randS^+, \randM^+)$, which gives
\begin{align}
\log(\randM^+) = \log(\randM^\star) + \log(\randM').
\end{align}
These equalities establish that $\log{n} + \log{M^\star} = \log(1/\dist(x)) + \log{M^+}$,
where the expression on the left-hand-side is precisely the output entropy in
\cref{eq:entropy-loss-uniform-bound}.
The conclusion follows from \cref{prop:entropy-loss-uniform}, where the
additional condition on $X$ in \cref{eq:entropy-loss-nonuniform-bound} is
permitted as $X$ is a \textit{deterministic} function of $U$
(cf.~\cref{algline:inversion-search} of \cref{alg:inversion}).
\end{proof}

\subsection{Entropy Cost of Generating a Random Sequence}
\label{sec:analysis-seq}

Toward proving \cref{theorem:entropy-cost},
we analyze the overall entropy cost of using $\Call{Inversion}{}$
to generate an output sequence $\bX = (X_i)_{i \ge 1}$,
given an arbitrary distribution sequence $\dists = (p_i)_{i \ge 1}$.

\begin{proposition}
\label{prop:entropy-waste-total}
The entropy cost
of a sequence of calls of $\Call{Inversion}{\dist_i}$
to generate $(X_i \sim \dist_i)_{i \ge 1}$ satisfies
\begin{align}
\expect{\tosses{k}{\Flips, \dists} \mid X_1=x_1, \dots, X_k = x_k} <
\sum_{i=1}^{k}\log\left(\frac{1}{\dist_i(x_i)}\right)
+
k \frac{\randMin}{\randMin-d+1} H_{\rm b}\left( \frac{d-1}{\randMin} \right) + W,
\label{eq:entropy-waste-total}
\end{align}
where $\set{X_1 = x_1, \dots, X_k = x_k}$ is a positive probability
event and $d \in [1, \randMin]$ is an upper bound on the argument
$n$ in any invocation of $\Call{Uniform}{n}$
(i.e., $\dist_{\leq k} \in (\DAlphabet_d)^k$).
\end{proposition}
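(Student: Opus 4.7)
The strategy is to apply the single-round bound from \cref{prop:entropy-loss-nonuniform} once per round and telescope across all $k$ rounds. Index each round $i \in [1,k]$, letting $(\randS_i,\randM_i)$ denote the global uniform state at the end of round $i$ (with $\randM_0=1$), $U_i$ the uniform variate drawn on \cref{algline:inversion-uniform} inside the $i$-th call to $\Call{Inversion}{\dist_i}$, and $\iFlip_i$ the number of fresh coin tosses consumed during round $i$, so that $\tosses{k}{\Flips,\dists}=\sum_{i=1}^k \iFlip_i$. Since $\dist_i \in \DAlphabet_d$ forces the argument $n_i$ of each internal $\Call{Uniform}{n_i}$ to satisfy $n_i\le d$, \cref{prop:entropy-loss-nonuniform} yields, uniformly in $i$,
\begin{align*}
\expect{A_i \,\big|\, \randM_{i-1},\,U_i,\,X_i} \;\le\; B \;\defeq\; \frac{\randMin}{\randMin-d+1}\,H_{\rm b}\!\left(\frac{d-1}{\randMin}\right),
\end{align*}
where $A_i \defeq \iFlip_i + \log\randM_{i-1} - \log(1/\dist_i(X_i)) - \log\randM_i$.

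\textbf{Telescoping and boundary handling.} Summing $A_i$ over $i$ collapses the $\log\randM_i$ terms, leaving
\begin{align*}
\sum_{i=1}^k A_i \;=\; \tosses{k}{\Flips,\dists} \;-\; \sum_{i=1}^k \log\!\left(\frac{1}{\dist_i(X_i)}\right) \;+\; \log\randM_0 \;-\; \log\randM_k.
\end{align*}
Using $\log\randM_0 = 0$ and the almost-sure bound $\log\randM_k \le \log(2^W-1) < W$ (the state is stored in a $W$-bit word), rearranging gives, almost surely,
\begin{align*}
\tosses{k}{\Flips,\dists} \;<\; \sum_{i=1}^k \log\!\left(\frac{1}{\dist_i(X_i)}\right) \;+\; \sum_{i=1}^k A_i \;+\; W.
\end{align*}
Taking conditional expectations given the event $\set{X_1=x_1,\dots,X_k=x_k}$, and showing $\expect{A_i \mid X_1,\dots,X_k}\le B$ for each $i$, then yields \cref{eq:entropy-waste-total} by linearity.

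\textbf{Coarsening the conditioning --- the main obstacle.} The principal subtlety is passing from the per-round bound under the fine conditioning $\sigma(\randM_{i-1},U_i,X_i)$ to a bound under the coarser conditioning $\sigma(X_1,\dots,X_k)$. I would invoke the tower property through the intermediate $\sigma$-algebra $\sigma(\randM_{i-1},U_i,X_1,\dots,X_k)$, reducing the task to the conditional-independence claim $A_i \perp (X_j)_{j\neq i} \,\big|\, (\randM_{i-1},U_i,X_i)$. This claim in turn rests on the recycling invariant \labelcref{invariant} (enforced by contract \labelcref{contract}) and the fresh-coin guarantee \cref{eq:ors-fresh}: past outputs $X_{<i}$ affect round $i$ only through the state $\randM_{i-1}$, which is already in the conditioning, while future outputs $X_{>i}$ are generated from the post-round state $\randM_i$ together with independent fresh coin tosses, hence are conditionally independent of $A_i$ given the round-$i$ variables. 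Once the independence is established, the tower property propagates the per-round bound $B$ through the coarser conditioning; combined with the telescoping step above and the slack $W$ absorbed from $\log\randM_k$, this yields \cref{eq:entropy-waste-total}.
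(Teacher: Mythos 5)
Your proposal is correct and follows essentially the same route as the paper's proof: apply the per-round bound of \cref{prop:entropy-loss-nonuniform}, telescope the $\log\randM$ state terms using $\randM^{-}_{i+1}=\randM^{+}_i$ with $\randM_0=1$ and $\randM_k<2^W$ contributing the additive $W$, and coarsen the conditioning to $\set{X_1=x_1,\dots,X_k=x_k}$ via the tower property together with the conditional independence supplied by the invariant \labelcref{invariant} and the fresh-coin guarantee \cref{eq:ors-fresh}. The only (inessential) difference is that you telescope pointwise before taking expectations, whereas the paper telescopes inside the conditional expectations, justifying the cancellation by the independence of $\randM^{+}_i$ from the conditioning event given $(\randM^{-}_i,X_i)$ --- the same facts your argument invokes.
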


\begin{proof}
Write $\tosses{k}{\Flips, \dists} = \iFlip_1 + \dots + \iFlip_k$ as the sum
of coin tosses used in each round and denote the conditioning
event as $E \defeq \set{X_1 = x_1, \dots, X_k = x_k}$.
Then
\begin{align}
&\expect{\tosses{k}{\Flips, \dists} \mid E}
= \expect{\iFlip_1 + \dots \iFlip_k \mid E}
= \sum_{i=1}^{k}\expect{\iFlip_i \mid E} \\
&= \sum_{i=1}^{k}\expect{ \expect{\iFlip_i \mid \randM_i^{-}, U_i, X_i=x_i} \mid E}
  \label{eq:entropy-waste-total-pf-tower-property} \\
&\le \sum_{i=1}^{k}
  \left[\log\left(\frac{1}{\dist(x_i)}\right)
  +
  \expect{\expect{\log(\randM^{+}_{i}) -\log(\randM^{-}_{i}) \mid \randM^{-}_{i}, X_i=x_i} \mid E}
  +
  \frac{\randMin}{\randMin - n + 1} H_{\rm b}\left( \frac{n-1}{\randMin} \right)
  \right]
  \label{eq:entropy-waste-total-pf-tower-bound}
  \\
&= \sum_{i=1}^{k}\log\left(\frac{1}{\dist_i(x_i)}\right)
  +
  k \frac{\randMin}{\randMin-d+1} H_{\rm b}\left( \frac{d-1}{\randMin} \right)
  +
  \sum_{i=1}^{k}\expect{\expect{\log(\randM^{+}_{i}) -\log(\randM^{-}_{i}) \mid X_i=x_i, \randM^{-}_{i}} \mid E}.
  \label{eq:entropy-waste-total-pf-telescope}
\end{align}
\Cref{eq:entropy-waste-total-pf-tower-property} uses the tower property of conditional
expectation, where the fresh coin guarantee guarantee~\cref{eq:ors-fresh} ensures that
$\iFlip_i$ is conditionally independent of all events in $E$ except
$\set{X_i=x_i}$ given the initial state $\randM^{-}_i$.
\Cref{eq:entropy-waste-total-pf-tower-bound} applies the bound from
\cref{eq:entropy-loss-nonuniform-bound} and monotonicity of conditional
expectation.
We next analyze the term~\cref{eq:entropy-waste-total-pf-telescope}, recalling
that (by definition) $\randM^{-}_1 = \randM_0$ and $\randM^{-}_i = \randM^{+}_{i-1}$
for $i=2,\dots,k$.
The shared terms for $i=1,\dots,k-1$ in the telescoping series are
\begin{align}
& \expect{\expect{\log(\randM^+_i) \mid \randM^-_i, X_i=x_i} \mid E} = \expect{\log(\randM^+_{i}) \mid E} = \expect{\log(\randM^-_{i+1}) \mid E},\\
& \expect{\expect{-\log(\randM^-_{i+1}) \mid \randM^-_{i+1}} \mid E}  = \expect{-\log(\randM^-_{i+1}) \mid E}.
\end{align}
where the first equality follows from the fact that
$\randM^+_{i}$ is independent of $E$ given $\randM^-_i$ and $X_i$.
The surviving terms from the final sum in \cref{eq:entropy-waste-total-pf-telescope} are
\begin{align}
\expect{ \expect{ \log(\randM^+_{k}) \mid \randM^-_{k}, X_k=x_k } - \expect{ \log(\randM^{-}_1) \mid \randM^-_{1}, X_1 = x_1} \mid E} < W,
\label{eq:entropy-waste-total-telescope-final}
\end{align}
where use the fact that $\randM^{-}_1 = \randM_0 = 1$ and at every step $\randM^{+} < 2^W$ surely.
The conclusion~\cref{eq:entropy-waste-total} follows.
\end{proof}

As the bound involving
$\expect{\tosses{k}{\Flips, \dists} \mid X_{\leq k}=x_{\leq k}} - \sum_{i=1}^{k}\log(1/\dist_i(x_i))$
in \cref{eq:entropy-waste-total} holds for \textit{every} distribution sequence
$\dists$, an identical bound on
$\expect{\tosses{k}{\Flips, \Dists} \mid X_{\leq k}=x_{\leq k}, \Dist_{\leq k}=\dist_{\leq k}} - \sum_{i=1}^{k}\log(1/\dist_i(x_i))$
holds for any random distribution sequence $\Dists$ satisfying \cref{eq:ors-fresh},
as noted in \cref{eq:main-thm-cost-random}.

\begin{remark}
The same result holds for any randomness recycling sampler that generates
$X \sim \dist$ by calling $\Call{Uniform}{}$ once, such as
\cref{alg:lookup,alg:alias,alg:ddg} in \cref{sec:rr-general}.
\end{remark}

\subsection[Proof of \crefnameof{theorem}~\ref{theorem:entropy-cost}]{Proof of \cref{theorem:entropy-cost}}
\label{sec:analysis-theorem}

This section proves the main result, which is restated below.

\EntropyCost*

\begin{algorithm}
\caption{Online random sampling algorithm witnessing \cref{theorem:entropy-cost}}
\label{alg:loop}
\begin{algorithmic}[1]
\Require{Target distributions $\Dists = (\Dist_i)_{i \ge 1}$,
  with $\Dist_i$ presented at step $i \ge 1$.}
\Ensure{Sequence of output samples $X_i \sim \Dist_i$, for $i \ge 1$.}
\Procedure{RandomSequence}{}
\For{$i \gets 1$ to $\infty$}
  \State \labelcref{step:observe} Receive the next target distribution $\Dist_i$
  \State \labelcref{step:generate} $X_i \gets \Call{Inversion}{P_i}$ \Comment{(\cref{alg:inversion})}
  \State \textbf{yield} $X_i$
\EndFor
\EndProcedure
\end{algorithmic}
\end{algorithm}

\begin{proof}
\Cref{alg:loop} is a witness to the online algorithm from the theorem statement,
making repeated use of the inversion sampler in \cref{alg:inversion}.
The $\Call{Recycle}{\randS', \randM'}$, $\Call{Refill}{{}}$,
and $\Call{Uniform}{n}$ methods each use $O(W)$ space.
The global uniform state $(\randS, \randM)$,
which is the only auxiliary state carried over between rounds,
uses $2W$ bits of space.
\Cref{alg:uniform,alg:inversion} will never overflow the
$W$-bit size of the global uniform state
$(\randS, \randM)$.
After calling $\Call{Uniform}{n}$,
the global uniform state has a bound of $\randM \leq (2^W-1)/n$.
Any corresponding recycled uniform state $(\randS', \randM')$ always
satisfies $\randM' \leq n$.
To arrive at the entropy bound in \cref{theorem:entropy-cost},
by \cref{prop:entropy-waste-total} it suffices to find $W \ge 1$ such that
\begin{align}
\frac{1}{1-(d-1)/2^{W-1}} H_{\rm b}\left( \frac{d-1}{2^{W-1}} \right)
\leq \varepsilon.
\end{align}
Let $h(\delta) \defeq H_{\rm b}(\delta) / (1 - \delta)$.
Then $h(\delta) \sim \delta \log(1/\delta)$ as $\delta \to 0$,
so
\begin{align}
W_{d, \varepsilon} \defeq 1 + \ceil{\log((d-1) / h^{-1}(\varepsilon))}
\label{eq:W-d-epsilon}
\end{align}
satisfies $W_{d, \varepsilon} \sim \log(d / \varepsilon)$
as $d / \varepsilon \to \infty$,
which matches the claimed entropy-space tradeoff.
\end{proof}

\begin{remark}
The exact inverse of $h(\delta)$ requires solving a transcendental equation,
which is generally not possible in closed form, but it can be approximated
with a series expansion, which may look roughly like
\begin{equation}
W_{d,\varepsilon} \approx 1+\ceil{\log(d) + \sum_{m=1}^{\log^*(1/\varepsilon)} \log^{\circ{m}}(1/\varepsilon)}.
\end{equation}
\end{remark}

\begin{remark}
We assume the rational target distribution $\dist$ is given
as either a list of integer weights or pointwise access to its cumulative
distribution function, which enables linear or logarithmic time complexity.
A detailed discussion of time complexity lower bounds for random sampling under
different representations of the target distribution---e.g., as probability
mass functions or cumulative probabilities, and possibly as computable
reals---is available in \citet{Trevisan2010,yamakami1999}.
\Citet{bringmann2017} discuss the time complexity of discrete sampling algorithms
in the real RAM model for a static array of sorted or unsorted probabilities.
\end{remark}


\section{Randomness Recyclers for Uniform Distributions}
\label{sec:rr-uniform}

In this section we compare the randomness recycler
\cref{alg:uniform} for discrete uniforms to previous uniform samplers
(\crefrange{sec:rr-uniform-fdr}{sec:rr-uniform-rust}) and then present
algorithmic extensions that leverage the speed of word-level operations on modern CPUs
(\cref{sec:rr-uniform-optimized}).

\subsection{Comparison to the \texorpdfstring{\citeauthor{lumbroso2013}}{Lumbroso} Uniform Sampler}
\label{sec:rr-uniform-fdr}

The Fast Dice Roller (FDR) algorithm from \citet[p.~4]{lumbroso2013}
for generating a discrete uniform
can be understood as a special case of \cref{alg:uniform}.
It uses a different strategy in the call to
$\Call{Refill}{\randS,m}$ on \cref{algline:uniform-refill}
than our strategy described in \cref{remark:uniform-refill}.
The ``randomness recycling'' interpretation of FDR in
\citet{huber2024} shows that the algorithm
refills the uniform random state by repeatedly drawing fresh
random bits until $\randM \geq n$, rather than our approach of always
ensuring $\randM \in [2^{W-1}, 2^W)$.
FDR is entropy optimal in the sense of \citeauthor{knuth1976} when taking
a \textit{single sample}, and coincides with the suboptimal baseline
from \cref{sec:intro-existing-suboptimal} when na\"ively used to generate a
sequence of samples.

\Citet[Section 3.1]{lumbroso2013} observes that,
for generating an i.i.d.~sequence $U_0, U_1, \dots, U_{k-1} \sim \Uniform[0,n)$,
it is possible to first
generate $Y = \Uniform[0,n^k)$ entropy optimally using FDR and then
recover the individual $U_i$ using the decomposition of $Y$ in base-$n$:
\begin{equation}
Y = U_{k-1} n^{k-1} + \dots + U_0 n^0.
\label{eq:lumbroso-nary-base}
\end{equation}
This technique coincides with batched baseline
from~\cref{eq:ky-batched}.
Because each distribution $P_i$ is uniform, the space
complexity of storing $P_{1:k}$ does not grow exponentially.
However, computing the decomposition~\cref{eq:lumbroso-nary-base}
requires integer division operations,
which becomes expensive for large $n^k$.
The method also requires prespecifying $k$ and the
ranges of the $U_i$ beforehand.
Sequential sampling with a fixed batch size $2/\varepsilon$ allows FDR to achieve
the asymptotic entropy rate of $\entropyRV{U_0,\dots, U_{k-1}}/k + \varepsilon$ bits per sample,
but the $O(1/\varepsilon)$ computational overhead exceeds the
$O(\log(1/\varepsilon))$ achieved by \cref{alg:uniform} using randomness recycling.

The analysis of FDR reveals that the \citeauthor{knuth1976} entropy toll of
less than $2$ bits arises from the rejection probability $r_\randM / \randM$
when splitting the uniform state.
In the worst case, this probability approaches a geometric distribution
with parameter $1/2$.
Using a larger value of $\randM \gg n$ reduces this rejection probability,
and in turn the expected entropy cost of a given run.
A larger uniform state requires an upfront investment of entropy, but
amortized over many samples, the total cost quickly falls below that of any
single-sample entropy-optimal sampler.
\Cref{remark:uniform-refill} shows that if $W = 64$ bits and $n < 2^{32}$,
then the rejection probability is less than $2^{-31}$.
The expected amortized entropy cost of sampling each individual uniform is
then bounded by
\begin{equation}
(1+10^{-9})\entropyRV{U} + 64/k \textrm{ bits per sample}.
\label{eq:entropy-cost-bound}
\end{equation}
Thus, \cref{alg:uniform} is a natural way to extend the spirit of batched
generation \`{a} la \citeauthor{lumbroso2013} in an online and space-efficient
manner.

\subsection{Comparison to the \texorpdfstring{\citeauthor{lemire2019}}{Lemire} Uniform Sampler}
\label{sec:rr-uniform-rust}

\begin{figure}[t]
\centering
\includegraphics[width=.5\linewidth]{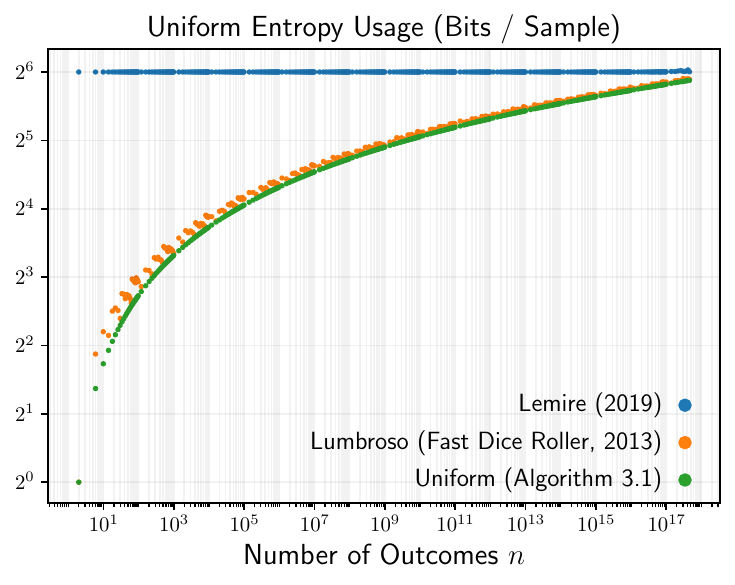}%
\includegraphics[width=.5\linewidth]{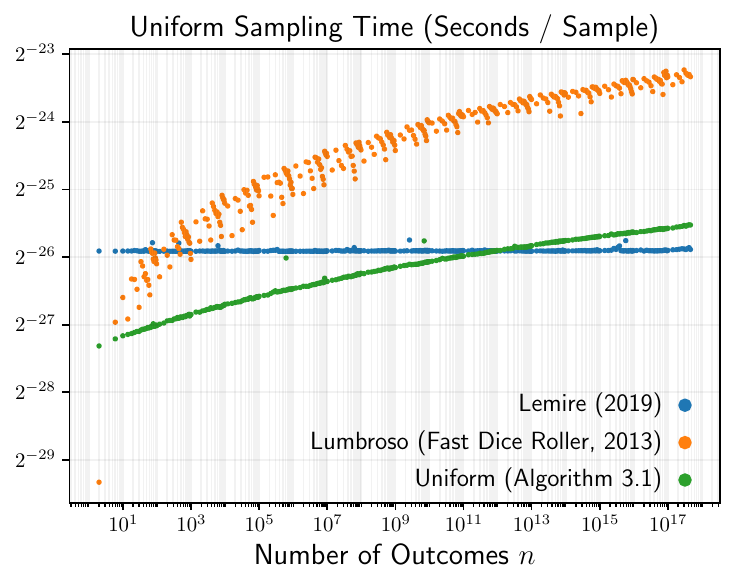}%

\caption{Benchmark comparison of entropy consumption and sampling time
on a range of distribution sizes $n$,
for three uniform samplers:
the Fast Dice Roller of \citet{lumbroso2013},
the method of \citet{lemire2019},
and our uniform sampler with randomness recycling (\cref{alg:uniform}).
Random bits are supplied by $256$-byte buffered requests to \texttt{/dev/random}.}
\label{fig:benchmark-uniform}
\end{figure}

Discrete uniform samplers often scale a uniform state over a range $[0, 2^w)$
(where the word size is typically $w = 64$) to a smaller range $[0,n)$
using integer division, with rejection sampling to ensure that the result is exact.
\Citet{lemire2019} describes a fast exact uniform sampler that eliminates
the division in almost all cases by instead using a widening multiplication,
which is efficient on modern CPUs (e.g., x86).
%
%
This method is especially fast when the randomness source is inexpensive.
In cases where the randomness source is expensive, the entropy inefficiency
of the method becomes significant.
\Cref{fig:benchmark-uniform} compares the method of \citeauthor{lemire2019}
to \cref{alg:uniform} and to the Fast Dice Roller from \citeauthor{lumbroso2013}
when using a cryptographically secure PRNG.
The plot of sampling time shows three distinct regimes:
\begin{itemize}[wide,leftmargin=*]
\item For large $n$, near the maximum $64$-bit integer,
each sampler consumes approximately the same amount of entropy,
roughly $64$ bits per sample, so the difference in sampling time is
determined by the other operations.
The widening multiplication of \citeauthor{lemire2019} is faster than both the
divisions in \cref{alg:uniform} and the loop over the bit-length of $n$ in
the FDR sampler, which requires roughly $64$ iterations in this regime.
\item For small $n$, the method of \citeauthor{lemire2019} is slowest because
it still consumes $64$ bits of entropy per sample, which is wasteful.
The FDR sampler can be fastest in this case because it is reasonably entropy-efficient
and only requires a few loop iterations with efficient bitwise operations,
rather than the division in \cref{alg:uniform}.
\item For intermediate $n$, \cref{alg:uniform} is fastest because it is more
entropy efficient than the method of \citeauthor{lemire2019} and even FDR,
without looping over the bit-length of $n$ as FDR does.
\end{itemize}

\subsection{Optimized Uniform Sampling via Widening Multiplication and Batching}
\label{sec:rr-uniform-optimized}

Real-world software libraries for uniform sampling leverage code
optimizations that exploit the speed of word-level operations on modern
CPUs.
We develop additional randomness recycling techniques for uniform
samplers that incorporate two optimizations: widening multiplication and
batched sampling.
We will show how randomness recycling can be used to speed up highly
optimized uniform samplers from the literature that use widening
multiplication and batching as follows.

\begin{itemize}[wide,leftmargin=*]
\item \Cref{alg:recycle-widening,alg:uniform-widening}
adapt \cref{alg:recycle,alg:uniform} by
using specialized recycling rules that use efficient widening
multiplication.

\item \Cref{alg:uniform-lemire} augments the widening-multiplication
method of \citet{lemire2019} discussed in \cref{sec:rr-uniform-rust} with
randomness recycling.

\item \Cref{alg:uniform-brackett} augments the widening-multiplication
and batched-sampling method of \citet{brackett2025} with randomness
recycling.
\end{itemize}

We assume for this section that $W$ is the word size,
the randomness stream is read in $W$-bit words, and $W$-bit integer
operations are efficient (in particular, widening multiplication, which maps
two $W$-bit integers to their $2W$-bit product, stored as two $W$-bit words
that contain the high and low bits of the result, respectively).

\subsubsection{Recycling with Widening Multiplication}
\label{sec:rr-uniform-optimized-widening}

Implementing the uniform-merging map of \cref{prop:merge-uniform} in finite
space as \cref{alg:recycle} could result in overflow if the product $\randM
\randM'$ exceeds $2^W - 1$, so we must ensure a relationship between the global
uniform state bound $M$ and the size $M'$ of any state which we want to recycle.
This requirement restricts algorithms that recycle uniform states and can make
them significantly slower.
For example, the call to $\Call{Refill}{}$ in \cref{alg:uniform} requires
that the randomness stream be shifted by variable numbers of bits,
instead of simply read word by word.
\Cref{alg:recycle-widening} shows how widening multiplication can be used to merge two word-sized
uniform states into a single word-sized uniform state, with an additional independent full word of
i.i.d.~random bits in the case that the product would overflow.

\begin{algorithm}
\caption{Recycling a uniform state into a global uniform state using widening multiplication}
\label{alg:recycle-widening}
\begin{algorithmic}[1]
\Require{%
  Read and write access to global variables $(\randS, \randM)$ from \cref{alg:recycle};
  \\
  Uniform state $(\randS',\randM')$ such that $\randS' \mid \randM' \sim \Uniform[0,\randM')$
  and $\randS' \perp \randS \mid \randM, \randM'$.
  }
\Ensure{Update the global uniform state $(\randS, \randM)$ to incorporate
randomness recycled from $(\randS',\randM')$;
possibly push an independent random uniform word to the entropy stream.}
\Procedure{RecycleWidening}{$\randS',\randM'$}
\State $(\randS_{\rm hi}, \randS_{\rm lo}) \gets \randS + \randS' \otimes \randM$
    \Comment{widening multiplication}
\State $(\randM_{\rm hi}, \randM_{\rm lo}) \gets \randM \otimes \randM'$
    \Comment{widening multiplication}
\If{$\randS_{\rm hi} = \randM_{\rm hi}$}
  \State $\textbf{update}~(\randS, \randM) \setp (\randS_{\rm lo}, \randM_{\rm lo})$
\Else
  \State $\textbf{update}~(\randS, \randM) \setp (\randS_{\rm hi}, \randM_{\rm hi})$
  \State Recycle $\randS_{\rm lo}$ into the entropy source
    \Comment{$\randS_{\rm lo} \sim \Uniform[0,2^W)$}
\EndIf
\EndProcedure
\end{algorithmic}
\end{algorithm}

\Cref{alg:recycle-widening} can be seen as merging two states
as in $\Call{Recycle}{}$ (\cref{alg:recycle}) and then splitting the result
as in $\Call{Uniform}{2^W}$ (\cref{alg:uniform}),
so the analysis of correctness is similar.
By \cref{prop:merge-nonuniform}, $2^W \randS_{\rm hi} + \randS_{\rm lo}$
is uniformly distributed over $[0, 2^W \randM_{\rm hi} + \randM_{\rm lo})$
Conditioned on the event $\randS_{\rm hi} = \randM_{\rm hi}$, the value
$\randS_{\rm lo}$ is uniformly distributed over $[0, \randM_{\rm lo})$
by \cref{prop:split-nonuniform}.
Similarly, conditioned on the event $\randS_{\rm hi} < \randM_{\rm hi}$,
the value $2^W \randS_{\rm hi} + \randS_{\rm lo}$ is uniformly distributed
over $[0, 2^W \randM_{\rm hi})$.
In this case, \cref{prop:split-uniform} shows that the quotient and remainder
must be distributed as
$\randS_{\rm hi} \mid \randM_{\rm hi} \sim \Uniform[0, \randM_{\rm hi})$ and
$\randS_{\rm lo} \sim \Uniform[0, 2^W)$, respectively, and they are independent.

Equipped with \cref{alg:recycle-widening}, we can implement a simpler variant of
$\Call{Uniform}{}$, as shown in \cref{alg:uniform-widening}.
This variant uses random words directly from the source instead of calling
$\Call{Refill}{}$, and we note that it only accesses the global uniform state
indirectly via $\Call{RecycleWidening}{}$.
\Cref{alg:uniform-widening} and the remaining uniform algorithms in this section
only recycle randomness \textit{into} the global uniform state, and the recycled
randomness is reused only in the form of random words produced by $\Call{RecycleWidening}{}$.
The accept-reject branching in \cref{alg:recycle-widening} can waste up to
$H_{\rm b}(2^W / (2^{W+1} - 1)) \approx 1$ bits of entropy in the worst case,
so samplers like \cref{alg:uniform-widening} do not achieve the
same entropy efficiency as \cref{alg:uniform}.
However, in practice, the entropy waste is not too large, and samplers based on
this recycling method can be significantly faster than those using \cref{alg:uniform},
even when the randomness source is very expensive.

\begin{algorithm}[t]
\caption{Uniform sampling with $\Call{RecycleWidening}{}$}
\label{alg:uniform-widening}
\begin{algorithmic}[1]
\Require{%
  Integer $n \in [1, 2^W)$, where $W$ is the word size from \cref{alg:recycle}.
  }
\Ensure{Random sample $X \sim \Uniform[0,n)$}
\Procedure{UniformWidening}{$n$}
\While{\textbf{true}}
  \State $X \gets \Call{Flip}{W}$ \Comment{$X \sim \Uniform[0,2^W)$; uniform random word}
  \State $(q_X, r_X) \gets \Call{DivMod}{X,n}$ \Comment{$X = q_X n + r_X$}
  \State $(q_B, r_B) \gets \Call{DivMod}{2^W,n}$ \Comment{${2^W} = q_B n + r_B$}
  \If{$q_X < q_B$}
    \Comment{accept case}
    \State $\Call{RecycleWidening}{q_X, q_B}$ \Comment{(\cref{alg:recycle-widening})}
    \State \Return $r_X$ \Comment{$r_X \sim \Uniform[0,n)$}
  \Else
    \Comment{reject case}
    \State $\Call{RecycleWidening}{r_X, r_B}$ \Comment{(\cref{alg:recycle-widening})}
  \EndIf
\EndWhile
\EndProcedure
\end{algorithmic}
\end{algorithm}

\subsubsection{Uniform Sampling with Widening Multiplication and Batching}
\label{sec:rr-uniform-optimized-batching}

\Cref{alg:uniform-widening} benefits significantly from the word-level operations,
but it still requires two divisions.
\Citet{lemire2019} shows how to essentially eliminate the division,
but there is no apparent efficient way to implement a recycling rule;
the leftover randomness from their algorithm is found in the
low bits of the widening multiplication result, and a na\"ive approach to
extract a recyclable uniform state introduces two extra divisions,
entirely nullifying the benefit of fewer divisions.
Instead of recycling after running the algorithm of \citet{lemire2019} verbatim,
we can instead eliminate a division by using a generalization of this method to
batched sampling.
\Citet{brackett2025} show how to generate multiple uniforms by repeated
widening multiplications from a single random word, which is batched
analogously to~\cref{eq:ky-batched}.
We can apply this method to batch together uniforms over the ranges $n$ and
$\floor{2^W / n}$, which are the target uniform and the recyclable state,
respectively.
\Cref{fig:brackett} shows this method, for the example of generating a uniform
of size $n=6$ given a uniform random word of length $W=4$ bits.

\begin{figure}[H]
\begin{align*}
\begin{adjustbox}{max width=\linewidth}
$\begin{NiceArray}[first-col, hlines,vlines]{w{c}{.225cm}w{c}{.225cm}w{c}{.225cm}w{c}{.225cm}w{c}{.225cm}w{c}{.225cm}w{c}{.225cm}w{c}{.225cm}w{c}{.225cm}w{c}{.225cm}w{c}{.225cm}w{c}{.225cm}w{c}{.225cm}w{c}{.225cm}w{c}{.225cm}w{c}{.25cm}}
\CodeBefore
  \cellcolor{green!30!white}{2-2,2-4,2-7,2-10,2-12,2-15}
  \cellcolor{blue!30!white} {2-3,2-6,2-8,2-11,2-14,2-16}
  \cellcolor{yellow}{3-2,3-3,3-4,3-6,3-7,3-8,3-10,3-11,3-12,3-14,3-15,3-16}
  \cellcolor{red!30!white}{4-1,4-5,4-9,4-13}
  \cellcolor{DarkOliveGreen!30!white}{5-1,5-5,5-9,5-13}
\Body
U &
0 & 1 & 2 & 3 & 4 & 5 & 6 & 7 &
8 & 9 & 10 & 11 & 12 & 13 & 14 & 15
\\
6U \div 16 &
0 & 0 & 0 & 1 & 1 & 1 & 2 & 2 &
3 & 3 & 3 & 4 & 4 & 4 & 5 & 5
\\
2 (6U \bmod 16) \div 16 \equiv (12U \div 16) \bmod 2 &
0 & 0 & 1 & 0 & 1 & 1 & 0 & 1 &
0 & 0 & 1 & 0 & 1 & 1 & 0 & 1
\\
2 (6U \bmod 16) \bmod 16 \equiv 12U \bmod 16 &
0 & 12 & 8 & 4 & 0 & 12 & 8 & 4 &
0 & 12 & 8 & 4 & 0 & 12 & 8 & 4
\\
U - (12U \div 16) &
0 & 1 & 1 & 1 & 1 & 2 & 2 & 2 &
2 & 3 & 3 & 3 & 3 & 4 & 4 & 4
\\
\end{NiceArray}$
\end{adjustbox}
\end{align*}
\caption{Illustration of \cref{alg:uniform-lemire} to generate a uniform
with range $n=6$ given a uniform random word $U$ of length $W=4$ bits,
including the operations required for randomness recycling.}
\label{fig:brackett}
\end{figure}

The rows of this diagram are interpreted as follows:
\begin{itemize}[wide,leftmargin=*,noitemsep]
\item The first row shows the possible values of a uniform random word $U \in [0,16)$.

\item The second and third rows show the target uniform over $n = 6$ outcomes and the
recyclable uniform state over $\floor{2^W / n} = 2$ outcomes, respectively.

\item The fourth row shows the rejection condition
$n \floor{2^W / n} U \bmod 2^W < 2^W \bmod n$.

\item The fifth row shows a novel recycling rule for this method, which is justified
in \cref{prop:lemire-reject-recycle}.
\end{itemize}

The recycling rule in the reject case is not essential, as
noted by \citet[Remark 7]{mennucci2010}, because the rejection probability is
vanishingly small in practice.
Even more significantly, the entropy inefficiency of $\Call{RecycleWidening}{}$
dominates the efficiency gained by recycling the reject case in
\cref{alg:uniform-lemire,alg:uniform-brackett}.

The following result proves the correctness of the recycling rule.

\begin{proposition}
\label{prop:lemire-reject-recycle}
Let $N = n \floor{2^W / n} \in (2^{W-1}, 2^W]$ be the target number of uniform
outcomes including the recyclable state, and let $U \sim \Uniform[0,2^W)$.
Let
$R \defeq \set{u \in [0, 2^W) \mid u N \bmod 2^W < 2^W - N}$
be the set of reject outcomes
and $f(u) \defeq u - \floor{u N / 2^W}$.
Then $[f(U) \mid \set{U \in R}] \sim \Uniform[0, 2^W - N)$.
\end{proposition}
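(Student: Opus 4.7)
The plan is to characterize the reject set $R$ combinatorially in terms of the ``bucket structure'' induced by the map $k(u) \coloneqq \lfloor uN/2^W \rfloor$, and then to exhibit $f$ as an explicit bijection $R \to [0, 2^W - N)$. Since $U \sim \mathrm{Uniform}[0, 2^W)$, producing such a bijection immediately yields $[f(U) \mid U \in R] \sim \mathrm{Uniform}[0, 2^W - N)$, as desired.

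First I would analyze the fibres of $k$. For each $k \in [0, N)$, the preimage $k^{-1}(k)$ consists of the integers $u \in [\lceil k \cdot 2^W / N \rceil,\ \lceil (k+1) \cdot 2^W / N \rceil)$. Since $N \in (2^{W-1}, 2^W]$, the ratio $2^W/N$ lies in $[1,2)$, so every such bucket has size either $1$ or $2$. Next I would rewrite the reject condition $uN \bmod 2^W < 2^W - N$ as $(u+1)N \le k(u) \cdot 2^W + 2^W$, i.e.,\ as $k(u+1) = k(u)$; thus $R$ is exactly the set of smaller elements of the two-element buckets. Because the bucket sizes sum to $2^W$ and there are $N$ buckets, an elementary count forces exactly $2^W - N$ of them to be two-element, giving $|R| = 2^W - N$.

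Then I would prove that $f$ is the bijection by a counting argument. Enumerate the two-element buckets in increasing order of index as $k_0 < k_1 < \cdots < k_{2^W - N - 1}$, and let $u_j$ be the smaller element of bucket $k_j$, so that $R = \{u_0, \dots, u_{2^W - N - 1}\}$. The total number of integers in buckets $0, 1, \dots, k_j - 1$ equals $(k_j - j) \cdot 1 + j \cdot 2 = k_j + j$, since exactly $j$ of these earlier buckets are two-element. Hence $u_j = k_j + j$, and therefore
\begin{equation*}
f(u_j) \;=\; u_j - k(u_j) \;=\; u_j - k_j \;=\; j.
\end{equation*}
This shows $f\restriction_R$ is a bijection onto $[0, 2^W - N)$. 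Uniformity of $f(U)$ conditional on $U \in R$ then follows from the uniformity of $U$ on $[0, 2^W)$.

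The main obstacle I anticipate is the equivalence $u \in R \iff k(u+1) = k(u)$: it hinges on a clean algebraic manipulation of the mod-$2^W$ inequality together with the inequality $N > 2^{W-1}$ (which rules out a bucket skipping an integer), and is the only nontrivial piece — everything afterwards is a bookkeeping count. A mild secondary hazard is the off-by-one issue at $j = 2^W - N - 1$ (verifying $u_j + 1 < 2^W$), but this is handled because $k_j < N$ implies $u_j + 1 \le \lceil (k_j + 1) \cdot 2^W / N \rceil \le 2^W$.
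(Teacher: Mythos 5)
Your proof is correct and follows essentially the same route as the paper's: the crux in both is that the reject condition $uN \bmod 2^W < 2^W - N$ is exactly the condition that the quotient $\floor{uN/2^W}$ does not increment at $u$, from which $f$ restricted to $R$ is the order-preserving bijection onto $[0, 2^W - N)$ (you express this as $u_j = k_j + j$, the paper as $f(x) = \abs{[0,x) \cap R}$). The only difference is bookkeeping: you re-derive $\abs{R} = 2^W - N$ by counting the two-element buckets, whereas the paper cites Lemma 4.1 of \citet{lemire2019} for that count.
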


\begin{proof}
First observe that $f(0) = 0$.
Next, note that
$\floor{(x+1) N / 2^W} = \floor{x N / 2^W}
\iff
x N \bmod 2^W < 2^W - N.$
Otherwise, $\floor{(x+1) N / 2^W} = \floor{x N / 2^W} + 1$.
Therefore, $f(x + 1) = f(x) + 1$ if and only if $x \in R$;
otherwise, $f(x + 1) = f(x)$.
It follows that
\begin{equation}
f(x) = \abs*{[0, x) \cap R}.
\end{equation}
\Citet[Lemma 4.1]{lemire2019} shows that $|R| = 2^W \bmod N = 2^W - N$,
so $f(x)$ ranges over $[0, 2^W - N)$ as $x$ ranges over $R$.
The result is uniform because each value in $[0, 2^W - N)$ is attained
exactly once by some $x \in R$.
\end{proof}

\begin{listing}[!t]

\begin{minipage}[t]{\linewidth}
\begin{algorithm}[H]
\caption{Uniform sampling with widening multiplication and $\Call{RecycleWidening}{}$}
\label{alg:uniform-lemire}
\begin{algorithmic}[1]
\Require{Integer $n \in [1, 2^W)$, where $W$ is the word size from \cref{alg:recycle}.}
\Ensure{Random sample $U \sim \Uniform[0,n)$}
\Procedure{UniformLemire}{$n$}
\State $(q, t) \gets \Call{DivMod}{2^W,n}$ \Comment{${2^W} = q \cdot n + t$}
\While{\textbf{true}}
  \State $X \gets \Call{Flip}{W}$ \Comment{$X \sim \Uniform[0,2^W)$; uniform random word}
  \State $(U, r) \gets X \otimes n$ \Comment{widening multiplication; $U \approxsim \Uniform[0,n)$}
  \State $(U', r') \gets q \otimes r$ \Comment{widening multiplication; $U' \approxsim \Uniform[0,q)$}
  \If{$r' \geq t$}
    \Comment{accept case}
    \State $\Call{RecycleWidening}{U', q}$ \Comment{(\cref{alg:recycle-widening})}
    \State \Return $U$ \Comment{$U \sim \Uniform[0,n)$}
  \Else
    \Comment{reject case}
    \State $(U'', r'') \gets X \otimes (n \cdot q)$ \Comment{widening multiplication}
    \State $\Call{RecycleWidening}{X - U'', t}$ \Comment{(\cref{alg:recycle-widening})}
  \EndIf
\EndWhile
\EndProcedure
\end{algorithmic}
\end{algorithm}
\end{minipage}

\begin{minipage}[t]{\linewidth}
\begin{algorithm}[H]
\caption{Batched uniform sampling with widening multiplication and $\Call{RecycleWidening}{}$}
\label{alg:uniform-brackett}
\begin{algorithmic}[1]
\Require{Positive integers $n_i > 0$ with $\prod_{i=1}^k n_i < 2^W$, where $W$ is the word size from \cref{alg:recycle}.}
\Ensure{Independent random samples $U_i \sim \Uniform[0,n_i)$ for $i \in \set{1, \dots, k}$}
\Procedure{UniformBrackett}{$n_1, \ldots, n_k$}
\State $n \gets \prod_{i=1}^k n_i$ \Comment{total target outcomes}
\State $(n_{k+1}, t) \gets \Call{DivMod}{2^W, n}$ \Comment{$2^W = t + \prod_{i=1}^{k+1} n_i$}
\While{\textbf{true}}
  \State $X \gets \Call{Flip}{W}$ \Comment{$X \sim \Uniform[0,2^W)$; uniform random word}
  \State $r \gets X$ \Comment{copy of $X$ to allow recycling}
  \For{$i \gets 1, \ldots, k+1$}
    \State $(U_i, r) \gets r \otimes n_i$ \Comment{widening multiplication; $U_i \approxsim \Uniform[0,n_i)$}
  \EndFor
  \If{$r \geq t$}
    \Comment{accept case}
    \State $\Call{RecycleWidening}{U_{k+1}, n_{k+1}}$ \Comment{(\cref{alg:recycle-widening})}
    \State \Return $U_1, \ldots, U_k$ \Comment{$U_i \sim \Uniform[0,n_i)$}
  \Else
    \Comment{reject case}
    \State $(U, r) \gets X \otimes (n \cdot n_{k+1})$ \Comment{widening multiplication}
    \State $\Call{RecycleWidening}{X - U, t}$ \Comment{(\cref{alg:recycle-widening})}
  \EndIf
\EndWhile
\EndProcedure
\end{algorithmic}
\end{algorithm}
\end{minipage}
\end{listing}

\begin{figure}[!p]
\centering

\includegraphics[width=.925\linewidth]{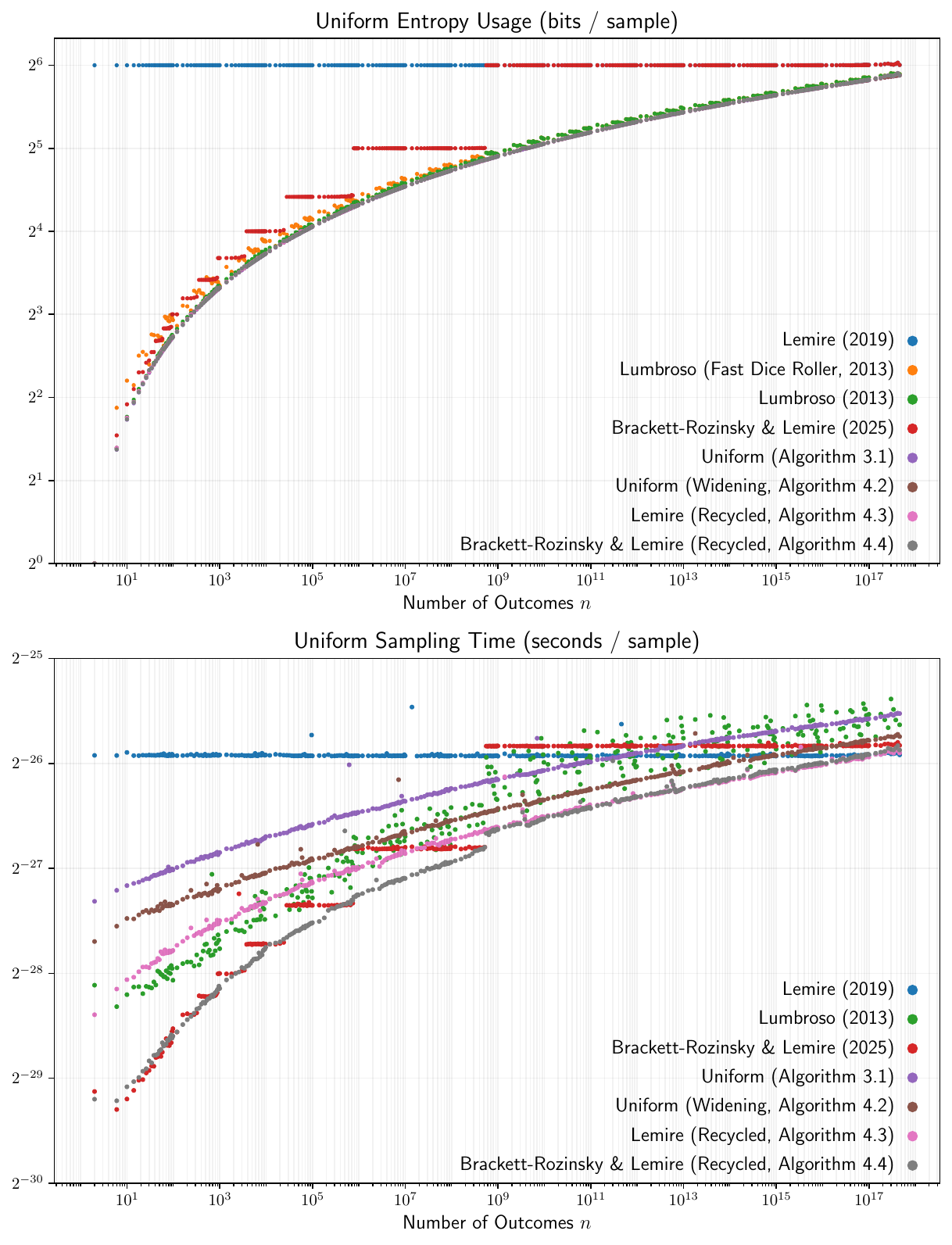}

\captionsetup{skip=8pt}
\caption{Benchmark comparison of entropy consumption and sampling time
using various optimized sampling algorithms for discrete uniforms.
\Crefrange{alg:uniform-widening}{alg:uniform-brackett} are novel to this work.
}
\label{fig:benchmark-uniform-full}
\end{figure}

\Cref{alg:uniform-lemire} shows the implementation of this method.
The exact same method can be used to recycle leftover randomness from
the batched sampler of \citet{brackett2025}, by simply adding one uniform
to the end of the batch, as shown in \cref{alg:uniform-brackett}.
This modification of the batched sampler with randomness recycling
is competitive with the original method, even though it introduces
one extra division per batch and computes the product of the ranges
in the batch every time.


\Cref{fig:benchmark-uniform-full} shows the improvements enabled
by our randomness recycling techniques
(\crefrange{alg:uniform-widening}{alg:uniform-brackett}) in terms of the
entropy cost and wall-clock sampling time, when used to repeatedly
generate many uniforms over a fixed range $n$ (for $n$ between $2$ and $10^{17}$).
The plot shows that \crefrange{alg:uniform-widening}{alg:uniform-brackett}
deliver performance improvements over both \cref{alg:uniform} (from
\cref{sec:rr-uniform}), the prior methods of
\citet{lemire2019,brackett2025}, and
an optimized version of the FDR \citep{lumbroso2013}.
Particularly notable in \cref{fig:benchmark-uniform-full}
is the runtime performance of \cref{alg:uniform-brackett}, which
consistently retains fast sampling time across all values of $n$.

\subsubsection{Application to Generating Random Permutations}
\label{sec:rr-uniform-optimized-permutations}

\begin{figure}[t]
\includegraphics[width=\linewidth]{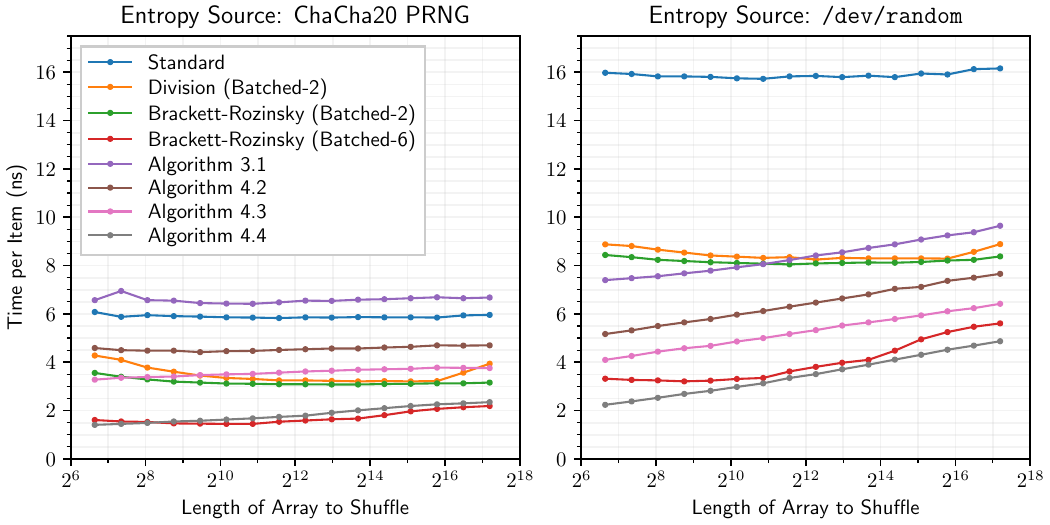}
\caption{Runtime of \citet{fisher1953} shuffle for generating random
permutations, using eight different sampling algorithms for discrete uniforms
and two different entropy sources (ChaCha20 and \texttt{/dev/random}).
The first four samplers in the legend are from the benchmark
set in \citet{brackett2025}.
The remaining samplers are our
\cref{alg:uniform,alg:uniform-widening,alg:uniform-lemire,alg:uniform-brackett};
the last three use randomness recycling techniques that are optimized for
word-based operations.}
\label{fig:shuffle}
\end{figure}

\Citet{brackett2025} show that variable-sized batching and mixed-radix
decomposition using the method of \citet{lemire2019} to eliminate divisions
yields a state-of-the art sampler for multiple uniforms $(U_0, U_1, \dots, U_{k-1})$.
They apply their sampler to generate random permutations
via the Fisher-Yates shuffle~\citep{durstenfeld1964,fisher1953},
whose key subroutine is efficiently sampling
$U_i \sim \Uniform[0,k-i)$ for $i=0,\dots,k-1$.

Our methods with randomness recycling require more operations and are therefore
comparatively slower when the entropy source is a fast pseudorandom number
generator (PRNG) that quickly returns (pseudo)random coin tosses.
However, when using a more computationally expensive entropy source such as
a cryptographically secure PRNG that is continuously seeded
from system entropy (e.g., \texttt{/dev/random} on Linux),
the entropy consumption becomes a
significant factor in the overall runtime.
In this setting, our randomness recycling algorithms, which consume almost
the minimum possible expected entropy, can surpass the performance of the
\citeauthor{brackett2025} method.
\Cref{fig:shuffle} shows a comparison of the runtime per shuffled item (in nanoseconds)
using four baseline algorithms (Standard, Division Batched-2, BR Batched-2, BR Batched-6)
for uniform sampling from \citeauthor{brackett2025} with
\cref{alg:uniform,alg:uniform-widening,alg:uniform-lemire,alg:uniform-brackett}
from this article.


\section{Randomness Recylers for General Distributions}
\label{sec:rr-general}

In \cref{sec:analysis-nonuniform} we presented a randomness recycling
method for sampling general (nonuniform) distributions using inversion
sampling.
Randomness recycling can also be used with many other samplers for general
distributions, summarized in \cref{table:algorithms}.
We describe these techniques here.
Whereas the proof of \cref{theorem:entropy-cost} in
\cref{sec:analysis-theorem} used \cref{alg:inversion} in the witness, in
principle any one of the forthcoming algorithms could be used instead.
When used to generate a single sample, the algorithms in this section may
be far from entropy optimal.
When used to generate a random sequence, however, they achieve nearly optimal
amortized entropy in the sense of \cref{theorem:entropy-cost},
by virtue of the randomness recycling rules.

\subsection{Lookup-Table Sampling}
\label{sec:rr-general-lookup}

For a rational discrete distribution $\dist \defeq (a_0, \dots, a_{n-1})/A$
with moderately sized weight sum $A$, the lookup-table method~\citep[p.~770]{devroye1986}
is a practical way to eliminate the logarithmic
cost of binary search in inversion sampling.
A table $T[0 \twodots A-1]$ of size $A$ is constructed such that
each integer $i \in [0,n)$ is stored exactly $a_i$ times.
To generate a sample from $\dist$, a uniform variate
$U \sim \Uniform[0,A)$ is generated and then
$X \gets T[U]$ is returned.
The randomness recycling strategy is analogous to that of the inversion
sampler.

\begin{align*}
\NiceMatrixOptions{custom-line={letter = I, tikz = {line width=2pt, blue }}}
\begin{NiceArray}[first-col,hvlines]{ccccIccccIcIcccc}
U
&
\cA[0] & \cA[1] & \cA[\dots] & \cA[A_0-1]
&
\cB[A_0] & \cB[A_0+1] & \cB[\dots] & \cB[A_1-1]
&
\cC[\dots]
&
\cD[A_{n-2}] & \cD[A_{n-2}+1] & \cD[\dots] & \cD[A_{n-1}-1]
\\
T
&
0 & 0 & \dots & 0
&
1 & 1 & \dots & 1
&
\dots
&
n-1 & n-1 & \dots & n-1
\CodeAfter
  \UnderBrace[shorten,yshift=4pt]{1-1}{2-4}{X=0}
  \UnderBrace[shorten,yshift=4pt]{1-5}{2-8}{X=1}
  \UnderBrace[shorten,yshift=4pt]{1-10}{2-13}{X=n-1}
  \OverBrace[shorten,yshift=4pt]{1-1}{1-4}{a_0}
  \OverBrace[shorten,yshift=4pt]{1-5}{1-8}{a_1}
  \OverBrace[shorten,yshift=4pt]{1-10}{1-13}{a_{n-1}}
\end{NiceArray}\,.
\\[5pt]
\end{align*}
The top array shows all possible values of $U\sim \Uniform[0,A_{n-1})$
and the bottom array shows the lookup table $T$.
Since $T$ has length $A \equiv A_{n-1}$ and must be stored in memory,
lookup-table sampling scales exponentially with the number of bits needed
to encode $\dist$.

\begin{algorithm}
\caption{Lookup table sampling with randomness recycling}
\label{alg:lookup}
\begin{algorithmic}[1]
\Require{Positive integers $a_0, \dots, a_{n-1}$ with sum $A$}
\Ensure{Random sample $X \sim \Discrete(a_0,\dots,a_{n-1})$}
\Procedure{Lookup}{$a_0$,$\dots$,$a_{n-1}$}
\State Set $A_{-1} \gets 0$
  and $A_i \gets a_0 + \dots + a_i$ for $i=0,\dots,{n-1}$
                                                     \Comment{prefix sums (if not given)}
\State Construct array $T[0\twodots A-1]$
  where $i \in [0,n)$ is stored $a_i$ times.         \Comment{lookup table}
\State $U \sim \Call{Uniform}{A}$                    \Comment{draw uniform variate (\cref{alg:uniform})}
\State Let $X \gets T[U]$                            \Comment{lookup}
\State $(\randS', \randM') \gets (U-A_{i-1}, a_i)$   \Comment{extract uniform state}
\State $\Call{Recycle}{\randS', \randM'}$ \Comment{(\cref{alg:recycle})}
\State \Return $X$
\EndProcedure
\end{algorithmic}
\end{algorithm}

\subsection{Alias Sampling}
\label{sec:rr-general-alias}

The alias method~\citep{walker1977} is a state-of-the-art sampler that
avoids the exponential space complexity of the lookup-table method while
retaining its extremely fast runtime.
In the preprocessing phase,
the target distribution $\dist = (a_0, \dots, a_{n-1}) / b$
is used to compute \begin{enumerate*}[label=(\roman*)]
\item the alias outcomes $\mathbf{z} \defeq (z_0, \dots, z_{n-1})$, where $z_i \in [0,n-1] \setminus \set{i}$; and
\item the ``no alias'' odds $\mathbf{w} \defeq (w_0, \dots, w_{n-1})$, where $w_i \in [0,b]$
\end{enumerate*}
\citep{vose1991}.
Equipped with these data structures, the generation phase is as follows:
\begin{itemize}[wide,leftmargin=*,noitemsep]
\item Draw a uniform index $I \sim \Uniform[0,n)$.
\item Draw $B \sim \Bernoulli(w_I/b)$;
  if $B=1$ then return $X \gets I$, else return $X \gets z_I$.
\end{itemize}
For any discrete distribution $\dist$, it is always possible to construct an
alias data structure that guarantees the return value $X$ is distributed
according to $\dist$ \citep[Theorem 4.1]{devroye1986}.
That is, for a given outcome $j \in [0,n)$, let $k_1, \dots, k_m$
be the indices for which $j$ is an alias (i.e., $z_{k_i} = j$, for $i=1,\dots,m$).
The alias method guarantees that
\begin{align}
\Prob(X = i) \defeq 1/n\cdot w_j/b + \sum_{i=1}^{m} 1/n \cdot (b-w_{k_i})/b = a_j/b
&& ( 0 \le j < n).
\end{align}

A na\"ive approach to randomness recycling with the alias sampler
is to sample $I \gets \Call{Uniform}{n}$ using \cref{alg:uniform}
and $B \gets \Call{Inversion}{(w_I, b - w_I)}$ using \cref{alg:inversion}.
However, this approach does not optimally recycle randomness.
Instead, the optimal recycling rule is as follows:
\begin{itemize}[wide,leftmargin=*]
  \item Sample $U \sim \Uniform[0,nb)$ using \cref{alg:uniform}.
  \item Write $U = q_U b + r_U$,
    where $q_U \defeq \floor{U/b}$ and $r_U \defeq U \bmod b$.
    Here $I \equiv q_U \sim \Uniform[0,n)$ furnishes a uniformly
    chosen index in $[0,n)$ and $r_U \sim \Uniform[0,b)$ furnishes the uniform
    needed to generate the $\Bernoulli(w_I/b)$ variable.
  \item If $r_U < w_{I}$, then return $X \gets I$,
  else return $X \gets z_{I}$; and recycle a uniform state $(\randS',\randM')$
  with $\randM' = n a_X$ by
  using the auxiliary array $\mathbf{c}$ described in the remainder of this section.
\end{itemize}
Conditional on the event $\set{X=i}$, the selected cell in the alias table
is uniform over all $n a_i$ ways the label $i$ could have been achieved.
Applying \cref{prop:merge-nonuniform} gives the recycling rule, which
is visualized in the following example.

\begin{figure}[ht]
\begin{align*}
\tikzset{nicematrix/brace/.append style= {thick, decoration = { brace , raise = -0.5 em }}}
\begin{NiceArray}[last-row,first-col]{c|c|c|c|c|}
\cline{2-5}
  & 10 & (\cA,10) & (\cA,15) & (\cB,11) & (\cC,11) \\ \cline{2-5}
~ & 9  & (\cA,9)  & (\cA,14) & (\cB,10) & (\cC,10) \\ \cline{2-5}
~ & 8  & (\cA,8)  & (\cA,13) & (\cB,9)  & (\cC,9) \\ \cline{2-5}
~ & 7  & (\cA,7)  & (\cA,12) & (\cB,8)  & (\cC,8) \\ \cline{2-5}
~ & 6  & (\cA,6)  & (\cA,11) & (\cB,7)  & (\cC,7) \\ \cline{2-5}
~ & 5  & (\cA,5)  & (\cB,5)  & (\cB,6)  & (\cC,6) \\ \cline{2-5}
~ & 4  & (\cA,4)  & (\cB,4)  & (\cC,4)  & (\cC,5) \\ \cline{2-5}
~ & 3  & (\cA,3)  & (\cB,3)  & (\cC,3)  & (\cD,3) \\ \cline{2-5}
~ & 2  & (\cA,2)  & (\cB,2)  & (\cC,2)  & (\cD,2) \\ \cline{2-5}
~ & 1  & (\cA,1)  & (\cB,1)  & (\cC,1)  & (\cD,1) \\ \cline{2-5}
~ & 0  & (\cA,0)  & (\cB,0)  & (\cC,0)  & (\cD,0) \\ \cline{2-5}
~ & ~  & 0        & 1        & 2        & 3
\CodeAfter
  \UnderBrace{12-2}{12-5}{q_U}
  \tikz[thick] \draw[{thick,decorate, decoration={brace,mirror,amplitude=.2cm}}] (1-|1) -- node[pos=0.5,left,xshift=-0.15cm]{$r_U$} (12-|1);
\end{NiceArray}
&&
\begin{aligned}
\mathbf{z} && z_0 &= *,  & z_1 &= \mathrm{A}, & z_2 &= \mathrm{B}, & z_3 &= \mathrm{C} \\
\mathbf{w} && w_0 &= 11, & w_1 &= 6,          & w_2 &= 5,          & w_3 &= 4 \\
\mathbf{a} && a_0 &= 4,  & a_1 &= 3,          & a_2 &= 3,          & a_3 &= 1 \\
\mathbf{c} && c_0 &= *,  & c_1 &= 5,          & c_2 &= 1,          & c_3 &= 1 \\
\end{aligned}
\end{align*}
\caption{Randomness recycling with the alias sampler
for the target distribution
$\dist \defeq \set{
  \mathrm{A} \mapsto 4 \eqdef a_0,
  \mathrm{B} \mapsto 3 \eqdef a_1,
  \mathrm{C} \mapsto 3 \eqdef a_2,
  \mathrm{D} \mapsto 1 \eqdef a_3
  } / 11$.
}
\label{fig:alias}
\end{figure}

Let
$\dist \defeq \set{
  \mathrm{A} \mapsto 4 \eqdef a_0,
  \mathrm{B} \mapsto 3 \eqdef a_1,
  \mathrm{C} \mapsto 3 \eqdef a_2,
  \mathrm{D} \mapsto 1 \eqdef a_3
  } / 11$
be the target distribution, where we use the symbols
$\mathrm{A}=0, \mathrm{B}=1, \mathrm{C}=2, \mathrm{D}=3$ to avoid
confusion between the outcome labels and other
integers in the alias data structures.
\Cref{fig:alias} shows the corresponding data structures for alias sampling
with randomness recycling, where $*$ denotes an arbitrary value.
Note that the (exponentially sized) alias table itself is never stored in memory,
but is represented compactly as $(\mathbf{z}, \mathbf{w}, \mathbf{a}, \mathbf{c})$.
In this example, consider two cases:
\begin{itemize}[wide,leftmargin=*]
\item Suppose that $(q_U, r_U) = (1,2)$ specifies the cell $(\mathrm{B}, 2)$.
Then conditioned on $\set{X = \mathrm{B}}$, the value $2$ is uniformly
distributed over $[0,12)$, i.e., the indices of all other cells labeled
$\mathrm{B}$ in the table.
We can directly recycle $(\randS', \randM') = (2, 12)$ into the global
uniform state, where $\randS' = r_U = 2$ and $\randM'=n a_1 = 12$.

\item Suppose that $(q_U,r_U) = (1,7)$ specifies the
cell $(\mathrm{A}, 12)$.
Then conditioned on $\set{X = \mathrm{A}}$, the value $12$ is uniformly
distributed over $[0,16)$, i.e., the indices of all other cells labeled
$\mathrm{A}$ in the table.
We can thus recycle $(\randS', \randM') = (12, 16)$ into the uniform state.
To compute the value $\randS' = 12$ from $(q_U, r_U) = (1,7)$, we first add
the number of $\mathrm{A}$ cells to the left of the current column
($11$), then subtract the number of $\mathrm{B}$ cells in the current
column $(6)$ to obtain the offset $c_1=11-6=5$, and finally add $r_U = 7$ to obtain
$\randS'=c_1+r_U=5+7=12$.
The bound $\randM'=16 = n a_{z_1}$ is available as in the previous case.
\end{itemize}

To enhance the alias sampler with this randomness recycling rule, we
construct an array $\mathbf{c} \defeq (c_0, \dots, c_{n-1})$ of the offsets
needed to perform recycling as described above, which can be done in linear
time.
\Cref{alg:alias} shows the resulting alias sampler with randomness recycling.

\begin{algorithm}[ht]
\caption{Alias sampling with randomness recycling}
\label{alg:alias}
\begin{algorithmic}[1]
\Require{%
  Positive integers $a_0, \dots, a_{n-1}$ with sum $A$\\
  Aliases $\mathbf{z} \defeq (z_0, \dots, z_{n-1})$ \\
  No alias odds $\mathbf{w} \defeq (w_0, \dots, w_{n-1})$ \\
  Recycling offsets $\mathbf{c} \defeq (c_0, \dots, c_{n-1})$
}
\Ensure{Random sample $X \sim \Discrete(a_0,\dots,a_{n-1})$}
\Procedure{Alias}{$(a_0,\dots,a_{n-1})$, $\mathbf{z}$, $\mathbf{w}$, $\mathbf{c}$}
\State $U \gets \Call{Uniform}{A\cdot n}$                     \Comment{draw uniform variate (\cref{alg:uniform})}
\State $(q, r) \gets \Call{DivMod}{U, A}$                     \Comment{compute cell $(q,r)$ in alias table}
\If{$r < w_{q}$}                                              \Comment{no alias: select outcome $q$}
  \State $(\randS', \randM') \gets (r, n \cdot a_{q})$        \Comment{extract uniform state}
  \State $\Call{Recycle}{\randS', \randM'}$                   \Comment{(\cref{alg:recycle})}
  \State \Return $q$
\Else                                                         \Comment{alias: select outcome $z_q$}
  \State $(\randS',\randM') \gets (r + c_q, n \cdot a_{z_q})$ \Comment{extract uniform state}
  \State $\Call{Recycle}{\randS',\randM'}$                    \Comment{(\cref{alg:recycle})}
  \State \Return $z_{q}$
\EndIf
\EndProcedure
\end{algorithmic}
\end{algorithm}

\subsection{Discrete Distribution Generating Tree Sampling}
\label{sec:rr-general-ddg}

\begin{figure}[t]
\centering
\begin{adjustbox}{max width=.85\textwidth}
\newcommand{\ddd}{\edge[densely dashdotted]}
\tikzset{every tree node/.style={anchor=north}}
\tikzset{level distance=20pt}
\tikzset{sibling distance=2pt}
\begin{tikzpicture}
\Tree[
  [
    \node[label={below:(A,0)},draw=black,circle,fill=\clrA]{};
    \node[label={below:(B,0)},draw=black,circle,fill=\clrB]{};
  ]
  [
    [
      \node[label={below:(C,0)},draw=black,circle,fill=\clrC]{};
      [
        \node[label={below:(A,1)},draw=black,circle,fill=\clrA]{};
        \node[label={below:(D,0)},draw=black,circle,fill=\clrD]{};
      ]
    ]
    \node[label={below:(R,0)},draw=black,circle,fill=gray]{};
  ]
]
\end{tikzpicture}
\end{adjustbox}
\caption{
  DDG tree for the target distribution
  $\dist \defeq \set{
    \mathrm{A} \mapsto 5,
    \mathrm{B} \mapsto 4,
    \mathrm{C} \mapsto 2,
    \mathrm{D} \mapsto 1
    } / 12$ using the rejection-based Fast Loaded Dice Roller \citep{saad2020fldr} method.
}
\label{fig:recycle-ddg-naive}

\bigskip

\begin{adjustbox}{max width=.85\textwidth}
\tikzset{dd/.style={densely dashdotted}}
\tikzset{ll/.style n args={2}{draw=black,circle,fill=#1,label={below:#2}}}
\tikzset{every tree node/.style={anchor=north}}
\tikzset{level distance=20pt}
\tikzset{sibling distance=2pt}
\begin{tikzpicture}
\Tree[
  [
    [.A
      \edge[dd]; [ \edge[dd]; [ \edge[dd]; \node[name=start, ll={\clrA}{(A,0)}]{}; ] \edge[dd]; [ \edge[dd]; \node[ll={\clrA}{(A,1)}]{}; ] ]
      \edge[dd]; [ \edge[dd]; [ \edge[dd]; \node[ll={\clrA}{(A,2)}]{}; ] \edge[dd]; [ \edge[dd]; \node[ll={\clrA}{(A,3)}]{}; ] ]
    ]
    [.B
      \edge[dd]; [ \edge[dd]; [ \edge[dd]; \node[ll={\clrB}{(B,0)}]{}; ] \edge[dd]; [ \edge[dd]; \node[ll={\clrB}{(B,1)}]{}; ] ]
      \edge[dd]; [ \edge[dd]; [ \edge[dd]; \node[ll={\clrB}{(B,2)}]{}; ] \edge[dd]; [ \edge[dd]; \node[ll={\clrB}{(B,3)}]{}; ] ]
    ]
  ]
  [
    [
      [.C \edge[dd]; [ \edge[dd]; \node[ll={\clrC}{(C,0)}]{}; ] \edge[dd]; [ \edge[dd]; \node[ll={\clrC}{(C,1)}]{}; ] ]
      [
        [.A \edge[dd]; \node[ll={\clrA}{(A,4)}]{}; ]
        [.D \edge[dd]; \node[name=end,ll={\clrD}{(D,0)}]{}; ]
      ]
    ]
    R
  ]
]

\draw[{thick,decorate, decoration={brace, amplitude=.5cm, raise=.75cm}}]
  (end.east)
  -- node[pos=0.5,below,yshift=-1.25cm]{$U \sim \Uniform[0,A = 12)$}
  (start.west);
\end{tikzpicture}
\end{adjustbox}
\caption{Randomness recycling for the distribution in
\cref{fig:recycle-ddg-naive} by using a ``left-packed'' DDG tree whose
leaves all live at the same level. This (exponentially sized) tree can be
compactly represented by using the data structures listed in \cref{alg:ddg}.}
\label{fig:recycle-ddg}
\end{figure}

A discrete distribution generating (DDG) tree is a universal computational
model introduced by \citet{knuth1976} for describing any computable
sampling algorithm that maps random bits to discrete outcomes.
A DDG tree $G$ is a complete, rooted binary tree where each leaf node
has an outcome label $i \in \Nat$.
DDG tree sampling operates as follows, starting from the root of $G$:
\begin{enumerate}[wide, label=(D\arabic*), leftmargin=*,]
  \item\label{ddg:flip} Obtain a fair coin toss $B \gets \Call{Flip}{1}$. If $B=0$,
    then visit the left child of the current node; else if $B=1$, then
    visit the right child.
  \item\label{ddg:decide} If the visited child node is a leaf node, then
  return its label; else go to \labelcref{ddg:flip}.
\end{enumerate}
The set of leaf nodes in a DDG tree $G$ is denoted $\mathcal{L}(G)$.
The depth and label of any leaf $l \in \mathcal{L}(G)$,
are denoted $d(l) \ge 0$ and $\ell(l) \in \Nat$, respectively.
With these notations, a random variable $X \sim G$ obtained
by DDG tree sampling~\labelcrefrange{ddg:flip}{ddg:decide}
has the following distribution:
\begin{align}
\Prob(X = i) = \sum_{l \in \mathcal{L}(G)} 2^{-d(l)} \cdot \mathbb{I}[\ell(l) = i].
\label{eq:ddg-output-dist}
\end{align}
The expected number of coin tosses used to sample $X$ is equal to the
average depth of a leaf:
\begin{align}
\expect{T_G} = \sum_{l \in \mathcal{L}(G)} 2^{-d(l)} \cdot d(l).
\label{eq:ddg-entropy-cost}
\end{align}

As discussed in \cref{sec:intro-existing}, \citet{knuth1976} show how to
construct an entropy-optimal DDG tree $G^*$ for any distribution distribution
$\dist \defeq (a_0, \dots, a_{n-1}) / A$ whose expected number of coin tosses
\cref{eq:ddg-entropy-cost} is the least possible.
This optimal tree $G^*$ is constructed by placing a leaf labeled $i$
at depth $j$ if and only if the $j$th bit in the binary expansion of
$a_i/A$ is 1.
Explicitly constructing $G^*$ can require exponential space in the
number of bits needed to encode $\dist$ \citep[Theorem 3.6]{saad2020popl}.
An alternative approach is to incrementally traverse $G^*$ without
explicitly constructing the tree
(e.g., \citet[p.~384]{knuth1976}, \citet[Algorithm 1]{saad2025}),
although steps \labelcref{ddg:flip,ddg:decide} become more complicated
as the leaves and labels must be created during sampling.

The Fast and Amplified Loaded Dice Roller~\citep{saad2020fldr,draper2025}
algorithms are near entropy-optimal DDG tree samplers that use rejection
sampling to reduce the space complexity of explicitly stored DDG trees.
These methods can be interpreted as a compression of the lookup-table method
from \cref{sec:rr-general-lookup} using power-of-two block sizes.
The key idea is to build an entropy-optimal DDG tree $G'$ for
$\dist' \defeq (a_0, \dots, a_{n-1}, a_n) / 2^k$
where $2^k$ is a power of two that is larger than the sum of weights $A$ and
$a_n = 2^k - A$ is a ``reject'' outcome.
\Citet{saad2020popl} and \citet{draper2025} show that with a linearithmic
sized tree $G'$ it is possible to achieve expected costs less than $\entropy{\dist}+6$
and $\entropy{\dist}+2$, respectively, where the latter coincides with the entropy-optimal
\citet{knuth1976} range $[\entropy{\dist}, \entropy{\dist}+2)$.
\Cref{fig:recycle-ddg-naive} shows an example DDG tree $G$, where the
colors of the leaf nodes are suggestive of the recycling strategy.

\begin{algorithm}[t]
\caption{DDG tree sampling with randomness recycling}
\label{alg:ddg}
\begin{algorithmic}[1]
\Require{%
Target distribution $(a_0, \ldots, a_{n-1})$ with sum $A$ \\
DDG tree depth $k \ge 0$ \\
Leaf counts per level $\mathbf{L} \defeq (L_0, \ldots, L_k)$ \\
Leaf labels $\mathbf{H} \defeq ((H_{0,0}, \ldots, H_{0,L_0-1}), \ldots (H_{k,0}, \ldots, H_{k,L_k-1}))$
}
\Ensure{Random sample $X \sim \Discrete(a_0,\dots,a_{n-1})$}
\Procedure{DDG}{$(a_0,\dots,a_{n-1})$, $k$, $\mathbf{L}$, $\mathbf{H}$}
\State $U \gets \Call{Uniform}{A}$
  \Comment{draw uniform variate (\cref{alg:uniform})}
  \label{algline:fldr-uniform}
\State $(d,v) \gets (0,0)$        \Comment{initialize depth and value}
\While{\textbf{true}}
\Comment{loop up to $k$ times}
  \If{$v < L_d$}                  \Comment{hit leaf node}
                                  \label{algline:fldr-hit-leaf}
    \State $X \gets H_{d,v}$       \Comment{label at leaf node}
    \State $\randS' \gets a_{X} - (a_{X} \bmod 2^{k+1-d}) + (U \bmod 2^{k-d})$
                                  \Comment{extract uniform value}
                                  \label{algline:fldr-extract-uniform}
    \State $\randM' \gets a_{X}$    \Comment{extract uniform upper bound}
    \State $\Call{Recycle}{\randS',\randM'}$ \Comment{(\cref{alg:recycle})}
    \State \Return $X$
    \Comment{return the label}
  \EndIf
  \State $v \gets 2 \cdot (v - L_d) + (\floor{U / 2^{k-1-d}} \bmod 2)$
  \Comment{visit random child}
  \State $d \gets d + 1$
  \Comment{increment depth}
\EndWhile
\EndProcedure
\end{algorithmic}
\end{algorithm}

\paragraph{Recycling Rule}

The recycling rule for DDG tree sampling extracts
a uniform state $(\randS', \randM')$ from a nonuniform state
using \cref{prop:merge-nonuniform}.
We recycle a draw from the distribution of \textit{depth}
of the leaf nodes conditioned on the
\textit{label} of the visited leaf node, which gives a nonuniform distribution
$\Discrete(\mathbf{w})$ as in \cref{prop:merge-nonuniform} whose weights
$w_i$ are distinct powers of two corresponding to the set bits in the
binary expansion of the label probability.

More specifically, consider any entropy-optimal DDG tree $G$ whose output
distribution is $\dist = (a_0, \dots, a_{n-1}, a_n)/2^k$, where
$a_n$ is a
``reject'' outcome that implicitly denotes a back-edge to the root.
Each leaf node in $G$ has a label $i \in [0,n]$.
Let $c_i \ge 1$ denote the (finite)
number of leaves in $G$ with label $i$.
Further, let $0 \le d_{i1} < d_{i2} < \dots < d_{ic_i}$ denote the
depths of the leaves with label $i$, which are distinct since $G$
is entropy optimal.
Conditioned on $\set{X=i}$, the distribution over
the possible leaves with label $i$ is
\begin{equation}
\mathbf{w}_i = (w_{i1}, w_{i2}, \dots, w_{ic_i})
= (2^{d_{i c_i}-d_{i1}}, 2^{d_{i c_i}-d_{i2}} \dots, 1)
\propto
(2^{-d_{i1}}, 2^{-d_{i2}} \dots, 2^{-d_{ic_i}})
\end{equation}
Conditioned on returning $\set{X=i}$, DDG tree sampling selects one
of these $c_i$ leaves as an exact nonuniform draw $X' \sim \Discrete(\mathbf{w}_i)$.
For example, in \cref{fig:recycle-ddg-naive}, the (unnormalized)
distribution over depths given $\set{X = \mathrm{A}}$ is given by
$\mathbf{w}_{\mathrm{A}} = (4, 1) \propto (2^{-2}, 2^{-4})$.
The randomness recycling rule is immediate from
\cref{prop:merge-nonuniform}, which shows how to recycle
the nonuniform state $X'$ using a fresh draw $U \sim \Uniform[0,w_{X'})$.
The challenge with this approach is that it is difficult to guarantee
the $W$-bit global uniform state will not overflow,
when using \cref{alg:recycle}
(although \cref{alg:recycle-widening} can provide an alternative
solution to this concern).

\paragraph{Efficient Implementation}
To avoid explicitly merging a nonuniform state,
it is more convenient to
implement randomness recycling on a (hypothetical) DDG tree $G'$
(\cref{fig:recycle-ddg})
whose leaves live at the same level, as follows:
\begin{itemize}[noitemsep]
\item Replace each leaf node in $G$ at depth $d$ with a subtree that
  terminates at the maximum depth of $G$; and set all
  the labels of the new leaf nodes to be the same label as the original leaf.
\item Pack all the non-reject outcomes to the left of the resulting tree.
\item Use $U \sim \Uniform[0,A)$ to select one of the $A$ leaves at the final level.
\end{itemize}
\Cref{alg:ddg} shows the corresponding algorithm for sampling a
``left-packed'' DDG tree of this form whose leaves are all at the same level.
Because the leaves in \cref{fig:recycle-ddg} are all at the same level,
the recycling rule reverts to the typical uniform case as in the inversion,
lookup table, and alias sampler implementations from the previous sections, rather than
the nonuniform case using \cref{prop:merge-nonuniform} if the tree were
of the form in \cref{fig:recycle-ddg-naive}.

\Cref{alg:ddg} uses fast bit operations to identify the sampled leaf $X$
and implement the recycling rule without actually
constructing the hypothetical tree $G'$.
In particular, when a leaf node with label $X$ is encountered (\cref{algline:fldr-hit-leaf}),
a uniform $\randS' \sim \Uniform[0,\randM')$ with $\randM' \defeq a_X$ is extracted.
The value of $\randS'$ is computed by first choosing one of the ``hypothetical''
leaf nodes below the current leaf (dashed edges in \cref{fig:recycle-ddg})
using $U \bmod 2^{k-d}$, and then computing the offset
$a_X - (a_X \bmod 2^{k+1-d})$ which is the sum of all hypothetical
leaves with label $X$ to the left.

\begin{figure}[p]
\centering

\captionsetup[subfigure]{aboveskip=0pt,belowskip=0pt,font=bf}
\begin{subfigure}{\linewidth}
\centering
\caption*{Binary Search (\cref{alg:inversion})}
\includegraphics[width=.275\linewidth]{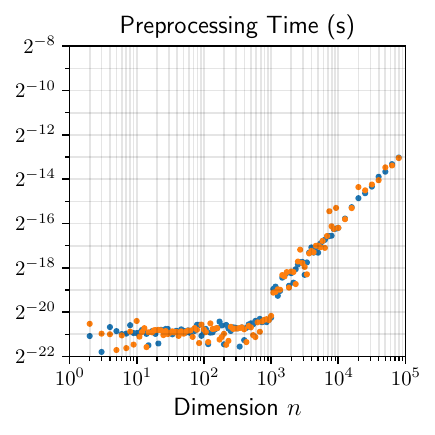}%
\includegraphics[width=.275\linewidth]{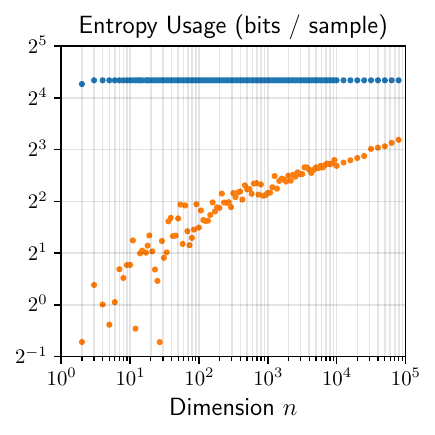}%
\includegraphics[width=.275\linewidth]{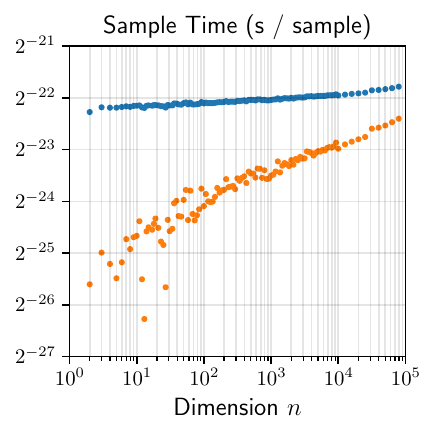}%
\end{subfigure}

\begin{subfigure}{\linewidth}
\centering
\caption*{Lookup Table (\cref{alg:lookup})}
\includegraphics[width=.275\linewidth]{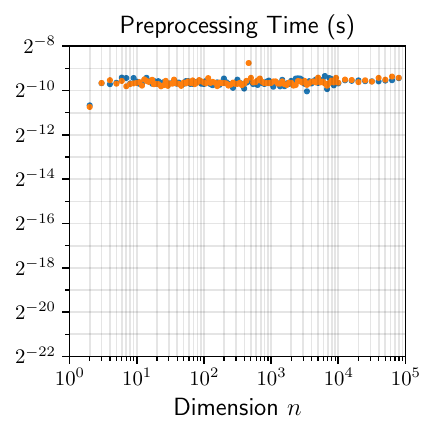}%
\includegraphics[width=.275\linewidth]{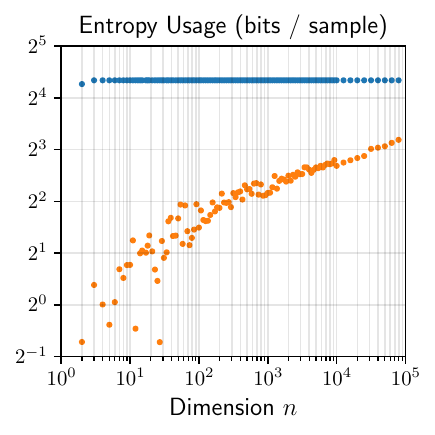}%
\includegraphics[width=.275\linewidth]{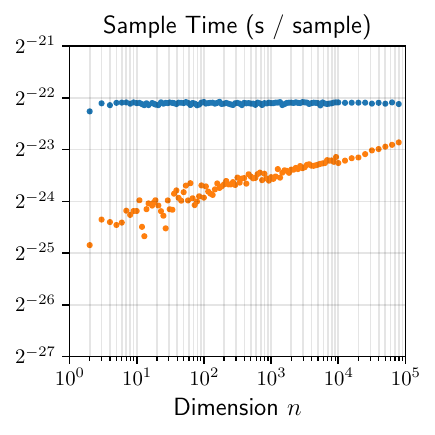}%
\end{subfigure}

\begin{subfigure}{\linewidth}
\centering
\caption*{Alias Method (\cref{alg:alias})}
\includegraphics[width=.275\linewidth]{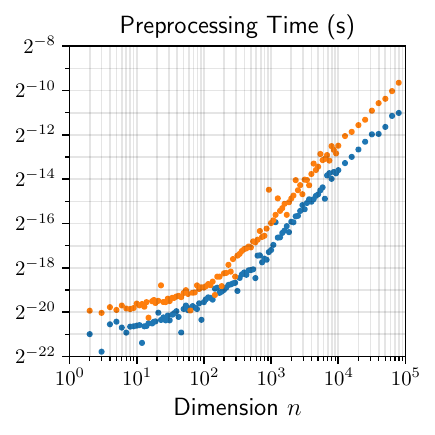}%
\includegraphics[width=.275\linewidth]{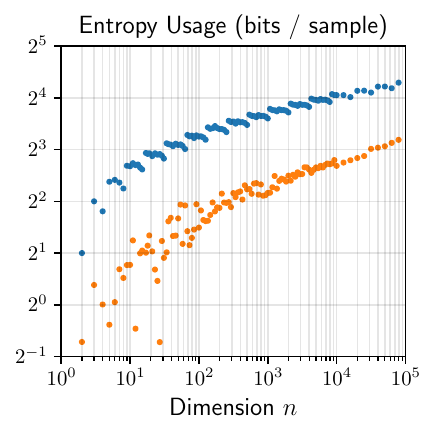}%
\includegraphics[width=.275\linewidth]{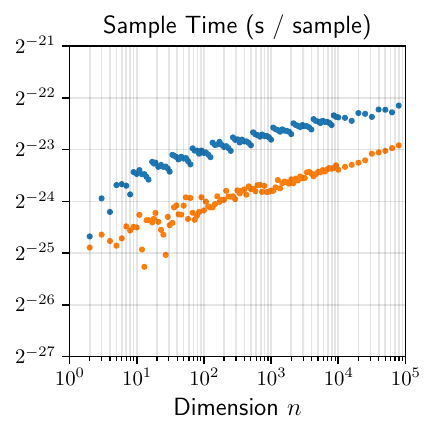}%
\end{subfigure}

\begin{subfigure}{\linewidth}
\centering
\caption*{DDG Sampling (\cref{alg:ddg})}
\includegraphics[width=.275\linewidth]{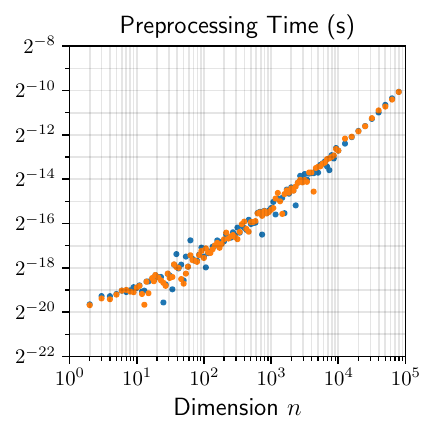}%
\includegraphics[width=.275\linewidth]{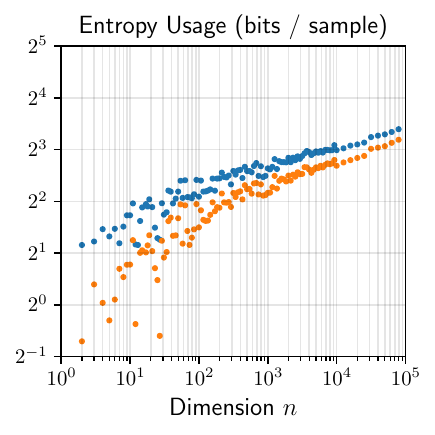}%
\includegraphics[width=.275\linewidth]{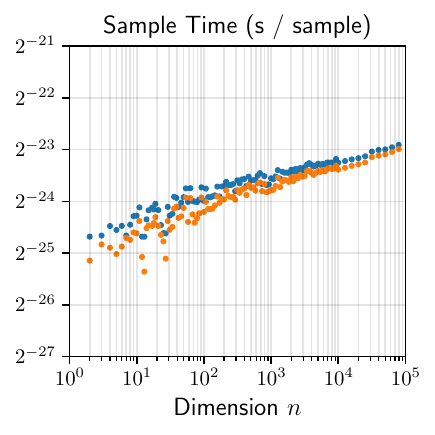}%
\end{subfigure}

\definecolor{tabblue}{HTML}{1f77b4}
\captionsetup{skip=5pt}
\caption{Benchmark comparison of preprocessing time, entropy consumption, and sampling time
for the binary search, lookup table, alias, and DDG tree samplers;
with randomness recycling
(orange {\protect\tikz{\protect\draw[fill=orange] circle[radius=.75ex];}})
and without
(blue {\protect\tikz{\protect\draw[fill=tabblue] circle[radius=.75ex];}}),
on a range of rational discrete distributions whose probabilities
have common denominator equal to $10^6$.}
\label{fig:benchmark}
\end{figure}

\begin{remark}
\label{remark:uniform-power-of-two}
For a distribution whose sum $A$ of weights is very close to a power of two,
as for ALDR \citep{draper2025} or any dyadic distribution,
the method on \cref{algline:fldr-uniform} of \cref{alg:ddg} to
sample a uniform over $A$ can be replaced by a method using
rejection sampling via a uniform over $2^{\ceil{\log(A)}}$.
Although this change slightly increases the entropy cost, it may improve the runtime,
because the division operations in \cref{alg:uniform} can be replaced by
bitwise operations when the target uniform range is a power of two.
\end{remark}

\subsection{Benchmark Evaluations}
\label{sec:rr-general-evaluation}

\Cref{fig:benchmark} shows how the randomness recycling strategies
described in this section yield improvements in entropy cost and wall-clock
runtime as compared to the baseline versions without randomness recycling.
Each row in \cref{fig:benchmark} shows the results for a specific sampling
algorithm.
The first column shows the preprocessing time in seconds, the second
column shows the entropy usage in bits/sample, and the third column shows
the sampling time in seconds/sample.
In each panel, every dot shows the measurements for a given
probability distribution over $n$ outcomes (x-axis) with sum of weights $m=1 000 000$,
amortized over an i.i.d.~sequence of one million samples.
In the majority of cases, randomness recycling introduces minimal runtime
overhead in the preprocessing time, while enabling lower entropy
consumption and sampling time.
A software library in the C programming language containing the algorithms described in this paper is available
at \url{https://github.com/probsys/randomness-recycling}.

\subsection{Discrete Gaussian Sampling}
\label{sec:rr-general-discrete-gaussian}

\begin{figure}[t]
\centering
\includegraphics[width=\linewidth]{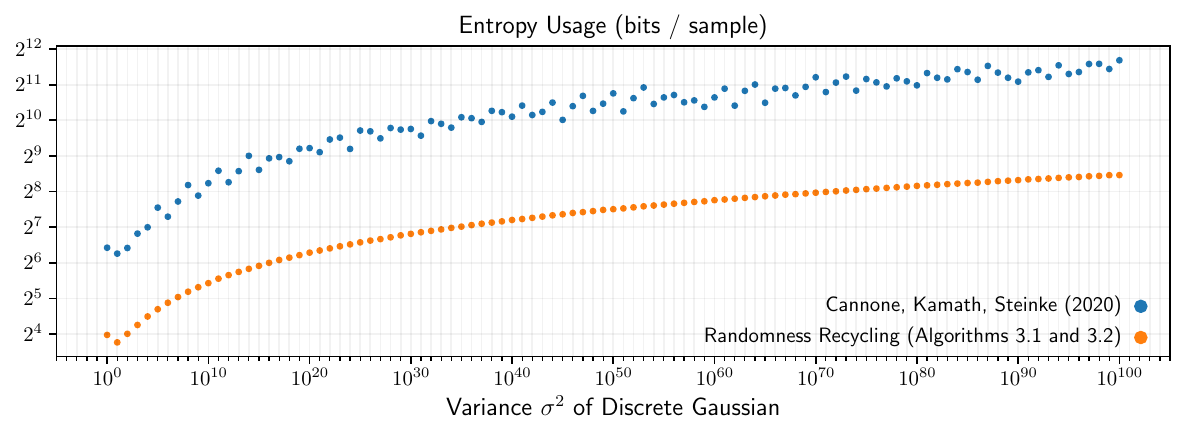}%

\caption{Applying randomness recycling to sample from discrete Gaussian distributions.}
\label{fig:benchmark-gaussian}
\end{figure}

The discrete Gaussian sampler of \citet{canonne2020} is a prominent example
of an exact sampler over the integers.
The algorithm samples a discrete Gaussian by calling two primitive
samplers: discrete uniform and Bernoulli.
We adapted the reference Python implementation by the authors
(available at \url{https://github.com/IBM/discrete-gaussian-differential-privacy})
by replacing all the calls to uniform and Bernoulli with their randomness-recycled
variants from \cref{alg:uniform,alg:inversion}.
\Cref{fig:benchmark-gaussian} shows that randomness recycling reduces the
entropy cost of the algorithm by up to 10x compared to the original
version.


\section{Related Work}
\label{sec:related}

Existing approaches to online random sampling that are most closely related
to the problem setting in this article are surveyed in
\Cref{sec:intro-existing}.
Here we describe additional literature that studies variants of the sampling
problem and literature on randomness recycling in other settings.

\paragraph{Converting Entropy Streams}

Converting entropy streams from a given input distribution to a desired
output distribution is widely studied in computer science.
This article focuses on the case of online random sampling using an
entropy source that provides an i.i.d.~sequence of fair coin tosses,
to generate output variables with any (nonuniform) rational distribution.
\Citet{knuth1976} introduce a complexity theory for studying nonuniform
sampling algorithms given i.i.d.~fair coins.
\Citet{saad2020fldr} and \citet{draper2025} develop algorithms that are close
to the \citeauthor{knuth1976} entropy-optimal rate using linearithmic,
instead of exponential, memory; but only consider the single-sample case.
\Citet{saad2025} give a deterministic implementation of the
nondeterministic online sampling algorithm from \citet{knuth1976} (see also
\cref{alg:ky-dg}), and efficient algorithms that guarantee zero numerical
error when the probabilities are floating-point numbers.
These works do not use randomness recycling and focus on
achieving near-optimal entropy cost for generating a \textit{single} random
sample,
whereas our method achieves a near-optimal \textit{amortized} entropy cost
for sampling an infinite sequence.

Several articles have studied variants of the entropy conversion problem
under different assumptions.
One popular variant is the problem of extracting unbiased coin tosses
from an i.i.d.\ source with an arbitrary but known
distribution~\citep{elias1972,abrahams1996,roche1991,pae2005a,kozen2014,pae2015,pae2020}.
\Citet{han1997} and \citet{Kozen2022} explore very general reductions
for converting $k$-sided dice dice rolls into $n$-sided dice rolls; the
interval method in the former article is a particularly elegant technique
that allows both the input and output sequences that are be i.i.d., Markov,
or arbitrary stochastic processes.
\Citet{neumann1951} describes a simple method that extracts i.i.d.~fair coins using
a source that emits i.i.d.~dice rolls with unknown bias, which was
explored by several additional authors \citep{hoeffding1970,stout1984,cohen1985,peres1992,pae2006}.
\Citet{elias1972} and \citet{blum1986} consider an unknown entropy source whose stream
is subject to a stationary Markov chain.
Some methods~\citep{elias1972,peres1992,cicalese2006} produce a
variable-length output instead of a single (fixed-length) output at each
invocation, depending on the specific bit pattern from the source.
\Citet{han1993} and \cite{vembu1995} allow the sampler to produce approximate
samples from the target distribution up to a given statistical error
tolerance, providing information-theoretic asymptotic rates.
\Citet{saad2020popl} provide an efficient algorithm that finds
an optimal (in terms of minimizing any $f$-divergence) rational
approximation to an arbitrary distribution, generalizing the results of
\citet{Bocherer2016} who considered the total variation and
Kullback-Leibler divergence.

\paragraph{Specialized Online Random Samplers}

\Citet{Kozen2022} study the problem of exactly transforming i.i.d.~random
streams in an entropy-efficient manner while using limited memory, with a
focus on the composition of these transformation protocols.
The input and output streams may consist of discrete uniforms
or arbitrary discrete distributions with finite support.
For the case of transforming uniform inputs into i.i.d.~uniform outputs
or to i.i.d.~samples from a rational discrete distribution,
they give protocols that use $O(1/\varepsilon)$ space to achieve an expected entropy
inefficiency of at most $\varepsilon > 0$ bits per sample.
They also note that no finite-memory algorithm can produce exact samples
from an irrational target distribution, but such irrational distributions
can be sampled via an infinite sequence of rational distributions;
an overall entropy inefficiency of $\varepsilon$ bits per sample can be
achieved using $O(1/\varepsilon)$ space for the state of each of these
rational samplers.
Our results improve on these space bounds exponentially, requiring only
$O(\log(1/\varepsilon))$ space to achieve the same entropy inefficiency.
The difference arises in our use of a uniform random state that is
maintained across rounds, whereas \citeauthor{Kozen2022} exclusively use
``restart protocols'' that fully reset the state after emitting samples.
In an alternative setting where input entropy stream
is allowed to be nonuniform, \citet{Kozen2022} achieve a space bound of $O(\log(1/\varepsilon)/\varepsilon)$
in the fully arbitrary case, and $O(1/\varepsilon)$ using a cleverly
optimized construction when the entropy source emits a biased coin with weight $1/r$
for some integer $r \geq 3$.

In concurrent work, \citet{shao2025} address the more specific problem of
generating exact i.i.d.~samples from a fixed discrete distribution using
bounded memory and with near-optimal entropy.
Whereas they consider only repeated i.i.d.~samples from a prespecified
distribution, we show how to sample from an arbitrary sequence of
distributions specified in an online setting.
They also restrict attention to the inversion method,
whereas we apply randomness recycling to a variety of discrete sampling methods
(\cref{table:algorithms}).
Their method bears some similarity to \cref{alg:inversion} in our work,
using amplified weights in the sense of \citet{draper2025} to avoid the
need for division when the randomness source is a stream of bits.
The recycling method used by \citet{shao2025}
requires $O(1/\varepsilon)$ space and computation overhead to achieve
a desired $\varepsilon$ entropy bound,
compared to our $O(\log(1/\varepsilon))$ from \cref{theorem:entropy-cost}.
Namely, they convert a large uniform state
of size $2\log(d)/\varepsilon$ bits into i.i.d.~bits
(which wastes up to 2 bits) after every $2/\varepsilon$ samples, to achieve
an amortized inefficiency of $\varepsilon$ bits per sample.
Notably, the $2/\varepsilon$ steps in \citet{shao2025}
coincide exactly with the batch size
$n$ in \cref{eq:ky-batched}
needed to achieve a given efficiency $\varepsilon$ when performing batched
sampling without carrying over any state between batches, this time to
amortize the optimal \citeauthor{knuth1976} toll of $2$ bits in converting
from fair coin tosses to a general random state (instead of vice versa).

\paragraph{Uniform Sampling}
The uniform sampler in \cref{sec:analysis-uniform}
is closely related to a method described by \citet{jacques2004} and
rediscovered by \citet{omer2014}.
\Citet{mennucci2010} empirically analyzes this method for scaling uniform random
number generators, and performs extensive numerical tests with specialized
optimization using bitwise operations on different machine architectures.
Our contributions extend these previous works by
\begin{itemize}[wide,leftmargin=*]
\item using randomness recycling to efficiently generate a random
  sequence whose expected amortized entropy cost is arbitrarily close to
  the information-theoretic lower bound (\cref{theorem:entropy-cost});

\item theoretically analyzing the entropy loss of randomness recycling
  samplers (\cref{sec:analysis});

\item developing randomness recycling for more specialized
  uniform samplers that incorporate widening and batching optimizations
  (\cref{sec:rr-uniform}), which improve performance
  (\cref{fig:benchmark-uniform-full,fig:shuffle}); and

\item leveraging randomness recycling for sampling general nonuniform
  distributions (\cref{sec:rr-general}), which improves the
  runtime and entropy characteristics of diverse algorithms (\cref{fig:benchmark}).
\end{itemize}

\paragraph{Randomness Recycling}

The term ``randomness recycler'' appears to have originated in
\citet{fill2000}, who introduce the concept as an exact (perfect) sampling
technique that stands in contrast to approximate sampling using Markov
chains.
\Citeauthor{fill2000} show how to apply randomness recycling in challenging
combinatorial settings such as generating random independent sets, random
graph colorings, Ising models, random cluster models, and self-organizing
lists in expected linear time.
These randomness recycling algorithms have better time complexity
than exact rejection sampling and are more accurate than approximate
sampling using Markov chains.
The authors note that randomness recycling is not universally applicable
in all scenarios where Markov chains are used, but it can efficiently generate
perfect samples in linear time for challenging combinatorial problems.
Our work develops randomness recycling techniques for sampling
an online random sequence, to obtain improvements in space, time, and
entropy compared to the best-known existing approaches described in
\cref{sec:intro-existing}.

\Citet{impagliazzo1989} describe randomness recycling techniques
in the context of bounded-error probabilistic polynomial time (BPP)
algorithms.
The recycled random state does not satisfy the independence invariant
\labelcref{invariant} and the sampler is allowed to be approximate.
The authors show how to run a BPP algorithm multiple times while using
(approximately independent) recycled coin tosses from applying hash
functions to more efficiently amplify the correctness probability as
compared to using fresh coin tosses on each trial.


\section{Remarks}
\label{sec:remarks}

\paragraph*{Tightness of Space-Entropy Tradeoff Bound}

\Cref{alg:uniform} shows that it is possible to exactly sample
from discrete distributions with weight sums bounded by $d$,
in an online manner,
within $\varepsilon > 0$ of the optimal entropy rate,
using only $O(\log(d/\varepsilon))$ bits of space.
We conjectured that this bound is tight.

\EntropyCostTight*

It is difficult to reason about the space usage of all possible algorithms,
but some examples can demonstrate the plausibility of this conjecture.
The interval method \citep{han1997} achieves the entropy rate exactly,
but it does not have bounded space, even after generating just a single sample
(although the expected space usage can be bounded after one sample).
For a DDG tree method using rejection back edges to represent the tree in finite
space \citep{saad2020fldr}, the recyclable state comes from the distribution of
leaf given label, and if the tree has depth $K$, then there are at least $2^K$
possible denominators.
Representing the recyclable state then requires at least $K$ bits, even if it
is, for example, converted to a uniform using \cref{prop:merge-nonuniform}.
Further, the relationship between the DDG tree's rejection probability and its
entropy inefficiency is analogous to the analysis in \cref{sec:analysis},
so to achieve a rate of $\varepsilon$, the depth $K$ must be roughly
$\log(d/\varepsilon)$ (cf.~\citet[Proposition~3]{draper2025}).

In general, data related to the distribution, such as the weight sum, or
the weight of the sampled outcome, requires roughly $\log(d)$ bits of space.
However, we need not store exact data about the distribution;
$\varepsilon$-approximate data can suffice to achieve an entropy
rate within $\varepsilon$ of optimal---an observation due to David G.~Harris
(pers.~comm.).
Generally, rejection-based methods for exact sampling with finite space seem
to require $\log(1/\varepsilon)$ space to achieve a rate of $\varepsilon$.
Our method's space usage is asymptotic to $2 \log(d/\varepsilon)$ bits,
and space as small as $\log(1/\varepsilon)$ is conceivable,
but anything smaller seemingly could not fit the relevant recycled information.

\section*{Acknowledgements}
The authors thank the anonymous referees for helpful suggestions in improving
the manuscript.
The authors also thank David G.~Harris for observing that
the space lower bound in \cref{conj:entropy-cost-tight}
need not depend on $d$.
This material is based upon work supported by the National Science
Foundation under Grant No.~2311983. Any opinions, findings, and conclusions
or recommendations in this material are those of the authors and do not
necessarily reflect the views of the funding agencies.

\renewcommand{\bibsection}{\origsection*{References}}
\bibliography{main}

\appendix

\renewcommand{\thealgorithm}{\thesection\arabic{algorithm}}
\setcounter{algorithm}{0}

\section{Baseline Online Random Sampling Algorithms}
\label{appx:baselines}

This appendix provides concrete implementations of several baseline
online samplers discussed in \cref{sec:intro-existing},
when the target distributions $(\dist_i)_{i \ge 1}$ are given
as arrays of integer weights at each round.
The common denominator of the probabilities in $\dist_i$ are most $d \ge 1$.
These implementations admit a precise analysis of their space and time
complexities.

\paragraph{Entropy-Optimal Sampling for One Distribution}
\label{appx:baselines-ky-ddg}

\Cref{alg:ky-ddg} shows an optimized implementation of the
algorithm described in \cref{sec:intro-existing-single}.
For a single sample from a particular target distribution,
explicitly building a DDG tree as in \citet{knuth1976} can require exponential
space in the worst case, so instead we implicitly traverse the tree level-by-level
as proposed in \citet{roy2013}.
Additionally, we compute the binary expansions of the weights during traversal,
instead of precomputing them, to avoid preprocessing time.

\begin{algorithm}
\caption{Single-sample entropy-optimal sampling}
\label{alg:ky-ddg}
\begin{algorithmic}[1]
\Require{Positive integers $a_0, \dots, a_{k-1}$ with sum $A = a_0 + \dots + a_{k-1}$}
\Ensure{Random sample $X \sim \Discrete(a_0,\dots,a_{k-1})$}
\Procedure{KY-DDG}{$a_0$,$\dots$,$a_{k-1}$}
\State $z \gets 0$   \Comment{initialize uniform state (upper bound is implicit)}
\State $b_0 \gets a_0; \dots; b_{k-1} \gets a_{k-1}$ \Comment{copy weights for computing binary expansions}
\While{\textbf{true}} \Comment{iterate over levels of tree}
  \For{$i \gets 0 ~\mathbf{to}~ k-1$} \Comment{iterate over leaves at current level}
    \If{$b_i \geq A$} \Comment{leaf $i$ exists at this level}
      \If{$z = 0$} \Comment{hit leaf node}
        \State \Return $i$ \Comment{return leaf label}
      \Else \Comment{pass leaf node}
        \State $z \gets z - 1$ \Comment{reduce space of live nodes}
      \EndIf
      \State $b_i \gets b_i - A$ \Comment{remove leaf weight}
    \EndIf
    \State $b_i \gets 2 b_i$ \Comment{double weight for next level}
  \EndFor
  \State $z \gets 2 z + \Call{Flip}{ }$ \Comment{refine uniform state}
\EndWhile
\EndProcedure
\end{algorithmic}
\end{algorithm}

The loop over levels in \cref{alg:ky-ddg} occurs less than
$H(\dist)+2 = O(\log(k))$ times in expectation.
The \textbf{for} loop over $k$ leaves performs $O(\log(A))$ work per leaf, for a total of
$O(k \log(A))$.
Thus, the overall expected time complexity is bounded as $O(k \log(k) \log(A))$.
The space used is $O(k \log(A))$ bits to store the copied weights $b_i$ used to
compute the binary expansions.
(Alternatively, one could recompute $b_i$ from $a_i$ at each level to save space
at the expense of runtime, but we omit this analysis.)
Over $n$ samples with each $k \leq K$ and $A \leq d$, the expected time complexity
is $O(n K \log(K) \log(d))$ and the space complexity is $O(K \log(d))$ bits.
This algorithm does not maintain any auxiliary state between samples.

\paragraph{Online Entropy-Optimal Sampling}
\label{appx:baselines-ky-dg}

\Cref{alg:ky-dg} shows an optimized implementation of the
algorithm described in \cref{sec:intro-existing-optimal}.
The online entropy-optimal algorithm of \citet{knuth1976} can also be
implemented efficiently using a level-by-level approach as in
\cref{alg:ky-ddg},
but several details require extra care.
The optimum refinement algorithms given in \citet[p.~384]{knuth1976}
and \citet[Algorithm 1]{saad2025}
are specialized to the case of binary-coded probability distributions,
so we replace the two explicit leaf creation steps (for their $p'$ and $p''$)
by an iteration over all new refined labels.
Further, the new leaf order can be fixed to match the input order
for compatibility with the method of \cref{alg:ky-ddg},
which is equivalent to replacing the set $S$ with a queue
in the algorithm of \citet{knuth1976}.
Lastly, the uniform state over the level nodes in the refined subtrees
requires a more complex initialization, based on the number of live nodes
at the current level as computed in \cref{algline:ky-dg-uniform-initialization}.

\begin{algorithm}
\caption{Online entropy-optimal sampling}
\label{alg:ky-dg}
\begin{algorithmic}[1]
\Require{Positive integers $a_0, \dots, a_{k-1}$ with sum $A = a_0 + \dots + a_{k-1}$}
\Ensure{Random sample $X \sim \Discrete(a_0,\dots,a_{k-1})$}
\LComment {Auxiliary State Variables $D, B, C$}
\State \textbf{mutable int} $D \gets 0$ \Comment{depth in tree}
\State \textbf{mutable int} $C \gets 1$ \Comment{numerator weight so far}
\State \textbf{mutable int} $B \gets 1$ \Comment{denominator weight so far}
\Procedure{KY-DG}{$a_0$,$\dots$,$a_{k-1}$}
\State $\textbf{update}~B \setp A B$ \Comment{current total weight}
\State $b_0 \gets 2^D C a_0 \bmod 2 B; \dots; b_{k-1} \gets 2^D C a_{k-1} \bmod 2 B$ \Comment{weights for computing binary expansions} \label{algline:ky-dg-binary-initialization}
\State $z \gets \floor{(b_0+\dots+b_{k-1})/B} - 1$ \Comment{reinitialize uniform state (upper bound is implicit)} \label{algline:ky-dg-uniform-initialization}
\While{\textbf{true}} \Comment{iterate over levels of tree}
  \For{$i \gets 0 ~\mathbf{to}~ k-1$} \Comment{iterate over leaves at current level}
    \If{$b_i \geq B$} \Comment{leaf $i$ exists at this level}
      \If{$z = 0$} \Comment{hit leaf node}
        \State $\textbf{update}~C \setp a_i C$ \Comment{update numerator weight}
        \State \Return $i$ \Comment{return leaf label}
      \Else \Comment{pass leaf node}
        \State $z \gets z - 1$ \Comment{reduce space of live nodes}
      \EndIf
      \State $b_i \gets b_i - B$ \Comment{remove leaf weight}
    \EndIf
    \State $b_i \gets 2 b_i$ \Comment{double weight for next level}
  \EndFor
  \State $z \gets 2 z + \Call{Flip}{ }$ \Comment{refine uniform state}
  \State $\textbf{update}~D \setp D + 1$ \Comment{increment depth}
\EndWhile
\EndProcedure
\end{algorithmic}
\end{algorithm}

In \cref{alg:ky-dg}, after $n$ samples with each $k \leq K$ and $A \leq d$,
the parameters are bounded as $C \leq B \leq d^n$ and $\expect{D} < n\log(K)+2$,
so the (expected) space complexity of the auxiliary state is $O(n \log(d))$,
although it is unbounded in the worst case, even after just the first sample,
because the depth $D$ can be arbitrarily large.
The computation of $2^D \bmod 2B$ in \cref{algline:ky-dg-binary-initialization}
requires time $O(\log(D) \log^2(B))$ when using long multiplication,
and the computation of each $b_i$ requires time
$O(\log^2(B))$ given the value $2^D \bmod 2B$.
Additionally, the computation of $z$ requires time $O(k\log(B) + \log^2(B))$,
so the expected combined
time complexity before the loop is $O((k + \log(n \log(K))) n^2 \log^2(d))$.
The space complexity of the $b_i$ is $O(k n \log(d))$ bits, which is the
same as the overall expected space complexity.
As for \cref{alg:ky-ddg}, the expected time complexity of the loop is
$O(k n \log(k) \log(d))$.
Thus, the overall expected time complexity is
$O(n \log(d) (k \log(k) + (k + \log(n \log(K))) n\log(d)))$.
Over $n$ samples with each $k \leq K$ and $A \leq d$, the expected time complexity
is $O(n^2 \log(d) (K \log(K) + K n \log(d) + n \log(n) \log(d)))$
and the expected space complexity is $O(K n \log(d))$ bits.
The expected size of the auxiliary state after $n$ samples is $O(n \log(d))$ bits.

\paragraph{Interval Method (Arithmetic Coding)}
\label{appx:baselines-hh-interval}

\Cref{alg:hh-interval} shows an optimized implementation of the
algorithm described in \cref{sec:intro-existing-interval}.
Rather than explicitly maintaining the interval state
$[\alpha,\beta) \subseteq [\gamma,\delta) \subseteq [0,1)$
described in \citet{han1997}, we normalize the state to
$[(\alpha-\gamma)/(\delta-\gamma), (\beta-\gamma)/(\delta-\gamma)) \subseteq [0,1)$.
This representation requires only three integers $L, R, B$ to represent
this interval $[L/B, R/B)$ by the endpoint numerators and their common denominator.
Further, we compute the prefix sums using linear preprocessing,
in order to allow binary search for the interval refinement step,
which avoids a linear number of big integer operations per sample.

\begin{algorithm}
\caption{Online interval method sampling}
\label{alg:hh-interval}
\begin{algorithmic}[1]
\Require{Positive integers $a_0, \dots, a_{k-1}$ with sum $A = a_0 + \dots + a_{k-1}$}
\Ensure{Random sample $X \sim \Discrete(a_0,\dots,a_{k-1})$}
\LComment {Auxiliary State Variables $L, R, B$}
\State \textbf{mutable int} $L \gets 0$ \Comment{left numerator so far}
\State \textbf{mutable int} $R \gets 1$ \Comment{right numerator so far}
\State \textbf{mutable int} $B \gets 1$ \Comment{denominator so far}
\Procedure{HH-Interval}{$a_0$,$\dots$,$a_{k-1}$}
\State $A_0 \gets 0; A_1 \gets a_0; \dots; A_n \gets a_0 + \dots + a_{k-1}$ \Comment{prefix sums of weights}
\State $i \gets 0; j \gets k$ \Comment{initialize comparison interval $[A_i/A,A_j/A)$}
\While{$j-i>1$} \Comment{iteratively refine $[L/B,R/B) \subseteq [A_i/A,A_j/A)$}
  \State $m \gets \floor{(i+j)/2}$ \Comment{midpoint of comparison interval}
  \If{$A_m B \leq LA$}\Comment{$A_m/A \leq L/B$}
    \State $i \gets m$ \Comment{narrow comparison interval from left}
  \ElsIf{$A_m B \geq RA$} \Comment{$A_m/A \geq R/B$}
    \State $j \gets m$ \Comment{narrow comparison interval from right}
  \Else \Comment{$[L/B,R/B)$ must be refined}
    \State $D \gets R-L$ \Comment{compute interval width numerator}
    \State $\textbf{update}~(L, R, B) \setp (2L, 2R, 2B)$ \Comment{increase precision}
    \State \IfThenElse{$\Call{Flip}{ } = 1$}{$L \setp L + D$}{$R \setp R-D$} \Comment{refine interval state}
  \EndIf
\EndWhile
\State $\textbf{update}~(L, R, B) \setp (L A - A_i B, R A - A_i B, a_i B)$ \Comment{update interval state}
\State \Return $i$ \Comment{return leaf label}
\EndProcedure
\end{algorithmic}
\end{algorithm}

\Cref{alg:hh-interval} requires $O(\log(k))$ interval comparisons in expectation,
so the expected time complexity of the loop is $O(\log(k) \log(A)\log(B))$ when using
long multiplication.
The prefix sums require $O(k \log(A))$ time to compute
and $O(k \log(A))$ space to store.
Over $n$ samples with each $k \leq K$ and $A \leq d$,
the interval state grows as $\expect{\log(B)} = O(n \log(d))$,
so the expected total space complexity is $O((n+K) \log(d))$,
and the expected total time complexity is $O(n \log(d) (n \log(K) \log(d) + K))$.
The expected size of the auxiliary state after $n$ samples is $O(n \log(d))$ bits.

\end{document}